\NewDocumentCommand{\oldnorm}{sO{}m}{%
  {\IfBooleanTF{#1}
    {\oldnormaux{\left|}{\right|}{#3}}
    {\oldnormaux{#2|}{#2|}{#3}}}
}
\newcommand{\oldnormaux}[3]{\mathpalette\oldnormaux@i{{#1}{#2}{#3}}}
\newcommand{\oldnormaux@i}[2]{\oldnormaux@ii#1#2}
\newcommand{\oldnormaux@ii}[4]{%
  \sbox\z@{$\m@th#1#2#4#3$}%
  \sbox\tw@{$\m@th\|$}%
  \mathopen{\hbox to\wd\tw@{\hss\vrule height \ht\z@ depth \dp\z@ width .3\wd\tw@\hss}}%
  #4
  \mathclose{\hbox to\wd\tw@{\hss\vrule height \ht\z@ depth \dp\z@ width .3\wd\tw@\hss}}%
}
\newcommand\blfootnote[1]{%
  \begingroup
  \renewcommand\thefootnote{}\footnote{#1}%
  \addtocounter{footnote}{-1}%
  \endgroup
}
\newcommand{\R}{\mathbb{R}}
\newcommand{\Op}{\operatorname{Op}}
\newcommand{\Ad}{\operatorname{Ad}}
\newcommand{\dist}{\operatorname{dist}}
\newcommand{\vertiii}[1]{{\left\vert\kern-0.25ex\left\vert\kern-0.25ex\left\vert #1 
    \right\vert\kern-0.25ex\right\vert\kern-0.25ex\right\vert}}
\newtheorem{prop}{Proposition}
\newtheorem{lemma}{Lemma}
\newtheorem{definition}{Definition}
\newtheorem{corol}{Corollary}
\newtheorem{teor}{Theorem}
\theoremstyle{remark}
\theoremstyle{remark}
\newtheorem{remark}{Remark}
\title[Spectral stability  for renormalized KAM systems]{Spectral stability and semiclassical measures for renormalized KAM systems}
\author{Víctor Arnaiz}
\begin{document}

\begin{abstract}
An exact semiclassical version of the classical KAM theorem about small perturbations of vector fields on the torus is given. Moreover, a \textit{renormalization} theorem based on \textit{counterterms} for some semiclassical systems that are close to  being completely integrable is obtained. We apply these results to characterize the sets of semiclassical measures and quantum limits for sequences of $L^2$-eigenfunctions of these systems.
\end{abstract}

\blfootnote{The author has been supported by La Caixa, Severo Ochoa ICMAT, International Phd. Programme, 2014, and MTM2017-85934-C3-3-P (MINECO, Spain). 
}

\maketitle

\section{Introduction}

\subsection{Motivation} 

Let $(M,g)$ be a compact and boundaryless Riemannian manifold, we consider the semiclassical Schrödinger equation
\begin{equation}
\label{e:eigenvalue_problem}
\widehat{H}_\hbar \, \Psi_{\hbar} = \lambda_{\hbar} \, \Psi_{\hbar}, \quad \Vert \Psi_{\hbar} \Vert_{L^2(M)}=1,
\end{equation}
where $\hbar \in (0,1]$ is a small parameter, given by a self\-adjoint opera\-tor $\widehat{H}_\hbar = \Op_\hbar(H)$ on $L^2(M)$ obtained as the \textit{semiclassical Weyl quantization} (see for instance \cite{Dim_Sjo99}, \cite{Zw12}  among many references) of a Hamiltonian function $H \in \mathcal{C}^\infty(T^*M;\R)$ defined on the cotangent bundle $T^*M$. A fundamental example to keep in mind is the semiclassical Schrödinger operator\footnote{We will not deal with this particular operator, but it is important to motivate our problem and connect it with several related works.}
\begin{equation}
\label{e:quantum_hamiltonian}
\widehat{H}_\hbar = -\hbar^2 \Delta_g + V(x),
\end{equation}
where $\Delta_g$ denotes the Laplace-Beltrami operator and we assume that the potential $V $ is smooth and real valued. This operator is the Weyl quantization of the classical Hamiltonian
\begin{equation}
\label{e:classic_hamiltonian}
H(x,\xi) := \Vert \xi \Vert_g^2 + V(x), \quad (x,\xi) \in T^*M,
\end{equation}
obtained as the sum of the kinetic and the potential energies.

We assume that the spectrum of $\widehat{H}_\hbar$ is pure-point and unbounded, meaning that there exist an orthonormal basis of $L^2(M)$ consisting of eigenfunctions for $\widehat{H}_\hbar$, and a sequence of eigenvalues $(\lambda_{\hbar,n})$ satisfying 
\begin{equation}
\label{e:property_of_spectrum}
\displaystyle \lim_{n \to + \infty} \lambda_{\hbar,n} = + \infty, \quad  \text{for all } \hbar \in (0,1].
\end{equation}
For example, if $\widehat{H}_\hbar$ is given by \eqref{e:quantum_hamiltonian} then its spectrum is indeed discrete, and given by a unique sequence of eigenvalues $(\lambda_{\hbar,n}) \subset \R$ satisfying \eqref{e:property_of_spectrum}. 

Therefore, for a given sequence $(n_\hbar) \subset \mathbb{N}$, and a given $E \in H(T^*M) \subset \R$, we can choose a decreasing-to-zero sequence of parameters $(\hbar) \subset (0,1]$ so that 
$$
\lambda_\hbar := \lambda_{n_\hbar,\hbar} \to E, \quad \text{as } \hbar \to 0^+.
$$
Modulo adding a constant $E_0$ to $H$, we can assume that $E = 1$.

We aim at understanding the accumulation points (in the weak-$\star$ topology for Radon measures) of those sequences of densities $\vert \Psi_{\hbar}(x) \vert^2dx$ as $\lambda_\hbar \to 1$. These limits are pro\-bability measures on $M$ and are usually referred to as \textit{quantum limits}. We will denote by $\mathcal{N}(\widehat{H}_\hbar)$ the set of quantum limits of $\widehat{H}_\hbar$.

The problem of characterizing the set $\mathcal{N}(\widehat{H}_\hbar)$ is in general widely open, but it is well known that the elements of  $\mathcal{N}(\widehat{H}_\hbar)$ depend strongly on the classical dynamics generated by the Hamiltonian $H$. Recall that $H$ generates a dynamical system on $T^*M$ via the Hamilton equations 
$$
\dot{x}(t) = \partial_\xi H, \quad \dot{\xi}(t) = - \partial_x H, \quad ( x(0),\xi(0)) = (x_0, \xi_0) \in T^*M.
$$
We will denote by $\phi_t^H$ the Hamiltonian flow generated by $H$, that is,
$$
\phi_t^H(x_0,\xi_0) = (x(t),\xi(t)), \quad t \in \R.
$$
Note that, for the free Schrödinger operator $\widehat{H}_\hbar = -\hbar^2 \Delta_g$, the associated classical Hamiltonian flow $\phi_t^H$ is nothing but the geodesic flow on $T^*M$. 

Mostly three cases  have been studied so far: the case when $\phi_t^H$ is ergodic with respect to the Liouville measure, the case when $H$ generates a  completely integrable system, and the case when $\phi_t^H$ lies in some mixed or KAM (Kolmogorov-Arnold-Moser) regime.

In this work we focus on systems that are close to completely integrable ones, for which KAM techniques apply. For these kind of systems, the persistence of invariant tori by the dynamics of the classic flow implies a weak dispersive behavior of the Schrödinger flow around these tori. As a consequence, one expects the existence of sequences of eigenfunctions for $\widehat{H}_\hbar$ concentrating on these tori. This concentration takes place on the phase space $T^*M$, meaning that the \textit{semiclassical measure} of the sequence (see Section \ref{s:results} below, \cite{Ger90} among many references)  has positive mass on these tori.

In the completely integrable setting, if $\widehat{H}_\hbar = - \hbar^2 \Delta_g$ and  $ M = \mathbb{S}^d$, the sphere endowed with its canonical metric, Jakobson and Zelditch \cite{Jak_Zel99} proved that:
\begin{equation}\label{e:maximal}
\mathcal{N}(\widehat{H}_\hbar)=\overline{ \text{Conv}\{\delta_\gamma\,:\,\gamma\text{ is a closed  geodesic orbit in }\mathbb{S}^d\}}.
\end{equation} 
Above, $\delta_\gamma$ stands for the uniform probability measure on the closed curve $\gamma$\footnote{$\gamma = \pi (\sigma)$ is the projection onto $\mathbb{S}^d$ of a periodic geodesic $\sigma \subset T^*\mathbb{S}^d$, i.e. $\sigma$ is a minimal invariant torus by $\phi_t^H$ of dimension one.}. Property \eqref{e:maximal} also holds in manifolds of constant positive curvature \cite{AM:10} or compact-rank-one symmetric spaces \cite{MaciaZoll}.  A natural question in this setting is that of understanding whether or not the same holds on a Zoll manifold (that is, a manifold all whose geodesics are closed \cite{BesseZoll},  which is still a completely integrable system). Macià and Rivière \cite{MaciaRiviere16}, \cite{MaciaRiviere17} have shown the existence of Zoll surfaces such that \eqref{e:maximal} fails. Precisely, an open set of geodesics is excluded to be the support of any quantum limit; that is, the delta measure $\delta_\gamma$ can not be a quantum limit for any geodesic $\gamma$ in this open set. Similar techniques as those of \cite{MaciaRiviere16}, \cite{MaciaRiviere17} have been used in the study of spectral asymptotics for small perturbations of harmonics oscillators, both in the selfadjoint case \cite{Ar_Mac18} and the non-selfadjoint case \cite{Ar_Riv18}.

On the flat torus $\mathbb{T}^d:=\mathbb{R}^d/2\pi\mathbb{Z}^d$, the behavior of quantum limits is very different. Bourgain proved that $\mathcal{N}(\widehat{H}_\hbar)\subset L^1(\mathbb{T}^d)$; and in particular that quantum limits cannot concentrate on closed curves, as was the case on the sphere (this result was reported in \cite{Jak97}). In that same reference, Jakobson proved that for $d=2$ the density of any quantum limit is a trigonometric polynomial, whose frequencies satisfy a certain Pell equation. In higher dimensions, one can only prove certain regularity properties of the densities, involving decay of its Fourier coefficients.  These and related results were proven using only  the dynamical properties of the geodesic flow by Macià and Anantharaman \cite{An_Mac14}, \cite{MaciaTorus}, \cite{MaciaDispersion}; it is also possible to obtain more precise results on the regularity of the densities \cite{AJM}. This strategy of proof  can be extended to more general completely integrable Hamiltonian flows \cite{An_Fer_Mac15}, and also allows to deal with domains in the Euclidean space as disks \cite{ALM:16}, \cite{ALMCras}. 

The KAM regime has turned out to be more elusive so far.  Most of the works dealing with this case are based on the construction of \textit{quasimodes}, or approximate eigenfunctions, studying the asymptotic properties of oscillation and concentration of these quasimodes around the classical invariant tori,  but do not conclude complete results for the quantum limits associated with the true eigenfunctions of the system. The foundations of this study of quasimodes for KAM systems can be found in Lazutkin \cite{Laz93}. Construction of quasimodes with exponentially small error terms is given by Popov \cite{Pop00I}, \cite{Pop00II}. In a very recent work, Gomes \cite{Gomes18} applies this result to discard \textit{quantum ergodicity} for  semiclassical KAM systems on compact Riemannian manifolds. Moreover, in two dimensions, Gomes and Hassell \cite{Gom_Hass18} improve the previous result to show that there exist sequences of eigenfunctions with semiclassical measure having positive mass on the classical invariant tori. 

The present work addresses the problem of characterizing the sets of quantum limits and semiclassical measures for some perturbed integrable systems, coming from KAM theory, that can be \textit{renormalized}; that is, they can be conjugated  to the unperturbed system after  adding integrable counterterms to the principal part of the perturbed Hamiltonian. Our techniques differ from those of \cite{Gomes18}, \cite{Gom_Hass18}. Precisely, we do not work from any quasimode construction. Alternatively, we show convergence of the quantum normal form via an iterative algorithm which is shown to converge in the presence of counterterms.

\subsection{Quantum limits and semiclassical measures for KAM families of vector fields on the torus}\label{s:results} From now on we fix $M = \mathbb{T}^d$, the flat torus endowed with the flat metric. Our first and particularly simple example of KAM system will be the one generated by the Schrödinger operator 
$$
\widehat{H}_\hbar = \widehat{P}_{\omega,\hbar} := \omega \cdot \hbar D_x + v(x;\omega) \cdot \hbar D_x - \frac{i\hbar}{2} \operatorname{Div} v(x;\omega),
$$
where $\omega \in \R^d$, $v \in \mathcal{C}^\infty(\mathbb{T}^d \times \R^d;\R^d)$ is a  vector field depending on the parameter $\omega$, and we use the notation
$$
D_x = (D_{x_1}, \ldots , D_{x_d}), \quad D_{x_j} := -i\partial_{x_j}.
$$
This operator generates the transport along the vector field $X_v(\omega) := \omega + v(\cdot;\omega)$, meaning that the solution to the Schrödinger equation
$$
\big( i\hbar \hspace*{0.05cm} \partial_t + \widehat{P}_{\omega,\hbar} \big) u_\hbar(t,x) = 0; \quad u_\hbar(0,x) = u^0_\hbar(x) \in L^2(\mathbb{T}^d)
$$
is given by
$$
u_\hbar(t,x) = u^0_\hbar\big( \phi_t^{X_v(\omega)}(x) \big) \sqrt{ \vert \det d\phi_t^{X_v(\omega)}(x) \vert},
$$
where $\phi_t^{X_v(\omega)}$ is the flow  on $\mathbb{T}^d$ generated by the vector field $X_v(\omega)$, and the operator $\widehat{P}_{\omega,\hbar}$ is selfadjoint thanks to the component $-i\hbar \operatorname{Div} v/2$. Note that the unperturbed operator
\begin{equation}
\label{e:non_perturbed_oprator}
\widehat{L}_{\omega,\hbar} := \omega \cdot \hbar D_x
\end{equation} on $L^2(\mathbb{T}^d)$ is not elliptic and hence its point-spectrum, given by 
$$
\operatorname{Sp}^p_{L^2(\mathbb{T}^d)} \big( \widehat{L}_{\omega, \hbar} \big) = \{ \hbar \, \omega \cdot k \, : \, k \in \mathbb{Z}^d \},
$$
is highly unstable under perturbations, in the sense that it could be transformed into continuous spectrum by the perturbation. However, we will use classical KAM theory to show that under certain conditions on the perturbation $v$, the spectrum of $\widehat{P}_{\omega,\hbar}$ is stable for a Cantor set of frequencies $\omega$, modulo renormalization of the vector $\omega$.  As was shown by Wenyi and Chi in \cite{Wen00}, this KAM stability is equivalent to the hypoellipticity of the operator $\widehat{P}_{\omega,\hbar}$.


On the other hand, the operator $\widehat{P}_{\omega,\hbar} = \Op_\hbar(P_\omega)$ is the semiclassical Weyl quantization of the linear Hamiltonian
$$
P_\omega(x,\xi) = \mathcal{L}_\omega(\xi) + v(x;\omega) \cdot \xi,
$$
where
$$
\mathcal{L}_\omega(\xi) := \omega \cdot \xi.
$$
In \cite{Mos67}, Moser introduced a new approach to the study of quasiperiodic motions by considering the frequencies of the Kronecker tori as independent parameters. We refer to the work of Pöschel \cite{Pos09} for a brief introduction to the subject. If $\Omega \subset \R^d$ is a compact Cantor set of frequencies satisfying some Diophantine condition (see condition \eqref{e:diophantine_condition2} below) and the perturbation $v$ is suffi\-ciently small in some suitable norm, then there exists a close-to-the-identity change of frequencies 
\begin{align*}
\varphi : \Omega \to \R^d
\end{align*}
so that the related set of Hamiltonians $P_{\varphi(\omega)}$ can be canonically conjugated (frequency by frequency) into the constant linear Hamiltonian on $T^*\mathbb{T}^d$ with frequency $\omega$. More precisely, for every $\omega \in \Omega$ there exists a canonical transformation $\Theta_\omega : T^*\mathbb{T}^d \to T^*\mathbb{T}^d$ so that
$$
\Theta_\omega^* P_{\varphi(\omega)}(x,\xi) = \mathcal{L}_\omega(\xi).
$$
In particular, the Hamiltonian $P_{\varphi(\omega)}$ is completely integrable for every $\omega \in \Omega$.

We focus on the study of the high-energy structure of the eigenfunctions of $\widehat{P}_{\omega,\hbar}$. Precisely, we will study the set of \textit{quantum limits} of the system, that is, the weak-$\star$ accumulation points of sequences of $L^2$-densities of eigenfunctions. 

Furthermore, it is very convenient to extend our analysis to the phase-space, studying not only the asymptotic distribution of  $L^2$-densities of the sequence $(\Psi_\hbar)$ on $\mathbb{T}^d$, but the related sequence of Wigner distributions $(W_{\Psi_\hbar}^\hbar)$ on  $T^*\mathbb{T}^d$.  

We recall that the Wigner distribution $W^\hbar_{\psi}$ of a function $\psi \in L^2(\mathbb{T}^d)$ is defined by the map
\begin{equation}
\label{e:wigner_distribution}
W^\hbar_{\psi} \, : \, \mathcal{C}_c^\infty(T^*\mathbb{T}^d) \ni a \longmapsto \big \langle \psi, \Op_\hbar(a) \psi \big \rangle_{L^2(\mathbb{T}^d)}.
\end{equation}
Since $\Op_\hbar(a)$ is bounded on $L^2(\mathbb{T}^d)$ uniformly in $\hbar \in (0, 1]$ in terms of the $L^\infty$-norms of a finite number of derivatives of $a$, for any sequence $(\psi_\hbar)_\hbar \subset L^2(\mathbb{T}^d)$ with $\Vert \psi_\hbar \Vert_{L^2} = 1$, there exist a subsequence $(W^\hbar_{\psi_{\hbar}})$ of Wigner distributions, and a distribution $\mu \in \mathcal{D}'(T^*\mathbb{T}^d)$ such that
$$
\lim_{\hbar \to 0} W^\hbar_{\psi_{\hbar}}(a) = \mu(a), \quad \forall a \in \mathcal{C}_c^\infty(T^*\mathbb{T}^d).
$$
Furthermore, the distribution $\mu$ is certainly a positive Radon measure on $T^*\mathbb{T}^d$ \cite{Ger90}. The measure $\mu$ is called the semiclassical measure associated to the (sub)sequence $(\psi_{\hbar})$. 

If $\mu$ is the semiclassical measure associated with a sequence of eigenfunctions $(\Psi_{\hbar})$ with $\lambda_{\hbar} \to 1$, then $\mu$ is in fact a positive Radon measure on the level-set $P_\omega^{-1}(1) \subset T^*\mathbb{T}^d$.  If moreover the measure $\mu$ turns out to be a probability measure, then its projection onto the position space 
$$
\nu(x) = \int_{P_\omega^{-1}(1)} \mu(x,d\xi)
$$
is the quantum limit of the sequence. We emphasize that, since $P_\omega^{-1}(1)$ is in general not compact, there can exist some sequences of eigenfunctions with the zero measure as semiclassical measure. We will denote by $\mathcal{M}(\widehat{P}_{\omega,\hbar})$ the set of semiclassical measures associated to sequences of eigenfunctions for $\widehat{P}_{\omega,\hbar}$ with $\lambda_\hbar \to 1$.

From now on, we consider $\widehat{P}_{\omega,\hbar}$ with frequencies $\omega$ lying in a small neighborhood of a compact Cantor set of Diophantine vectors $\Omega \subset \R^d$ satisfying:
\begin{equation}
\label{e:diophantine_condition2}
\vert k \cdot \omega \vert \geq \frac{ \varsigma}{\vert k \vert^{\gamma-1}}, \quad k \in \mathbb{Z}^d \setminus \{ 0 \},
\end{equation}
for some constants $\varsigma > 0$ and $\gamma > d$.  For any $\rho > 0$, let $\Omega_\rho$ be the complex neighborhood of $\Omega$ given by
$$
\Omega_\rho := \{ z \in \mathbb{C}^d \, : \, \dist(z,\Omega) < \rho \},
$$
and, given $s >0$, we consider also the complex neighborhood of the $d$-torus
$$
D_s := \{ z \in \mathbb{T}^d + i \R^d \, : \, \vert \Im z \vert < s \}.
$$

We assume that the perturbation $V(x,\xi;\omega) :=v(x;\omega) \cdot \xi$ belong to the following family of linear symbols on the cotangent bundle $T^*\mathbb{T}^d$:

\begin{definition}
A function $V \in \mathcal{C}^\omega(T^*\mathbb{T}^d \times \Omega_\rho)$ is in the space of linear symbols $\mathscr{L}_{s,\rho}$ if 
\begin{equation}
V(x, \xi; \omega) = \xi \cdot v(x;\omega) = \sum_{k \in \mathbb{Z}^d}  \xi \cdot \widehat{v}(k;\omega)  e_k(x),
\end{equation}
for some analytic vector field $v \in \mathcal{C}^\omega(D_s \times \Omega_\rho; \mathbb{C}^d)$, where $\widehat{v}(k;\omega) \in \mathbb{C}^d$ is the $k^{th}$-Fourier coefficient of $v$:
$$
\widehat{v}(k;\omega) := \big \langle v(\cdot;\omega) , e_{k} \big \rangle_{L^2(\mathbb{T}^d)}, \quad e_k(x) := \frac{e^{ik \cdot x}}{(2\pi)^{d/2}}, \quad k \in \mathbb{Z}^d,
$$
and
\begin{equation}
\vert V \vert_{s,\rho} := \sup_{\omega \in \Omega_\rho}  \sum_{k \in \mathbb{Z}^d} \vert \widehat{v}(k;\omega) \vert e^{\vert k \vert s} < \infty.
\end{equation}
The space $\big( \mathscr{L}_{s,\rho}, \vert \cdot \vert_{s,\rho} \big)$ is a Banach space. We denote by $\mathscr{L}_s \subset \mathscr{L}_{s, \rho}$ the subspace of symbols that do not depend on $\omega \in \Omega_\rho$, and by $\vert \cdot \vert_s$ its norm in this space.
\end{definition}

Our first result reads:

\begin{teor}
\label{t:main_result_1}
Let $s,\rho > 0$ and $V \in \mathscr{L}_{s,\rho}$ be real valued and assume
\begin{equation}
\label{e:small_condition}
\vert V \vert_{s,\rho}  \leq \varepsilon,
\end{equation}
where $\varepsilon$ is a small positive constant depending only on $s$, $\rho$, $\gamma$ and $\varsigma$. Then there exists a real change of frequencies $\varphi : \Omega \to \Omega_{\rho}$ such that the point-spectrum of $\widehat{P}_{\varphi(\omega),\hbar}$ is
$$
\operatorname{Sp}^p_{L^2(\mathbb{T}^d)}\big( \widehat{P}_{\varphi(\omega),\hbar} \big) = \{ \hbar \, \omega \cdot k \, : \, k \in \mathbb{Z}^d \},
$$
and, for every $\omega \in \Omega$, there exists a diffeomorphism $\theta_\omega : \mathbb{T}^d \to \mathbb{T}^d$ of the torus homotopic to the identity so that, denoting by
$$
\Theta_\omega(x,\xi) = \big( \theta_\omega(x), [(\partial_x \theta_\omega(x))^T]^{-1} \xi \big)
$$
the symplectic lift of $\theta_\omega$ into $T^*\mathbb{T}^d$, 
$$
\mathcal{M}\big( \widehat{P}_{\varphi(\omega),\hbar} \big) =  \bigcup_{\xi \in \mathcal{L}_\omega^{-1}(1)} \big \{ (\Theta_\omega)_* \mathfrak{h}_{\mathbb{T}^d \times \{ \xi \}} \big \} \cup \{0 \},
$$
where $\mathfrak{h}_{\mathbb{T}^d \times \{ \xi \}}$ denotes the Haar measure on the invariant torus $\mathbb{T}^d \times \{ \xi \}$ and $(\Theta_\omega)_*$ stands for the pushforward of $\Theta_\omega$;
and
$$
\mathcal{N}(\widehat{P}_{\varphi(\omega),\hbar}) = \left \{ \frac{1}{(2\pi)^d}(\theta_\omega)_* dx \right \}.
$$
Moreover,  
$$
\sup_{\omega \in \Omega} \vert \varphi(\omega) - \omega \vert \leq C_1 \vert V \vert_{s,\rho}, \quad 
\sup_{x \in \mathbb{T}^d} \vert \theta_\omega(x) - x \vert \leq C_2\vert V \vert_{s,\rho},
$$
where $C_1$ and $C_2$ are positive constants depending only on $s$, $\rho$, $\gamma$ and $\varsigma$. 
\end{teor}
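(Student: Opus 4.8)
The plan is to deduce the entire statement from a single \emph{exact} unitary conjugation of $\widehat{P}_{\varphi(\omega),\hbar}$ onto the unperturbed transport operator $\widehat{L}_{\omega,\hbar}$, and then to read off the spectrum, the semiclassical measures and the quantum limits directly for the latter.

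\textbf{Step 1 (classical renormalized KAM).} First I would establish the classical conjugacy: if $\vert V \vert_{s,\rho}\le\varepsilon$ there exist a real frequency map $\varphi:\Omega\to\Omega_\rho$ and, for each $\omega\in\Omega$, an analytic torus diffeomorphism $\theta_\omega=\mathrm{id}+U_\omega$ homotopic to the identity whose cotangent lift $\Theta_\omega$ satisfies $\Theta_\omega^*P_{\varphi(\omega)}=\mathcal{L}_\omega$; equivalently $\theta_\omega$ conjugates the constant flow of $\omega$ to the flow of $X_v(\varphi(\omega))$. This is a Newton--Moser iteration whose linearized equation at each step is the cohomological equation $(\omega\cdot\nabla)U=g-\widehat{g}(0)$, solvable because $\omega$ is Diophantine, with the tame estimate $\vert U\vert_{s-\delta}\lesssim\varsigma^{-1}\delta^{-(\gamma-1)}\vert g\vert_s$; the mean $\widehat{g}(0)$, which cannot be cancelled by $U$, is absorbed into a shift of the frequency --- this is the \textit{counterterm}. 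The accumulated shifts define $\varphi(\omega)-\omega$, pinned down by a contraction argument because $\vert V\vert_{s,\rho}$ is small (the dependence of $v$ on $\omega$ produces only quadratic-order corrections at each step and is harmless). Quadratic convergence on a decreasing sequence of analyticity strips yields $\theta_\omega$ and $\varphi$, and near-identity smallness makes $\theta_\omega$ a genuine diffeomorphism. Since the first Newton step is linear in $\vert V\vert_{s,\rho}$ while the subsequent corrections are summable with a much smaller total, one gets $\sup_\omega\vert\varphi(\omega)-\omega\vert\le C_1\vert V\vert_{s,\rho}$ and $\sup_x\vert\theta_\omega(x)-x\vert\le C_2\vert V\vert_{s,\rho}$; reality of $V$ makes every correction, hence $\varphi$ and $\theta_\omega$, real.

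\textbf{Steps 2--3 (exact quantization and spectrum).} As recalled in the introduction, $\widehat{P}_{\omega,\hbar}$ is precisely the generator of the unitary group $\psi\mapsto\psi\circ\phi_t^{X_v(\omega)}\sqrt{\vert\det d\phi_t^{X_v(\omega)}\vert}$ on $L^2(\mathbb{T}^d)$ --- that is the role of the term $-\tfrac{i\hbar}{2}\operatorname{Div}v$. Hence the classical conjugacy of Step 1 lifts to an \emph{exact} intertwining, with no $\hbar$-remainder: letting $\widehat{\Theta}_\omega$ be the unitary composition operator $\widehat{\Theta}_\omega\psi(x)=\psi(\theta_\omega^{-1}(x))\sqrt{\vert\det d\theta_\omega^{-1}(x)\vert}$, the exact quantization of $\Theta_\omega$, one checks by intertwining the two transport groups and differentiating at $t=0$ that $\widehat{\Theta}_\omega^{-1}\,\widehat{P}_{\varphi(\omega),\hbar}\,\widehat{\Theta}_\omega=\widehat{L}_{\omega,\hbar}=\omega\cdot\hbar D_x$. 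Now $\widehat{L}_{\omega,\hbar}$ has the orthonormal eigenbasis $(e_k)_{k\in\mathbb{Z}^d}$ with $\widehat{L}_{\omega,\hbar}e_k=\hbar\,\omega\cdot k\,e_k$, so its point spectrum is exactly $\{\hbar\,\omega\cdot k:k\in\mathbb{Z}^d\}$ and, $\omega$ being Diophantine, every eigenvalue is simple; unitary equivalence transfers all of this to $\widehat{P}_{\varphi(\omega),\hbar}$, whose eigenfunctions are therefore exactly the functions $\widehat{\Theta}_\omega e_k$ up to a phase.

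\textbf{Step 4 (semiclassical measures and quantum limits).} Eigenfunctions of $\widehat{P}_{\varphi(\omega),\hbar}$ with $\lambda_\hbar\to1$ correspond to $k_\hbar\in\mathbb{Z}^d$ with $\omega\cdot(\hbar k_\hbar)\to1$; passing to a subsequence, either $\hbar k_\hbar\to\xi_\infty\in\mathcal{L}_\omega^{-1}(1)$ or $\vert\hbar k_\hbar\vert\to\infty$. For the model functions $W^\hbar_{e_k}(a)=\frac1{(2\pi)^d}\int_{\mathbb{T}^d}a(x,\hbar k)\,dx$, which tends to $\mathfrak{h}_{\mathbb{T}^d\times\{\xi_\infty\}}(a)$ in the first case and to $0$ in the second (the symbol is compactly supported in $\xi$). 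Writing $\Psi_\hbar=\widehat{\Theta}_\omega e_{k_\hbar}$ and using Egorov's theorem for the composition operator (exact at the level of the principal symbol, the $O(\hbar)$ corrections being irrelevant in the limit), $\widehat{\Theta}_\omega^{-1}\Op_\hbar(a)\widehat{\Theta}_\omega=\Op_\hbar(a\circ\Theta_\omega)+O(\hbar)$, so the semiclassical measure of $\Psi_\hbar$ is $(\Theta_\omega)_*\mathfrak{h}_{\mathbb{T}^d\times\{\xi_\infty\}}$, or $0$. Conversely every $\xi\in\mathcal{L}_\omega^{-1}(1)$ is realized by choosing $\hbar_n\to0$ and $k_n\in\mathbb{Z}^d$ with $\hbar_nk_n\to\xi$, and (for $d\ge2$) the zero measure by letting $\hbar_nk_n$ escape to infinity along directions tangent to $\mathcal{L}_\omega^{-1}(1)$ while keeping $\hbar_n\,\omega\cdot k_n\to1$; this gives the stated description of $\mathcal{M}(\widehat{P}_{\varphi(\omega),\hbar})$. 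Finally $\vert\Psi_\hbar(x)\vert^2=\vert e_{k_\hbar}(\theta_\omega^{-1}(x))\vert^2\vert\det d\theta_\omega^{-1}(x)\vert=\frac1{(2\pi)^d}\vert\det d\theta_\omega^{-1}(x)\vert$ is independent of $\hbar$ and $k_\hbar$, so there is a single quantum limit $\frac1{(2\pi)^d}\vert\det d\theta_\omega^{-1}(x)\vert\,dx=\frac1{(2\pi)^d}(\theta_\omega)_*dx$, which is precisely the projection to the base of $(\Theta_\omega)_*\mathfrak{h}_{\mathbb{T}^d\times\{\xi\}}$.

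\textbf{Main obstacle.} The genuinely technical work is Step~1: running the counterterm KAM scheme so that $\varphi$ is well defined (Whitney-smoothly) on the Cantor set $\Omega$ and extracting bounds \emph{linear} in $\vert V\vert_{s,\rho}$ rather than merely finite. Steps~2--4 are comparatively soft, precisely because $\widehat{P}_{\omega,\hbar}$ is first order and its conjugation is an honest change of variables, so the usual ``$O(\hbar^\infty)$ quasimode'' difficulties of the KAM regime do not appear here.
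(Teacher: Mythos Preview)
Your proposal is correct and follows essentially the same route as the paper: invoke the classical Moser--P\"oschel KAM theorem to obtain $\varphi$ and $\theta_\omega$, lift $\theta_\omega$ to the unitary composition operator to get an \emph{exact} conjugation $\mathcal{U}_\omega\,\widehat{P}_{\varphi(\omega),\hbar}\,\mathcal{U}_\omega^{*}=\widehat{L}_{\omega,\hbar}$ (the paper phrases this as ``exact Egorov'' rather than intertwining of transport groups, but it is the same computation), and then read off the spectrum, the Wigner measures of $e_k$, and the quantum limits exactly as you do. Your direct computation $|\Psi_\hbar|^2=\frac{1}{(2\pi)^d}|\det d\theta_\omega^{-1}|$ and your care with the $d\ge 2$ proviso for the zero measure are slightly more explicit than the paper, but the argument is the same.
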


\begin{remark}
The assumption $V \in \mathscr{L}_{s,\rho}$ allows us to use an analytic version of the classical KAM theorem (Theorem \ref{poschel} below) about perturbations of constant vector fields on the torus \cite{Mos67}, \cite{Pos11}. This theorem remains valid for less regular symbols $V$, see for instance \cite{Mos67}, \cite{Zehnder75I}, \cite{Zehnder75II}, as well as for more general Diophantine conditions than \eqref{e:diophantine_condition2}, see  \cite{Pos09}, \cite{Pos11}. We prefer not to state our result with the greatest possible generality for the sake of clarity.
\end{remark}

\subsection{Renormalization of semiclassical KAM operators} If the perturbation $V$ does not depend on the vector of frequencies and we consider an isolated vector $\omega \in \Omega$, then Theorem \ref{t:main_result_1} provides the following direct corollary:

\begin{corol}
Let $\omega \in \Omega$ and let $V \in \mathscr{L}_s$ such that
$$
\vert V \vert_s \leq \varepsilon,
$$
where $\varepsilon$ is a small positive constant depending only on $s$, $\gamma$ and $\varsigma$. Then there exist a real vector  $\lambda = \lambda(V) \in \R^d$ such that the point-spectrum of $\widehat{P}_{\omega + \lambda,\hbar}$ is
$$
\operatorname{Sp}^p_{L^2(\mathbb{T}^d)}\big( \widehat{P}_{\omega + \lambda,\hbar} \big) = \{ \hbar \, \omega \cdot k \, : \, k \in \mathbb{Z}^d \},
$$
and a diffeomorphism $\theta : \mathbb{T}^d \to \mathbb{T}^d$ of the torus homotopic to the identity so that 
$$
\mathcal{M}\big( \widehat{P}_{\omega + \lambda,\hbar} \big) =  \bigcup_{\xi \in \mathcal{L}_\omega^{-1}(1)} \big \{ \Theta_* \mathfrak{h}_{\mathbb{T}^d \times \{ \xi \}} \big \} \cup \{0 \}, \quad \mathcal{N}(\widehat{P}_{\omega + \lambda,\hbar}) = \left \{ \frac{1}{(2\pi)^d}\theta_* dx \right \},
$$
where $\Theta$ is the symplectic lift of $\theta$ into $T^*\mathbb{T}^d$.
\end{corol}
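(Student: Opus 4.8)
The plan is to read the statement off Theorem \ref{t:main_result_1}, the only real work being to specialize the renormalizing change of frequencies to the single vector $\omega$ and to keep track of the constants.

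First I would observe that any $V \in \mathscr{L}_s$ can be viewed as an element of $\mathscr{L}_{s,\rho}$ for an \emph{arbitrary} $\rho > 0$, by regarding it as constant in the frequency variable; under this identification $\vert V \vert_{s,\rho} = \vert V \vert_s$ for every $\rho$. I would therefore freeze $\rho$ (say $\rho = 1$), so that the smallness threshold $\varepsilon$ produced by Theorem \ref{t:main_result_1} depends only on $s$, $\gamma$, $\varsigma$, as asserted in the corollary. Since then $\vert V \vert_{s,1} = \vert V \vert_s \le \varepsilon$ and $\omega$ lies in the Diophantine Cantor set $\Omega$ fixed in Section \ref{s:results}, Theorem \ref{t:main_result_1} applies and yields a real change of frequencies $\varphi : \Omega \to \Omega_1$ together with, for our fixed $\omega$, a diffeomorphism $\theta_\omega : \mathbb{T}^d \to \mathbb{T}^d$ homotopic to the identity, whose symplectic lift I call $\Theta_\omega$.

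Next I would put $\lambda := \varphi(\omega) - \omega$. Because $\varphi$ is real valued, $\lambda \in \R^d$; it carries no dependence on $\hbar$ (being a purely classical object), and the estimate $\sup_{\omega' \in \Omega} \vert \varphi(\omega') - \omega' \vert \le C_1 \vert V \vert_{s,1}$ gives $\vert \lambda \vert \le C_1 \vert V \vert_s$, so that $\lambda = \lambda(V)$ is well defined and small. Since $V$ does not depend on the frequency, $v(x;\varphi(\omega)) = v(x)$, hence $\widehat{P}_{\omega+\lambda,\hbar} = \widehat{P}_{\varphi(\omega),\hbar}$; equivalently $\widehat{P}_{\omega+\lambda,\hbar} = \widehat{P}_{\omega,\hbar} + \lambda \cdot \hbar D_x$, which exhibits $\lambda \cdot \hbar D_x$ as the sought integrable counterterm. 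Applying Theorem \ref{t:main_result_1} at the frequency $\omega$ then delivers at once
$$
\operatorname{Sp}^p_{L^2(\mathbb{T}^d)}\big( \widehat{P}_{\omega+\lambda,\hbar} \big) = \{ \hbar\, \omega \cdot k \, : \, k \in \mathbb{Z}^d \},
$$
as well as the identifications of $\mathcal{M}(\widehat{P}_{\omega+\lambda,\hbar})$ and $\mathcal{N}(\widehat{P}_{\omega+\lambda,\hbar})$ in terms of $\Theta_\omega$ and $\theta_\omega$; setting $\theta := \theta_\omega$ and $\Theta := \Theta_\omega$ finishes the argument.

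I do not expect any genuine obstacle here, all the content being packed into Theorem \ref{t:main_result_1}; the single point that requires a word of care is the one highlighted above, namely that the triviality of the dependence of $V$ on $\omega$ allows the neighborhood parameter $\rho$ to be frozen, so that $\varepsilon$ (and likewise $C_1$ and the constant $C_2$ controlling $\sup_x \vert \theta(x) - x \vert$) depend only on $s$, $\gamma$, $\varsigma$.
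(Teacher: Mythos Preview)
Your proposal is correct and is exactly the intended reading: the paper presents this as a \emph{direct corollary} of Theorem \ref{t:main_result_1} and gives no separate proof, so specializing $\varphi$ to the single frequency $\omega$ and setting $\lambda = \varphi(\omega)-\omega$ is precisely what is meant. The only cosmetic remark is that fixing $\rho=1$ literally may conflict with the hidden constraint $\rho/16 \le \varsigma/(32\lambda^\gamma)$ coming from Theorem \ref{poschel}; but since $V$ is constant in $\omega$ you are free to take any $\rho>0$, and choosing it once and for all as a function of $\varsigma,\gamma,s$ gives the same conclusion that $\varepsilon$ depends only on $s,\gamma,\varsigma$.
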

The vector $\lambda \in { \R^d}$ can be understood as a \textit{counterterm} that renormalizes the perturbed operator $\widehat{P}_{\omega,\hbar}$ to make it completely integrable and unitarily equivalent to $\widehat{L}_{\omega,\hbar}$.

In the classical framework, the renor\-ma\-lization problem \cite{Gall82}, \cite{Gen96} asks if, given a small analytic perturbation $V$ of the linear Hamiltonian $\mathcal{L}_\omega$, with $V = V(x,\xi;\varepsilon)$ defined on $\mathbb{T}^d \times \R^d \times [0,\varepsilon_0]$ for some $\varepsilon_0>0$ sufficiently small, there exists a counterterm $R = R(\xi; \varepsilon)$ on $\R^d \times [0,\varepsilon_0]$, such that the renormalized Hamiltonian
$$
Q(x,\xi;\varepsilon) = \mathcal{L}_\omega(\xi) + V(x,\xi;\varepsilon) - R(\xi;\varepsilon)
$$
is integrable and canonically conjugate to the unperturbed Hamiltonian. This was conjectured by Gallavotti in \cite{Gall82} and first proven by Eliasson in \cite{Elia89}. This result can be regarded as a control theory theorem. Despite the fact that small perturbations of $\mathcal{L}_\omega$ could gene\-rate even ergodic behavior (see Katok \cite{Kat73}), this shows that modifying in a suitable way the completely integrable part of the Hamiltonian, the system remains stable. Renormalization techniques have been studied by several authors in the context of quantum field theory, as well as its connection with KAM theory \cite{Chan98, Feld92, Gall82, Gall95, Koch99, Khan88}. 

Our goal is to prove a semiclassical version of the renormalization problem. We consider again the semiclassical Weyl quantization of $\mathcal{L}_\omega$:
\begin{equation}
\widehat{L}_{\omega, \hbar} = \Op_\hbar(\mathcal{L}_\omega)=  \omega \cdot \hbar D_x.
\end{equation}
Let $(\varepsilon_\hbar)_\hbar$ be a semiclassical scaling such that
\begin{equation}
\label{e:semiclassical_scaling}
 \varepsilon_\hbar \leq \hbar,
\end{equation} 
and let $V \in \mathcal{C}(T^*\mathbb{T}^d;\R)$ be a bounded real function. The precise assumptions on the regularity of $V$ will be stated below. Our aim is to construct an integrable counterterm $R_\hbar = R_\hbar(V) \in \mathcal{C}(\R^d)$, that only depends on the action variable $\xi$ and is uniformly bounded in $\hbar \in (0, 1]$, so that the quantum Hamiltonian
\begin{equation}
\widehat{Q}_\hbar := \widehat{L}_{\omega,\hbar} + \varepsilon_\hbar \Op_\hbar(V - R_\hbar)
\end{equation}
is unitarily equivalent to the unperturbed operator $\widehat{L}_{\omega,\hbar}$. This will show that the spectrum of the operator $\widehat{L}_{\omega,\hbar} + \varepsilon_\hbar \Op_\hbar(V)$ can be stabilized by adding the counterterm $\varepsilon_\hbar \Op_\hbar(R_\hbar)$ to the system. Moreover, we will show that the sets of quantum limits and semi\-classical measures of sequences of eigenfunctions for the operator $\widehat{Q}_\hbar$ coincide with those of the unperturbed operator $\widehat{L}_{\omega,\hbar}$.

In a related work, Graffi and Paul \cite{Pau12} showed that the perturbed operator 
$$
\widehat{P}_\hbar = \widehat{L}_{\omega,\hbar} +  \Op_\hbar(V_\omega)
$$
can be conjugated to a convergent quantum normal form for a specific class of bounded analytic perturbations of the form 
\begin{equation}
\label{Russmann_condition_perturbation}
V_\omega(x,\xi) = V(x, \omega \cdot \xi), \quad (x,\xi) \in T^*\mathbb{T}^d,
\end{equation}
(see Gallavotti \cite{Gall82} for a discussion of this condition). As a consequence, it is most likely that the set of semiclassical measures is stable under perturbations of this type, without necessity of renormalization.  The main difference in our approach is the substitution of the particular dependence on $\omega \cdot \xi$ of $V$, which is stable under the conjugacies employed by Graffi and Paul to construct the normal form, by the addition of the renormalization function $R_\hbar$.

We emphasize that, compared to \cite{Eca98}, \cite{Elia89} and \cite{Gen96}, our work is not based on the study of the convergence of Lindstedt series. Alternatively, we will use an algorithm similar to that of Govin et al. \cite{Gov98} to construct a  normal form, obtaining the counterterm $R_\hbar$ step by step. We expect that condition \eqref{e:semiclassical_scaling} is not sharp. One should be able to deal with perturbations of order $O(1)$. The main difficulty arises when managing the loss of analyticity in the variable $\xi$.

We will consider semiclassical perturbations  $\Op_\hbar(V)$ whose symbol $V$ enjoys some regularity  properties. We first consider the following spaces of analytic functions:

\begin{definition}
\label{Banach_spaces_of_symbols}
Let $s >0$, we define the Banach space $\mathcal{A}_s(\R^d)$ of analytic functions $f \in \mathcal{C}^\omega(\R^d;\mathbb{R})$ such that
$$
\Vert f \Vert_{\mathcal{A}_s(\R^d)} := \frac{1}{(2\pi)^{d/2}}\int_{\R^d} \vert \widehat{f}(\eta) \vert \, e^{\vert \eta \vert s} \, d\eta < \infty,
$$
where $\widehat{f}$ denotes the Fourier transform of $f$. Let $\rho > 0$, we also define the space $\mathcal{A}_{s,\rho}(T^* \mathbb{T}^d)$ of analytic functions $g \in \mathcal{C}^\omega(T^*\mathbb{T}^d;\mathbb{R})$ such that
$$
\Vert g \Vert_{s,\rho} := \frac{1}{(2\pi)^{d/2}} \sum_{k\in \mathbb{Z}^d} \Vert \widehat{g}(k,\cdot) \Vert_{ \mathcal{A}_s(\R^d)} \, e^{\vert k \vert \rho} < \infty,
$$
where
$$
\widehat{g}(k , \xi) := \frac{1}{(2\pi)^{d/2}} \int_{\mathbb{T}^d} g(x,\xi) e^{-i x\cdot k} \, dx, \quad k \in \mathbb{Z}^d.
$$
Finally, we define the space $\mathcal{A}_\rho(\mathbb{T}^d)$ of functions $v \in \mathcal{C}^\omega(\mathbb{T}^d;\R)$ such that
$$
\Vert v \Vert_{\mathcal{A}_\rho(\mathbb{T}^d)} := \frac{1}{(2\pi)^{d/2}} \sum_{k \in \mathbb{Z}^d} \vert \widehat{v}(k) \vert e^{\vert k \vert \rho} < \infty.
$$
\end{definition}

\noindent By the Calderón-Vaillancourt Theorem (see Lemma \ref{l:analytic_calderon_vaillancourt_torus} below), the semiclassical Weyl quantization $\Op_h(a)$ of a symbol $a \in \mathcal{A}_{s,\rho}(T^* \mathbb{T}^d)$ satisfies
$$
\Vert \Op_\hbar(a) \Vert_{\mathcal{L}(L^2)} \leq C_{d,\rho} \Vert a \Vert_{s,\rho}, \quad \forall \hbar \in (0,1].
$$

We next proceed to state our second result:

\begin{teor}
\label{t:main_result_2} 
Let $\omega \in \R^d$ be a Diophantine vector satisfying \textnormal{(\ref{e:diophantine_condition2})}, and let $V$ be a real valued function that belongs to $\mathcal{A}_{s,\rho}(T^*\mathbb{T}^d)$  for some fixed $s,\rho>0$. Assume that $\varepsilon_\hbar = \hbar$ and
\begin{equation}
\label{e:small_condition2}
\Vert V \Vert_{s,\rho} \leq \varepsilon,
\end{equation}
where $\varepsilon > 0$ is a small constant that depends only on $s$, $\rho$, $\gamma$ and $\varsigma$. Then, there exists a sequence of integrable counterterms\footnote{That is, $R_\hbar$ is a function only of the action variable $\xi \in \R^d$.}  $R_\hbar = R_\hbar(V) \in \mathcal{A}_{s}(\R^d)$ with $\Vert R_\hbar \Vert_{\mathcal{A}_s(\R^d)} \lesssim \Vert V \Vert_{s,\rho}$, uniformly in $\hbar \in (0,1]$, and
\begin{equation}
\label{e:stability_spectrum}
\operatorname{Sp}^p_{L^2(\mathbb{T}^d)}\big( \widehat{Q}_\hbar \big) = \operatorname{Sp}^p_{L^2(\mathbb{T}^d)}\big( \widehat{L}_{\omega,\hbar} \big)=  \{ \hbar \, \omega \cdot k \, : \, k \in \mathbb{Z}^d \}.
\end{equation}
Moreover, denoting by $\mathcal{M}\big(\widehat{Q}_\hbar \big)$ the set semiclassical measures of sequences of norma\-lized eigenfunctions of the Hamiltonian $\widehat{Q}_\hbar$ with eigenva\-lues verifying $\lambda_\hbar \to 1$ as $\hbar \to 0$, 
\begin{equation}
\label{e:semiclassical_measures}
 \mathcal{M}\big(\widehat{Q}_\hbar \big) = \mathcal{M}\big( \widehat{L}_{\omega,\hbar} \big) = \bigcup_{\xi \in \mathcal{L}_\omega^{-1}(1)} \big \{ \mathfrak{h}_{\mathbb{T}^d \times \{ \xi \}} \big \} \cup \{0 \},
\end{equation}
and the set of quantum limits of $\widehat{Q}_\hbar$ is precisely
\begin{equation}
\label{e:quantum_limits}
\mathcal{N}(\widehat{Q}_\hbar) = \left \{ \frac{1}{(2\pi)^d}dx \right \}.
\end{equation}
\end{teor}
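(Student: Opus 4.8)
The plan is to construct the counterterm $R_\hbar$ and a conjugating unitary $U_\hbar$ through a convergent \emph{quantum normal form} in which $R_\hbar$ is pinned down by a contraction argument, and then to read \eqref{e:stability_spectrum}, \eqref{e:semiclassical_measures} and \eqref{e:quantum_limits} off the resulting explicit description of the eigenfunctions of $\widehat{Q}_\hbar$.

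First, for every $R=R(\xi)$ in a small ball of $\mathcal{A}_{s}(\R^d)$ I would run a KAM-type iteration on $\widehat{L}_{\omega,\hbar}+\varepsilon_\hbar\Op_\hbar(V-R)$, in the spirit of Theorem \ref{poschel} and of the algorithm of Govin et al.\ \cite{Gov98}. At each step one splits the current symbol $P$ as $P=\langle P\rangle+\widetilde{P}$, where $\langle P\rangle(\xi):=(2\pi)^{-d}\int_{\mathbb{T}^d}P(x,\xi)\,dx$ is a function of $\xi$ alone and $\widetilde P$ has zero $x$-mean; one solves the homological equation $\omega\cdot\partial_x g=\widetilde P$, which is possible by the Diophantine condition \eqref{e:diophantine_condition2} at the cost of a geometrically small loss of the $x$-width $\rho$ with a $\gamma$-dependent Cauchy factor; and one conjugates by the \emph{exact} unitary $e^{i\Op_\hbar(g)}$ at the first step, and by exact unitaries of operator norm $O(\hbar)$ thereafter (this is where $\varepsilon_\hbar=\hbar$ is used, so that the $\hbar^{-1}$ in $e^{i\widehat{G}/\hbar}$ cancels). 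Because $\mathcal{L}_\omega$ is linear in $\xi$, the map $a\mapsto\tfrac{i}{\hbar}[\Op_\hbar(a),\widehat{L}_{\omega,\hbar}]=-\Op_\hbar(\omega\cdot\partial_x a)$ is exact, so the new remainder comes only from Poisson brackets and $O(\hbar^2)$ Moyal corrections; it is quadratically smaller, but it costs a further loss of the $\xi$-width $s$ because those operations differentiate in $\xi$. Allocating summable sequences of width losses with total sums $<s/2$ and $<\rho/2$, the quadratic convergence beats the resulting polynomial Cauchy blow-up provided $\varepsilon$ is small, and one obtains a unitary $U_\hbar[R]$ (an infinite product of the above exponentials) together with an integrable symbol $Z_\hbar[R]=Z_\hbar[R](\xi)$, the accumulated averages, such that $U_\hbar[R]^*\big(\widehat{L}_{\omega,\hbar}+\varepsilon_\hbar\Op_\hbar(V-R)\big)U_\hbar[R]=\widehat{L}_{\omega,\hbar}+\varepsilon_\hbar\Op_\hbar(Z_\hbar[R])$, with all bounds uniform in $\hbar\in(0,1]$ (the Calderón--Vaillancourt estimate of Lemma \ref{l:analytic_calderon_vaillancourt_torus} converts the symbol bounds into operator bounds). \emph{I expect this step --- and, within it, the bookkeeping of the loss of analyticity in $\xi$, as already flagged in the introduction --- to be the main obstacle.}

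Next, I would observe that $Z_\hbar[R]=\langle V\rangle-R+\Phi_\hbar(R)$ with $\Phi_\hbar$ of size $O(\Vert V\Vert_{s,\rho}^2+\Vert V\Vert_{s,\rho}\Vert R\Vert)$ and $O(\varepsilon)$-Lipschitz in $R$, uniformly in $\hbar$; hence $R\mapsto\langle V\rangle+\Phi_\hbar(R)$ is a contraction on a small ball of $\mathcal{A}_{s}(\R^d)$. Let $R_\hbar$ be its fixed point, so that $Z_\hbar[R_\hbar]=0$ and $\Vert R_\hbar\Vert_{\mathcal{A}_s(\R^d)}\lesssim\Vert V\Vert_{s,\rho}$ uniformly in $\hbar$; with $U_\hbar:=U_\hbar[R_\hbar]$ this gives $U_\hbar^*\widehat{Q}_\hbar U_\hbar=\widehat{L}_{\omega,\hbar}$. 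Then \eqref{e:stability_spectrum} is immediate, since unitarily equivalent operators have the same point spectrum and $\operatorname{Sp}^p_{L^2(\mathbb{T}^d)}(\widehat{L}_{\omega,\hbar})=\{\hbar\,\omega\cdot k:k\in\mathbb{Z}^d\}$ with eigenfunction $e_k$ for the eigenvalue $\hbar\,\omega\cdot k$; moreover, by \eqref{e:diophantine_condition2} distinct $k\in\mathbb{Z}^d$ give distinct eigenvalues, so all eigenspaces of $\widehat{L}_{\omega,\hbar}$ are one-dimensional and every normalized eigenfunction of $\widehat{Q}_\hbar$ with eigenvalue $\lambda_\hbar$ is $\Psi_\hbar=c_\hbar\,U_\hbar e_{k_\hbar}$ with $|c_\hbar|=1$ and $\hbar\,\omega\cdot k_\hbar=\lambda_\hbar$.

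Finally, the first step of the iteration yields $U_\hbar=e^{i\Op_\hbar(g_1)}+O_{\mathcal{L}(L^2)}(\hbar)$, where $g_1$ solves $\omega\cdot\partial_x g_1=V-\langle V\rangle$; crucially $g_1$ is real-valued (real datum) and independent of $\hbar$ (the oscillatory part of $V-R_\hbar$ equals that of $V$, since $R_\hbar$ depends on $\xi$ alone). A direct Fourier computation shows that $\Op_\hbar(g_1)$ applied to $e_k$ equals multiplication by $g_1(\cdot,\hbar k)$ up to an $L^2$ error of order $\hbar$, uniformly in $k$, whence by Duhamel $U_\hbar e_{k_\hbar}=e^{ig_1(\cdot,\hbar k_\hbar)}e_{k_\hbar}+O_{L^2}(\hbar)$. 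Since $|e^{ig_1}|\equiv1$, one gets $|\Psi_\hbar|^2\,dx\to(2\pi)^{-d}\,dx$ in $L^1(\mathbb{T}^d)$, which is \eqref{e:quantum_limits}; and, testing $W^\hbar_{\Psi_\hbar}$ against $a\in\mathcal{C}_c^\infty(T^*\mathbb{T}^d)$, the unimodular phase $e^{ig_1(\cdot,\hbar k_\hbar)}$ drops out in the limit (the same mechanism that makes $W^\hbar_{e^{i\phi}e_k}$ and $W^\hbar_{e_k}$ have the same limit for a slowly varying real $\phi$), so $W^\hbar_{\Psi_\hbar}$ has the same weak-$\star$ limit as $W^\hbar_{e_{k_\hbar}}$. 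Passing to a subsequence with $\hbar k_\hbar\to\xi_\infty\in\R^d$, where necessarily $\omega\cdot\xi_\infty=1$, this limit is $\mathfrak{h}_{\mathbb{T}^d\times\{\xi_\infty\}}$, while if $|\hbar k_\hbar|\to\infty$ (possible because $\mathcal{L}_\omega^{-1}(1)$ is unbounded) it is $0$; conversely each $\mathfrak{h}_{\mathbb{T}^d\times\{\xi\}}$ with $\xi\in\mathcal{L}_\omega^{-1}(1)$ is realized by taking $k_\hbar$ nearest to $\xi/\hbar$, and $0$ by letting $k_\hbar$ escape along $\omega^{\perp}$ while keeping $\hbar\,\omega\cdot k_\hbar\to1$. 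This gives $\mathcal{M}(\widehat{Q}_\hbar)=\bigcup_{\xi\in\mathcal{L}_\omega^{-1}(1)}\{\mathfrak{h}_{\mathbb{T}^d\times\{\xi\}}\}\cup\{0\}=\mathcal{M}(\widehat{L}_{\omega,\hbar})$ and $\mathcal{N}(\widehat{Q}_\hbar)=\{(2\pi)^{-d}\,dx\}$, i.e.\ \eqref{e:semiclassical_measures} and \eqref{e:quantum_limits}.
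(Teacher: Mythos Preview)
Your overall architecture---KAM iteration producing a unitary conjugacy to $\widehat{L}_{\omega,\hbar}$, then reading off spectrum and measures---matches the paper, but there are two substantive divergences, one harmless and one a real gap.

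On the construction of $R_\hbar$: you parametrize the whole normal-form iteration by a trial $R$ and close up by a contraction on $R\mapsto Z_\hbar[R]$. The paper instead builds $R_\hbar=\sum_j R_{j,\hbar}$ \emph{inside} the iteration: at step $n$ it determines $R_{n,\hbar}$ as the unique solution of $\langle U_{n-1,\hbar}\cdots U_{1,\hbar}\Op_\hbar(R_{n,\hbar})U_{1,\hbar}^*\cdots U_{n-1,\hbar}^*\rangle=\langle\Op_\hbar(V_{n,\hbar})\rangle$ via a Neumann-series inverse (Lemma~\ref{l:inverse_map}). Both routes are plausible. Your flagged worry about losing the $\xi$-width $s$ during the iteration is, however, misplaced: the paper uses only the crude algebra bound $\Vert[F,a]_\hbar\Vert_{s,\rho}\le 2\Vert F\Vert_{s,\rho}\Vert a\Vert_{s,\rho}$ (no loss in either width) for the convergence step, shedding $\rho$ only when solving the cohomological equation; convergence is linear (factor $\alpha<1$ per step), not quadratic.

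The genuine gap is in your derivation of \eqref{e:semiclassical_measures} and \eqref{e:quantum_limits}. Your claim that the conjugating unitaries after the first are $I+O_{\mathcal{L}(L^2)}(\hbar)$, hence $U_\hbar=e^{i\Op_\hbar(g_1)}+O_{\mathcal{L}(L^2)}(\hbar)$, is false. With $\varepsilon_\hbar=\hbar$ the generators satisfy $\Vert F_{n,\hbar}\Vert\lesssim\alpha^{(n-1)/2}$ (or $\varepsilon^{2^{n-1}}$ in a Newton scheme), which is small but \emph{independent of $\hbar$}; thus $\mathcal{U}_\hbar-e^{i\Op_\hbar(F_1)}=O(\varepsilon)$ uniformly in $\hbar$, not $O(\hbar)$. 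Your phase-dropping computation then gives the Wigner and density limits only up to a fixed $O(\varepsilon)$ error, which does not yield the claimed equalities. The paper avoids approximating $\mathcal{U}_\hbar$ altogether and instead proves $\Vert\mathcal{U}_\hbar\Op_\hbar(a)\mathcal{U}_\hbar^*-\Op_\hbar(a)\Vert_{\mathcal{L}(L^2)}=o(1)$ for every $a\in\mathcal{C}_c^\infty(T^*\mathbb{T}^d)$ (Lemmas~\ref{l:O_lemma} and~\ref{better_text_functions}). The mechanism is the refined commutator estimate $\Vert[F,a]_\hbar\Vert_{u-\sigma}\le C\hbar\sigma^{-2}\Vert F\Vert_u\Vert a\Vert_u$ of Lemma~\ref{l:commutator_loss}: each factor $e^{i\Op_\hbar(F_{n,\hbar})}$ moves $\Op_\hbar(a)$ by $O(\hbar)$ \emph{at the cost of a summable loss of analyticity in the test symbol $a$} (followed by a density argument for general $a$). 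So the $\hbar$-smallness you need comes not from the size of the generators $F_{n,\hbar}$ but from the Moyal commutator acting on $a$; this Egorov-type input is exactly what your argument is missing.
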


In the case $\varepsilon_\hbar \ll \hbar$, condition \eqref{e:small_condition2} can be removed. Moreover, in this case we can relax the regularity hypothesis on $V$ in the $\xi$ variable.  Essentially, we just need that $\xi \mapsto V(x,\xi)$ is bounded together with all its derivatives to ensure that $\Op_\hbar(V)$ is bounded on $L^2(T^*\mathbb{T}^d)$. In \cite{Sjo95}, \cite{Sjo94}, Sjöstrand introduced a Wiener algebra of $L^2$-pseudodifferential operators with symbols containing the class $S^0(\R^{2d})$ (the class of symbols $a \in \mathcal{C}^\infty(\R^{2d})$ that are bounded together with all its derivatives) that is stable under composition. From these works, we define the following spaces:
\begin{definition}
Let $e_1, \ldots , e_d$ be a basis of $\R^d$, we say that $\Gamma =  \bigoplus_1^d \mathbb{Z} e_j$ is a lattice. Let $\chi_0 \in \mathscr{S}(\R^{2d})$ have the property that $1 = \sum_{j \in \Gamma} \chi_j$, where $\chi_j(\xi) = \chi_0(x-j)$. We say that $f \in S_W(\R^d)$ if
$$
U(\eta) := \sup_{j \in \Gamma} \vert \widehat{ \chi_j f}(\eta) \vert \in L^1(\R^d).
$$
The space $S_W$ is a Banach space with the norm
\begin{equation}
\label{e:large_constant}
\Vert f \Vert_W := C_d \Vert \sup_{j \in \Gamma} \vert \widehat{ \chi_j f} \vert \Vert_{L^1(\R^d)},
\end{equation}
where $C_d$ is a fixed and large constant depending only on $d$. Let $\rho > 0$, we say that $g \in \mathcal{C}^\infty(\R^{2d};\R)$ belong to $\mathcal{A}_{W,\rho}(\R^{2d})$ if 
$$
\Vert g \Vert_{W,\rho} := \frac{1}{(2\pi)^{d/2}} \sum_{k \in \mathbb{Z}^d} \Vert \widehat{g}(k, \cdot) \Vert_W \, e^{\vert k \vert \rho} < \infty.
$$
\end{definition}
\begin{remark}
The definition of $S_W(\R^d)$ does not depend on the choice of $\Gamma$, $\chi_0$, see \cite[Lemma 1.1]{Sjo95}.
\end{remark}

\begin{remark}
Observe that, for every $s,\rho > 0$, $\mathcal{A}_{s,\rho}(T^*\mathbb{T}^d) \subset \mathcal{A}_{W,\rho}(T^*\mathbb{T}^d)$ and $\mathcal{A}_s(\R^d) \subset S_W(\R^d)$. Notice also that $\mathcal{A}_{W,\rho}(T^*\mathbb{T}^d)$ is contained in the space of bounded continuous functions on $T^*\mathbb{T}^d$.
\end{remark}

\begin{teor}
\label{t:main_result_3}
Let $\omega \in \R^d$ be a strongly non-resonant frequency satisfying \textnormal{(\ref{e:diophantine_condition2})}, and let $V\in \mathcal{A}_{W,\rho}(T^*\mathbb{T}^d)$  for some fixed $\rho>0$. Let $(\varepsilon_\hbar)$ be a sequence of positive real numbers satisfying \begin{equation}
\varepsilon_\hbar \ll \hbar.
\end{equation} 
Then, there exists a sequence of integrable counterterms $R_\hbar = R_\hbar(V) \in S_W(\R^d)$ such that $\Vert R_\hbar \Vert_W \lesssim \Vert V \Vert_{W,\rho}$, uniformly in $\hbar \in (0,1]$,  so that \eqref{e:stability_spectrum}, \eqref{e:semiclassical_measures}, and \eqref{e:quantum_limits} hold.
\end{teor}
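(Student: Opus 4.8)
\emph{Strategy.} The plan is to run a quantum KAM iteration in the spirit of Govin et al.\ \cite{Gov98} directly on the operator $\widehat H^{(0)}:=\widehat L_{\omega,\hbar}+\varepsilon_\hbar\Op_\hbar(V)$, constructing the conjugating unitary $U_\hbar$ and the counterterm $R_\hbar$ \emph{simultaneously}, so that at the $n$-th stage one has an identity
\[
U_n^{*}\big(\widehat L_{\omega,\hbar}+\varepsilon_\hbar\Op_\hbar(V-R_n)\big)U_n=\widehat L_{\omega,\hbar}+\varepsilon_\hbar\Op_\hbar\big(g^{(n)}\big),
\]
where $R_n\in S_W(\R^d)$ is a partial counterterm depending only on $\xi$ and $g^{(n)}\in\mathcal A_{W,\rho_n}(T^*\mathbb T^d)$ is a remainder, which will turn out to be superexponentially small. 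The construction preserves reality, so each $\Op_\hbar(\chi^{(n)})$ is self-adjoint; passing to the limit, $U_n\to U_\hbar$ (unitary), $R_n\to R_\hbar$ and $\|\Op_\hbar(g^{(n)})\|_{\mathcal L(L^2)}\to 0$, which gives $U_\hbar^{*}\widehat Q_\hbar U_\hbar=\widehat L_{\omega,\hbar}$. From this \eqref{e:stability_spectrum} is immediate, and $\{U_\hbar e_k\}_{k\in\mathbb Z^d}$ is an orthonormal basis of eigenfunctions of $\widehat Q_\hbar$ with eigenvalues $\hbar\,\omega\cdot k$.

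The elementary step is the usual one: for $\widehat L_{\omega,\hbar}+\varepsilon_\hbar\Op_\hbar(g)$ with $g=\langle g\rangle+\widetilde g$ ($\langle\cdot\rangle$ the $x$-average, $\widetilde g$ of zero $x$-mean), I would conjugate by $e^{i\Op_\hbar(\chi)/\hbar}$ with $\chi$ solving the homological equation $\omega\cdot\partial_x\chi=-\varepsilon_\hbar\widetilde g$, i.e.\ $\widehat\chi(k,\xi)=(i\,\omega\cdot k)^{-1}\varepsilon_\hbar\,\widehat g(k,\xi)$ for $k\neq 0$. Two points make this work with the minimal regularity of Theorem~\ref{t:main_result_3}. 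First, because $\mathcal L_\omega(\xi)=\omega\cdot\xi$ is linear, the Moyal bracket collapses to the Poisson bracket and $\tfrac{i}{\hbar}[\widehat L_{\omega,\hbar},\Op_\hbar(\chi)]=\Op_\hbar(\omega\cdot\partial_x\chi)$ \emph{exactly}, so the homological equation is solved with no $O(\hbar)$ error and, by \eqref{e:diophantine_condition2}, at the sole cost of a loss of analyticity in $x$: $\|\chi\|_{W,\rho-\delta}\lesssim\varepsilon_\hbar\,\varsigma^{-1}\delta^{-(\gamma-1)}\|g\|_{W,\rho}$. Second, division by the scalar $\omega\cdot k$ does not act on the $\xi$-variable, so the only $\xi$-regularity ever used is membership in Sjöstrand's algebra $S_W$, which is stable under the Moyal product uniformly in $\hbar$ (this is where \cite{Sjo95}, \cite{Sjo94} and the torus version of the $L^2$-bound $\|\Op_\hbar(a)\|_{\mathcal L(L^2)}\lesssim\|a\|_{W,\rho}$, cf.\ Lemma~\ref{l:analytic_calderon_vaillancourt_torus}, enter). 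Expanding the conjugation by the Lie series, folding the $x$-independent part $\langle g\rangle$ into the counterterm, and estimating the remaining commutators through the algebra property, the new remainder obeys $\|g^{(\mathrm{new})}\|_{W,\rho-\delta}\lesssim(\varepsilon_\hbar/\hbar)\,\varsigma^{-1}\delta^{-(\gamma-1)}\|g\|_{W,\rho}^{2}$. The effective small parameter is therefore $\varepsilon_\hbar/\hbar$, which under the hypothesis $\varepsilon_\hbar\ll\hbar$ is as small as we please; hence, with the usual geometric choice $\delta_n\asymp 2^{-n}\rho$, $\sum_n\delta_n<\rho$, the quadratic iteration converges for \emph{any} $V\in\mathcal A_{W,\rho}(T^*\mathbb T^d)$, with no smallness assumption on $V$ itself --- this is precisely the gain over Theorem~\ref{t:main_result_2}, where $\varepsilon_\hbar=\hbar$ forced $V$ to be small. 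Since $\|\chi^{(n)}\|_{W,\rho_n}$ carries a factor $\varepsilon_\hbar$ (not $\varepsilon_\hbar/\hbar$) and $\|g^{(n)}\|_{W,\rho_n}\to 0$ superexponentially, $\sum_n\|\chi^{(n)}\|_{W,\rho/2}\lesssim\varepsilon_\hbar$; by the $L^2$-bound this makes $U_\hbar=\prod_n e^{i\Op_\hbar(\chi^{(n)})/\hbar}$ a well-defined unitary whose underlying canonical transformation is $\operatorname{id}+O(\varepsilon_\hbar)$, and $R_\hbar=\lim_n R_n\in S_W(\R^d)$ with $\|R_\hbar\|_W\lesssim\|\langle V\rangle\|_W+\sum_n\|g^{(n)}\|_{W,\rho_n}\lesssim\|V\|_{W,\rho}$.

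The step I expect to be the main obstacle is keeping the counterterm a genuine function of $\xi$ alone. Conjugation does not preserve $x$-independent symbols, so $U_n^{*}\Op_\hbar(\delta R)U_n\neq\Op_\hbar(\delta R)$ and the increment $\delta R_n$ cannot simply be set equal to $\langle g^{(n)}\rangle$. Instead I would determine $\delta R_n\in S_W(\R^d)$ by the auxiliary requirement that the $x$-average of the symbol of $U_n^{*}\Op_\hbar(\delta R_n)U_n$ equal $\langle g^{(n)}\rangle$; since the conjugation error is $O(\varepsilon_\hbar/\hbar)$ relative to $\|\delta R_n\|_W$, this is a contraction and yields $\|\delta R_n\|_W\lesssim\|g^{(n)}\|_{W,\rho_n}$, while the zero-mean part of that conjugated symbol is of the same quadratically small size as $g^{(n+1)}$ and is absorbed into it. This coupling of the normal-form step with the counterterm step is the semiclassical incarnation of the classical renormalization of \cite{Gall82}, \cite{Elia89}, \cite{Gen96}, and it is where the loss of regularity in $\xi$ is most delicate; working in $S_W$ rather than in an analytic class in $\xi$ is exactly what lets the bookkeeping close.

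Finally, it remains to deduce \eqref{e:semiclassical_measures} and \eqref{e:quantum_limits}. For a sequence of normalized eigenfunctions $\Psi_\hbar=U_\hbar e_{k_\hbar}$ of $\widehat Q_\hbar$ with $\hbar\,\omega\cdot k_\hbar\to 1$, a standard Egorov argument --- using that the canonical map underlying $U_\hbar$ is $\operatorname{id}+O(\varepsilon_\hbar)$, so that $\|U_\hbar^{*}\Op_\hbar(a)U_\hbar-\Op_\hbar(a)\|_{\mathcal L(L^2)}=O(\varepsilon_\hbar)+O(\hbar)\to 0$ for every test symbol $a$ --- reduces the Wigner limits of $(\Psi_\hbar)$ to those of $(e_{k_\hbar})$. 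The matrix elements $\langle e_{k_\hbar},\Op_\hbar(a)e_{k_\hbar}\rangle$ converge to $\tfrac1{(2\pi)^d}\int_{\mathbb T^d}a(x,\xi)\,dx$ whenever $\hbar k_\hbar\to\xi$ (and then necessarily $\xi\in\mathcal L_\omega^{-1}(1)$), and to $0$ whenever $|\hbar k_\hbar|\to\infty$ along the noncompact hyperplane $\mathcal L_\omega^{-1}(1)$; conversely each such limit is realized by a suitable choice of $(k_\hbar)$. This gives \eqref{e:semiclassical_measures}, and pushing the probability measures among them forward to $\mathbb T^d$ (the zero measure being discarded) gives \eqref{e:quantum_limits}. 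In sum, the argument is that of Theorem~\ref{t:main_result_2}, with the analytic class $\mathcal A_{s,\rho}$ in the $\xi$-variable replaced by Sjöstrand's $S_W$ and the smallness of $V$ replaced by that of $\varepsilon_\hbar/\hbar$.
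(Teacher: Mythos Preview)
Your proposal is correct and follows essentially the same route as the paper: the same KAM iteration built on the exact homological equation for $\widehat L_{\omega,\hbar}$, the same contraction/Neumann-series device to pin down the $\xi$-only counterterm increment (this is the paper's Lemma~\ref{l:inverse_map}), and the same Egorov-type comparison $\mathcal U_\hbar\Op_\hbar(a)\mathcal U_\hbar^{*}-\Op_\hbar(a)\to 0$ to transport the Wigner computation from $e_k$ to $\Psi_{k,\hbar}$ (the paper's Lemma~\ref{l:last_lemma}). The only noteworthy differences are bookkeeping: the paper organises the iteration with a \emph{linear} contraction rate $\|V_{n+1}\|\le\alpha\|V_n\|$ (choosing $\sigma_n$ so that the Diophantine loss exactly eats half the gain), whereas you keep the underlying quadratic estimate and take $\delta_n\asymp 2^{-n}\rho$; and in the final Egorov step the rate one actually gets in the Sj\"ostrand class is $O(\varepsilon_\hbar/\hbar)$ (from $\|[F,a]_\hbar\|_{W,\rho}\le 2\|F\|_{W,\rho}\|a\|_{W,\rho}$, no $\hbar$-gain) rather than the $O(\varepsilon_\hbar)+O(\hbar)$ you quote from a classical Egorov picture --- but since $\varepsilon_\hbar\ll\hbar$ this is immaterial.
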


\subsection*{Acknowledgments} The author thanks Fabricio Macià, Gabriel Rivière and Stéphane Nonnenmacher for useful discussions while preparing this manuscript. This work has been supported by a predoctoral grant from Fundación La Caixa - Severo Ochoa International Ph.D. Program at the Ins\-tituto de Ciencias Matemáticas (ICMAT-CSIC-UAM-UC3M-UCM).

\section{Proof of Theorem \ref{t:main_result_1}}

To prove Theorem \ref{t:main_result_1}, we will use a classical KAM result due to Moser \cite{Mos67} about small perturbations of constant vector fields on the torus. Precisely, we will recall a work of Pöschel \cite{Pos11} that simplifies the KAM iteration argument. On the other hand, we will use Egorov's theorem to establish the classic-quantum correspondence to obtain our result in terms of the quantum system. Our approach is similar to that of Bambusi et. al. in \cite{Bam17}, in which they obtain reducibility for a class of perturbations of the quantum harmonic oscillator.

The proof of Theorem \ref{t:main_result_1} is divided in two parts. First, we prove that the family $\widehat{P}_{\varphi(\omega), \hbar}$ is unitarily equivalent to $\widehat{L}_{\omega, \hbar}$. This shows the stability of the spectrum along this family. The following holds:

\begin{teor}
\label{KAM_Theorem}
Let $s,\rho > 0$ and $V \in \mathscr{L}_{s,\rho}$ be real analytic verifying \eqref{e:small_condition}. Then, there exist a real change of frequencies $\varphi: \Omega \to \Omega_\rho$ satisfying
$$
\sup_{\omega \in \Omega} \vert \varphi(\omega) - \omega \vert \leq C_1 \vert V \vert_{s,\rho},
$$ 
and a family of unitary operators $ \Omega \ni \omega \longmapsto \mathcal{U}_{\omega}$ on $L^2(\mathbb{T}^d)$ such that
\begin{equation}
\mathcal{U}_{\omega} \, \widehat{P}_{\varphi(\omega), \hbar} \, \mathcal{U}_{\omega}^* = \widehat{L}_{\omega,\hbar}.
\end{equation}
\end{teor}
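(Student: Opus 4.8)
The plan is to reduce Theorem \ref{KAM_Theorem} to a statement purely about the classical Hamiltonian $P_\omega(x,\xi) = \mathcal{L}_\omega(\xi) + v(x;\omega)\cdot\xi$ via a KAM normal form, and then to quantize the resulting symplectic conjugacy using an exact Egorov-type argument. First I would invoke the analytic KAM theorem for perturbations of constant vector fields on the torus (the cited result of Moser \cite{Mos67} and Pöschel \cite{Pos11}, to be stated as Theorem \ref{poschel} below): since $V\in\mathscr{L}_{s,\rho}$ is small in the $|\cdot|_{s,\rho}$ norm and the frequencies in $\Omega$ are Diophantine in the sense of \eqref{e:diophantine_condition2}, there is a real change of frequencies $\varphi:\Omega\to\Omega_\rho$ with $\sup_\omega|\varphi(\omega)-\omega|\le C_1|V|_{s,\rho}$ and, for each $\omega\in\Omega$, an analytic diffeomorphism $\theta_\omega:\mathbb{T}^d\to\mathbb{T}^d$ homotopic to the identity conjugating the flow of the vector field $X_v(\varphi(\omega)) = \varphi(\omega) + v(\cdot;\varphi(\omega))$ to the constant flow of $\omega$; equivalently, the symplectic lift $\Theta_\omega(x,\xi) = (\theta_\omega(x), [(\partial_x\theta_\omega(x))^T]^{-1}\xi)$ satisfies $\Theta_\omega^* P_{\varphi(\omega)} = \mathcal{L}_\omega$. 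The quantitative bound on $\theta_\omega - \mathrm{id}$ comes with the KAM estimate.

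Next I would upgrade this classical conjugacy to a unitary equivalence of the quantizations. The key point is that $\widehat{P}_{\varphi(\omega),\hbar}$ and $\widehat{L}_{\omega,\hbar}$ are both first-order selfadjoint differential operators that are \emph{exactly} the quantizations of their (linear-in-$\xi$) symbols, with the divergence correction $-\tfrac{i\hbar}{2}\operatorname{Div}v$ making $\widehat{P}$ selfadjoint; and a symplectomorphism of $T^*\mathbb{T}^d$ of the form $\Theta_\omega$ lifted from a torus diffeomorphism is implemented exactly (not just to leading order in $\hbar$) by a unitary change of variables. Concretely, define $\mathcal{U}_\omega$ on $L^2(\mathbb{T}^d)$ by the pullback-with-Jacobian-factor $(\mathcal{U}_\omega u)(x) = u(\theta_\omega(x))\sqrt{|\det d\theta_\omega(x)|}$, which is unitary because $\theta_\omega$ is a diffeomorphism; I would then compute directly that conjugating the transport operator $\widehat{P}_{\varphi(\omega),\hbar} = \Op_\hbar\big(\mathcal{L}_{\varphi(\omega)} + v(\cdot;\varphi(\omega))\cdot\xi\big)$ by $\mathcal{U}_\omega$ produces precisely $\widehat{L}_{\omega,\hbar} = \omega\cdot\hbar D_x$. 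This is the quantum counterpart of the identity $\Theta_\omega^* P_{\varphi(\omega)} = \mathcal{L}_\omega$: the change of variables transports the vector field $X_v(\varphi(\omega))$ to the constant field $\omega$, and the $\sqrt{|\det d\theta_\omega|}$ weight in $\mathcal{U}_\omega$ is exactly what turns the transported first-order operator back into its selfadjoint (divergence-free) form, absorbing the lower-order term. One should check here the homotopy/periodicity compatibility ($\theta_\omega$ homotopic to the identity guarantees $\mathcal{U}_\omega$ is well-defined on the torus and preserves the lattice structure, so no extra phase ambiguity arises).

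The main obstacle I expect is not the quantization step, which is an exact and essentially algebraic computation for transport operators, but rather the bookkeeping in the KAM iteration: one must run Pöschel's scheme uniformly over the Cantor set $\Omega$ in the frequency parameter, keeping the analyticity strips $s$ and $\rho$ controlled and ensuring that the limiting objects $\varphi$ and $\theta_\omega$ are genuine (real-analytic in $x$, and the frequency map is a homeomorphism of $\Omega$ onto its image). The smallness constant $\varepsilon$ and the constants $C_1, C_2$ must be seen to depend only on $s,\rho,\gamma,\varsigma$; this is where I would lean most heavily on the precise statement of the cited KAM theorem \cite{Pos11} rather than reproving it. A secondary subtlety is that $\widehat{L}_{\omega,\hbar}$ is non-elliptic, so one has to confirm that the unitary conjugacy genuinely identifies the \emph{point} spectra (as opposed to merely intertwining the operators on a dense domain) — but since $\mathcal{U}_\omega$ is a bijective unitary and both operators are essentially selfadjoint on $\mathcal{C}^\infty(\mathbb{T}^d)$, the spectra (and their point parts) coincide, which then yields $\operatorname{Sp}^p(\widehat{P}_{\varphi(\omega),\hbar}) = \{\hbar\,\omega\cdot k : k\in\mathbb{Z}^d\}$ as claimed.
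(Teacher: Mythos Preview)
Your proposal is correct and follows essentially the same approach as the paper: invoke the classical KAM theorem of Moser--P\"oschel (stated as Theorem~\ref{poschel}) to obtain $\varphi$ and $\theta_\omega$ with $\Theta_\omega^* P_{\varphi(\omega)} = \mathcal{L}_\omega$, then define $\mathcal{U}_\omega\psi(x) = \sqrt{|\det d\theta_\omega(x)|}\,\psi(\theta_\omega(x))$ and use exact Egorov (valid because the symbols are linear in $\xi$ and $\Theta_\omega$ is the symplectic lift of a base diffeomorphism) to conclude the unitary equivalence. The paper's proof is a one-line application of exact Egorov once Theorem~\ref{poschel} is in hand; your additional remarks on selfadjoint domains and homotopy are not needed for the argument but are correct observations.
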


\begin{remark}
In particular, if $V = 0$ then $\varphi = \operatorname{Id}$ and $\mathcal{U}_{\omega} = \operatorname{Id}$.
\end{remark}

In the second part, we will compare the semiclassical measures and quantum limits of $\widehat{P}_{\varphi(\omega),\hbar}$ with those of $\widehat{L}_{\omega,\hbar}$. 

\subsection{A classical KAM theorem}

We first recall the result of Pöschel \cite{Pos11}. We will use the Diophantine property \eqref{e:diophantine_condition2} for the sake of simplicity, but the more general \textit{Rüssmann condition} considered in \cite{Pos11} would be valid as well.

\begin{teor}[\cite{Pos11}]
\label{poschel}
Let $\Omega \subset \R^d$ be a compact set of \textit{strongly nonresonant frequencies}, that is, $\omega \in \Omega$ satisfies \eqref{e:diophantine_condition2}. Let $s,\rho > 0$ and $V \in \mathscr{L}_{s,\rho}$ such that
\begin{equation}
\label{e:small_condition_generalized}
\vert V \vert_{s,\rho} =\varepsilon < \frac{\rho}{16} \leq \frac{\varsigma}{32 \lambda^\gamma},
\end{equation}
where the constants $\varsigma$ and $\gamma$ are defined in \eqref{e:diophantine_condition2}, and $\lambda$ is so large that
\begin{equation}
\label{large_lambda_2}
r := 8 \gamma \left(\frac{ 1 +\log \lambda  }{\lambda} \right) < \frac{s}{2}.
\end{equation}
Then there exists a real map $\varphi : \Omega \to \Omega_\rho$ and, for every $\omega \in \Omega$, a real analytic diffeomorphism $\theta_\omega$ of the $d$-torus such that, denoting
\begin{equation*}
\Theta_\omega(x,\xi) = \big( \theta_\omega(x), [(\partial_x \theta_\omega(x))^T]^{-1} \xi \big),
\end{equation*}
the following holds:
\begin{equation}
\label{e:conjugation}
\big( \mathcal{L}_{\varphi(\omega)} + V(\cdot; \varphi(\omega)) \big) \circ \Theta_\omega  = \mathcal{L}_\omega.
\end{equation}
Moreover, 
\begin{equation}
\label{e:estimates_conjugation}
\sup_{\omega \in \Omega} \vert \varphi(\omega) - \omega \vert \leq  \varepsilon , \quad \sup_{\omega \in \Omega} \sup_{x \in \mathbb{T}^d} \vert \theta_\omega(x) - x \vert \leq  r \, \varsigma^{-1} \lambda^\gamma \varepsilon .
\end{equation}
\end{teor}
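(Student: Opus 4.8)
The statement is the classical KAM theorem of Moser~\cite{Mos67} in the streamlined parametrized form due to Pöschel~\cite{Pos11}, so the proof is the one in~\cite{Pos11}; here is how I would organize it. The plan is to reduce \eqref{e:conjugation} to a conjugacy statement for vector fields on $\mathbb{T}^d$, and then run a quadratically convergent KAM iteration with modifying terms. \textbf{Reduction.} Since $\mathcal{L}_\omega(\xi)+V(x,\xi;\omega)=\xi\cdot(\omega+v(x;\omega))$ is linear in the fibre variable, its Hamiltonian flow on $T^*\mathbb{T}^d$ projects to the flow of the vector field $X_\omega:=\omega+v(\cdot;\omega)$ on $\mathbb{T}^d$, and for the cotangent lift $\Theta_\omega$ of a torus diffeomorphism $\theta_\omega$ one checks directly that $\Theta_\omega^*(\xi\cdot Y)=\xi\cdot(\theta_\omega^*Y)$. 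Hence \eqref{e:conjugation} is equivalent to the conjugacy of vector fields $(\theta_\omega)_*\,\omega=X_{\varphi(\omega)}$ (the $\omega$ on the left being the constant field), and the whole problem becomes: straighten the analytic family $\{X_\omega\}_\omega$ to constant fields, using the freedom to shift the frequency parameter, which is what produces the map $\varphi$.

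\textbf{The KAM step.} Following~\cite{Pos11}, one builds decreasing widths $s=s_0>s_1>\cdots\to s_\infty\geq s/2$ and $\rho=\rho_0\geq\rho_1\geq\cdots$, near-identity real-analytic diffeomorphisms $\theta^{(n)}=\mathrm{id}+u^{(n)}$ on $D_{s_{n+1}}$, and frequency corrections $w^{(n)}$ on $\Omega_{\rho_n}$, so that the successively conjugated fields $X^{(n)}_\omega=\omega+v^{(n)}(\cdot;\omega)$ have $\varepsilon_n:=|v^{(n)}|_{s_n,\rho_n}$ decaying super-exponentially. At each step the conjugacy equation linearizes to the cohomological equation
$$
\omega\cdot\nabla u^{(n)}(x)=\widehat{v^{(n)}}(0;\omega)-v^{(n)}(x;\omega),
$$
where the \emph{modifying term} $w^{(n)}(\omega):=\widehat{v^{(n)}}(0;\omega)$ is split off and fed into the frequency; the remaining mean-zero equation is solved mode by mode, $\widehat{u^{(n)}}(k)=-\widehat{v^{(n)}}(k)/(i\,\omega\cdot k)$ for $k\neq0$. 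The Diophantine bound \eqref{e:diophantine_condition2} gives $|\widehat{u^{(n)}}(k)|\leq\varsigma^{-1}|k|^{\gamma-1}|\widehat{v^{(n)}}(k)|$, and summing against $e^{|k|s_{n+1}}=e^{|k|s_n}e^{-|k|r_n}$ with $r_n:=s_n-s_{n+1}$ yields
$$
|u^{(n)}|_{s_{n+1},\rho_n}\;\lesssim\;\varsigma^{-1}\,r_n^{-(\gamma-1)}\,\varepsilon_n .
$$
Conjugating $X^{(n)}$ by $\theta^{(n)}$ (with the frequency shifted by $w^{(n)}$) and shrinking $\rho$ by a definite amount removes the mean and the first-order term, leaving a purely quadratic remainder $\varepsilon_{n+1}\lesssim\varsigma^{-2}\,r_n^{-2\gamma}\,\varepsilon_n^2$, provided $\varepsilon$ is small enough that each $\theta^{(n)}$ is genuinely a diffeomorphism and the shrunken domains stay nonempty.

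\textbf{Convergence, parameters, estimates, and the main difficulty.} One chooses the geometric scales so that the quadratic gain dominates the small-divisor losses; this is exactly what the hypotheses \eqref{e:small_condition_generalized}--\eqref{large_lambda_2} encode, with $\lambda$ fixing the decay rate of the $r_n$, the constraint $r<s/2$ guaranteeing $\sum_n r_n<s/2$ (hence $s_\infty\geq s/2$), and $\varepsilon<\rho/16\leq\varsigma/(32\lambda^\gamma)$ guaranteeing $\varepsilon_n\to0$. Then $\theta_\omega:=\lim_n\theta^{(0)}\circ\cdots\circ\theta^{(n)}$ converges to a real-analytic diffeomorphism of $\mathbb{T}^d$ homotopic to the identity, and $\varphi(\omega):=\omega+\sum_n w^{(n)}(\omega)$ converges; passing to the limit in the step equations gives \eqref{e:conjugation}. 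Since every step is analytic in $\omega$ on the shrinking complex neighbourhoods $\Omega_{\rho_n}$ and real on $\Omega$, the limit $\varphi$ is well-defined, real-valued, and Whitney-smooth on the Cantor set $\Omega$. Finally, because $\varepsilon_n$ decays super-exponentially the first step dominates the telescoping sums, giving $\sup_\Omega|\varphi(\omega)-\omega|\leq\sum_n|w^{(n)}|\leq\varepsilon$ and $\sup_\Omega\sup_x|\theta_\omega(x)-x|\leq\sum_n|u^{(n)}|\lesssim\varsigma^{-1}r^{-(\gamma-1)}\varepsilon$, which yields the bound $r\,\varsigma^{-1}\lambda^\gamma\varepsilon$ in \eqref{e:estimates_conjugation} once \eqref{large_lambda_2} is inserted. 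The heart of the matter — and the only genuine obstacle — is the usual KAM balance: each linearized step costs a power $\sim r_n^{-\gamma}$ of the analyticity width because of the small divisors, and one must tune $s_n,\rho_n,r_n$ and the smallness of $\varepsilon$ so that the quadratic convergence $\varepsilon_{n+1}\lesssim\varepsilon_n^2$ outruns these losses while keeping $s_\infty$ bounded below and all the $\theta^{(n)}$ genuine diffeomorphisms.
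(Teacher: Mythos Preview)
Your proposal is correct and matches the paper's treatment: the paper does not give its own proof of this theorem but simply quotes it from P\"oschel~\cite{Pos11}, and your sketch is a faithful outline of the KAM iteration carried out there (reduction to a vector-field conjugacy on $\mathbb{T}^d$, cohomological equation with modifying term, small-divisor loss balanced against quadratic convergence). There is nothing to add.
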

This means that, for every $\omega'$ in the Cantor set $\varphi(\Omega)$, the Hamiltonian $P_{\omega'}$ is canonically conjugate to the unperturbed one $\mathcal{L}_\omega$, where $\omega = \varphi^{-1}(w')$, and hence the energy level $P_{\omega'}^{-1}(1)$ is foliated by invariant tori with frequency $\omega$.

Using this result, the proof of Theorem \ref{KAM_Theorem} is straightforward in terms of Egorov's  theorem:

\begin{proof}[Proof of Theorem \ref{KAM_Theorem}]
We define the unitary operator $\mathcal{U}_\omega : L^2(\mathbb{T}^d) \to L^2(\mathbb{T}^d)$ by 
\begin{equation}
\label{d:unitary_operator}
\mathcal{U}_\omega \psi(x) :=  \sqrt{ \vert \det d\theta_\omega(x) \vert} \, \psi \big( \theta_\omega(x) \big) .
\end{equation}
By Egorov's theorem, which is exact in this case, we conclude that
$$
\mathcal{U}_{\omega} \, \widehat{P}_{\varphi(\omega), \hbar} \, \mathcal{U}_{\omega}^* = \Op_\hbar\big( (\mathcal{L}_{\varphi(\omega)} + V(\cdot, \varphi(\omega))) \circ \Theta_\omega  \big) = \Op_\hbar(\mathcal{L}_\omega) = \widehat{L}_{\omega,\hbar}.
$$
\end{proof}

\subsection{Quantum limits and semiclassical measures}

We next complete the proof of Theorem \ref{t:main_result_1}. 

\begin{prop}
\label{p:non_perturbe_operator}
Let $\omega \in \R^d$ be linearly independent over the rationals\footnote{That is, if $k \in \mathbb{Z}^d$ satisfies $\omega \cdot k = 0$ then $k = 0$.}. Then 
$$
\mathcal{M}(\widehat{L}_{\omega,\hbar}) = \bigcup_{\xi \in \mathcal{L}_\omega^{-1}(1)} \big \{ \mathfrak{h}_{\mathbb{T}^d \times \{ \xi \}} \big \} \cup \{0 \}, \quad \mathcal{N}(\widehat{L}_{\omega,\hbar}) = \left \{ \frac{1}{(2\pi)^d}dx \right \}.
$$  
\end{prop}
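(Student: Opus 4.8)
The plan is to compute the Wigner distributions of the eigenfunctions of $\widehat{L}_{\omega,\hbar}$ explicitly, since they are nothing but (scaled) exponentials, and then identify all weak-$\star$ limits. First I would record that when $\omega$ is rationally independent, $\widehat{L}_{\omega,\hbar} = \omega \cdot \hbar D_x$ has simple point spectrum $\{\hbar\,\omega\cdot k : k \in \mathbb{Z}^d\}$ with one-dimensional eigenspaces spanned by $e_k(x) = (2\pi)^{-d/2}e^{ik\cdot x}$. So any normalized eigenfunction with $\lambda_\hbar \to 1$ is (up to a phase) $\Psi_\hbar = e_{k_\hbar}$ with $\hbar\,\omega\cdot k_\hbar \to 1$; in particular $|k_\hbar| \to \infty$ like $1/\hbar$. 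Next I would compute, for $a \in \mathcal{C}_c^\infty(T^*\mathbb{T}^d)$,
$$
W^\hbar_{e_k}(a) = \langle e_k, \Op_\hbar(a) e_k\rangle_{L^2(\mathbb{T}^d)} = \frac{1}{(2\pi)^d}\int_{\mathbb{T}^d} a\!\left(x, \hbar k + \tfrac{\hbar \ell}{2}\right)\Big|_{\ell=0} \, dx + O(\hbar),
$$
more precisely, expanding $a$ in Fourier series in $x$ and using the Weyl quantization formula on the torus, $\langle e_k, \Op_\hbar(a)e_k\rangle = \sum_{\ell} \widehat{a}(\ell,\hbar k)\langle e_k, e_{k+\ell}\rangle$ picks out only $\ell = 0$, giving exactly $W^\hbar_{e_k}(a) = \frac{1}{(2\pi)^d}\int_{\mathbb{T}^d} a(x,\hbar k_\hbar)\,dx$ (with the Weyl convention the midpoint is $\hbar k$, so this is exact).

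With this identity the rest is a routine limiting argument. Given a sequence $(\hbar_n)$ with eigenfunctions $e_{k_n}$, $\hbar_n \omega\cdot k_n \to 1$, two cases occur. If $(\hbar_n k_n)$ is bounded, pass to a subsequence so $\hbar_n k_n \to \xi_\infty$; then necessarily $\omega\cdot \xi_\infty = 1$, i.e. $\xi_\infty \in \mathcal{L}_\omega^{-1}(1)$, and $W^{\hbar_n}_{e_{k_n}}(a) \to \frac{1}{(2\pi)^d}\int_{\mathbb{T}^d} a(x,\xi_\infty)\,dx = \int a\, d\mathfrak{h}_{\mathbb{T}^d\times\{\xi_\infty\}}$, so the semiclassical measure is $\mathfrak{h}_{\mathbb{T}^d\times\{\xi_\infty\}}$. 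If $(\hbar_n k_n)$ is unbounded, then along a further subsequence $|\hbar_n k_n| \to \infty$, and since $a$ has compact support in $\xi$, $a(x,\hbar_n k_n) = 0$ for $n$ large, so $W^{\hbar_n}_{e_{k_n}}(a) \to 0$ for every test function, giving the zero measure. Conversely, every $\xi \in \mathcal{L}_\omega^{-1}(1)$ is attained: by density of $\{\hbar k : \hbar \in (0,1],\ k\in\mathbb{Z}^d\}$ one can choose $\hbar_n \to 0$ and $k_n \in \mathbb{Z}^d$ with $\hbar_n k_n \to \xi$; then automatically $\hbar_n\omega\cdot k_n \to \omega\cdot\xi = 1$, so $e_{k_n}$ is a legitimate sequence of eigenfunctions with $\lambda_{\hbar_n}\to 1$ and semiclassical measure $\mathfrak{h}_{\mathbb{T}^d\times\{\xi\}}$; and the zero measure is realized e.g. by taking $\hbar_n k_n \to \infty$ along directions where $\omega\cdot(\hbar_n k_n) \to 1$ remains possible (or, more carefully, choosing $k_n$ with $\hbar_n\omega\cdot k_n \to 1$ but $|\hbar_n k_n|\to\infty$, which is possible as soon as $d\ge 2$; for $d=1$ the energy constraint forces $\hbar_n k_n \to 1/\omega$ and the zero measure is not attained, so one should note this dimensional subtlety, or observe that $\mathcal{L}_\omega^{-1}(1)$ is unbounded for $d \ge 2$). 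Finally, for the quantum limits: since $\mathfrak{h}_{\mathbb{T}^d\times\{\xi\}}$ is a probability measure projecting to the normalized Lebesgue measure $\frac{1}{(2\pi)^d}dx$ on $\mathbb{T}^d$ for every $\xi$, and the only other element of $\mathcal{M}$ is the zero measure (which yields no quantum limit), we get $\mathcal{N}(\widehat{L}_{\omega,\hbar}) = \{\frac{1}{(2\pi)^d}dx\}$.

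The only genuinely delicate point is the bookkeeping around the noncompactness of $\mathcal{L}_\omega^{-1}(1)$ and the resulting possibility of escape of mass: one must check both that mass can escape (producing $0 \in \mathcal{M}$) and that every finite $\xi$ on the energy shell is actually reachable as a limit of $\hbar k$ — this is where the rational independence of $\omega$ (ensuring simplicity of the spectrum, hence that eigenfunctions really are pure exponentials) and the density of $\{\hbar k\}$ in $\R^d$ as $\hbar\to 0$ are used. Everything else is the exact Egorov/Weyl-calculus computation $W^\hbar_{e_k}(a) = \frac{1}{(2\pi)^d}\int_{\mathbb{T}^d} a(x,\hbar k)\,dx$, which I expect to present with a short direct calculation rather than invoking a general theorem.
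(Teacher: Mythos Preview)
Your proposal is correct and follows essentially the same route as the paper: compute $W^\hbar_{e_k}(a)=\frac{1}{(2\pi)^d}\int_{\mathbb{T}^d}a(x,\hbar k)\,dx$ exactly, then split into the cases $\hbar_n k_n\to\xi\in\mathcal{L}_\omega^{-1}(1)$ and $|\hbar_n k_n|\to\infty$. Two small remarks: the paper handles $\mathcal{N}(\widehat{L}_{\omega,\hbar})$ even more directly by just observing $|e_k(x)|^2=(2\pi)^{-d}$ for every $k$, rather than projecting the semiclassical measure; and your observation about $d=1$ (where $\mathcal{L}_\omega^{-1}(1)$ is a single point and no mass can escape) is a genuine subtlety that the paper's proof glosses over when it asserts $\mathcal{L}_\omega^{-1}(1)$ is non-compact.
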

\begin{proof}

We recall that the point-spectrum of $\widehat{L}_{\omega,\hbar}$ is given by
$$
\operatorname{Sp}^p_{L^2(\mathbb{T}^d)} \big( \widehat{L}_{\omega, \hbar} \big) = \{ \lambda_{k,\hbar} = \hbar \, \omega \cdot k \, : \, k \in \mathbb{Z}^d \}.
$$
Each eigenvalue has multiplicity equal to one due to the nonresonant condition on $\omega$. Moreover, the set of eigenfunctions is just given by
$$
e_k(x) = \frac{e^{ik \cdot x}}{(2\pi)^{d/2}}, \quad k \in \mathbb{Z}^d.
$$
By a direct calculation using identity \eqref{e:wigner_distribution} for the Wigner distribution on the torus, for every test function $a \in \mathcal{C}_c^\infty(T^*\mathbb{T}^d)$:
$$
W^\hbar_{e_k}(a) = \frac{1}{(2\pi)^d} \int_{\mathbb{T}^d} a(x, \hbar k) \, dx, \quad k \in \mathbb{Z}^d.
$$
Equivalently, $W^\hbar_{e_k} = \mathfrak{h}_{\mathbb{T}^d \times \{\hbar k \}}$. Given a sequence
\begin{equation}
\label{eigenvalues}
\lambda_{k_j,\hbar_j} = \hbar_j \, \omega \cdot k_j \to 1, \quad \text{as }\hbar_j \to 0,
\end{equation} 
if $\hbar_j k_j \to \xi \in \R^d$ then clearly $\xi \in \mathcal{L}_\omega^{-1}(1)$. In other words,  $\mathfrak{h}_{\mathbb{T}^d \times \{ \xi \}} \in \mathcal{M}(\widehat{L}_{\omega, \hbar})$. Reciprocally, any point $\xi \in \mathcal{L}_\omega^{-1}(1)$ can be obtained as the limit of a sequence $(\hbar_j k_j)$ satisfying \eqref{eigenvalues}, and hence any measure $\mathfrak{h}_{\mathbb{T}^d \times \{ \xi \}}$ is the semiclassical measure associated to some sequence of eigenfunctions. Finally, since $\mathcal{L}_\omega^{-1}(1)$ is not compact, there are also sequences $(\hbar_j k_j)$ satisfying \eqref{eigenvalues}  such that
$$
\lim_{j \to \infty} \vert \hbar_j k_j \vert = \infty.
$$
For those sequences, we have $\mu = 0$. Thus $0 \in \mathcal{M}(\widehat{L}_{\omega, \hbar})$.

The second assertion is trivial since
$$
\vert e_k(x) \vert^2 = \frac{1}{(2\pi)^d}, \quad k \in \mathbb{Z}^d.
$$
\end{proof}

We are now in position to conclude the proof of Theorem \ref{t:main_result_1}. Since 
$$
\widehat{L}_{\omega,\hbar} = \mathcal{U}_{\omega} \, \widehat{P}_{\varphi(\omega), \hbar} \, \mathcal{U}_{\omega}^*,
$$
where $\mathcal{U}_{\omega}$ is unitary on $L^2(\mathbb{T}^d)$, the spectrum of $\widehat{P}_{\varphi(\omega), \hbar}$ is the same as the spectrum of $\widehat{L}_{\omega,\hbar}$, and the eigenfunctions are precisely
$$
\Psi_{k} =  \mathcal{U}_{\omega}^* e_k, \quad k \in \mathbb{Z}^d.
$$
Thus, applying Egorov's theorem,
$$
W^\hbar_{\Psi_{k}}(a) = W^\hbar_{e_k}( a \circ \Theta_\omega) + O(\hbar), \quad a \in \mathcal{C}_c^\infty(T^*\mathbb{T}^d),
$$
and similarly, using \eqref{d:unitary_operator},
\begin{equation}
\label{e:egorov_for_quantum_limits}
\int_{\mathbb{T}^d} b(x) \vert \Psi_{k}(x) \vert^2 dx =  \int_{\mathbb{T}^d} b \circ \theta_\omega(x) \vert e_k(x) \vert^2 dx, \quad b \in \mathcal{C}^\infty(\mathbb{T}^d).
\end{equation}
Then the proof of Theorem \ref{t:main_result_1} reduces to the proof of Proposition \ref{p:non_perturbe_operator}.

\section{Proof of Theorems \ref{t:main_result_2} and \ref{t:main_result_3}}

The main ingredient in the proof of Theorem \ref{t:main_result_2} is the following quantum version of the renormalization problem:
\begin{teor}
\label{KAM_theorem_precise}
Let $\omega \in \R^d$ be a strongly non resonant frequency satisfying \eqref{e:diophantine_condition2}, and let $V \in \mathcal{A}_{s,\rho}(T^*\mathbb{T}^d)$  for some fixed $s,\rho>0$. Assume that $\varepsilon_\hbar = \hbar$, and
\begin{equation}
\label{e:small_condition3}
\Vert V \Vert_{s,\rho} \leq \frac{\varsigma}{64}\left( \frac{\sqrt{\rho}}{2(\gamma - 1)} \right)^{2(\gamma-1)}.
\end{equation}
Then there exist a sequence of unitary operators $\mathcal{U}_\hbar : L^2(\mathbb{T}^d) \to L^2(\mathbb{T}^d)$, and a sequence of counterterms $R_\hbar \in \mathcal{A}_{s}(\R^d)$ such that \begin{equation}
\label{L2_conjugation2}
\mathcal{U}_\hbar \, \big( \widehat{L}_{\omega,\hbar} + \varepsilon_\hbar \Op_\hbar(V - R_\hbar) \big)\,  \mathcal{U}_\hbar^* = \widehat{L}_{\omega,\hbar}.
\end{equation}
Moreover,
$$
\Vert R_\hbar \Vert_{\mathcal{A}_s(\R^d)} \leq 2 \Vert V \Vert_{s,\rho}, \quad \forall \hbar \in (0,1].
$$ 
\end{teor}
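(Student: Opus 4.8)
The plan is to construct $\mathcal{U}_\hbar$ and $R_\hbar$ by a quadratically convergent normal-form iteration with counterterms, in the spirit of Govin et al., carried out at the level of Weyl symbols. The basic step conjugates by a semiclassical unitary $e^{i\Op_\hbar(G)}$, $G\in\mathcal{A}_{s,\rho}(T^*\mathbb{T}^d)$. The decisive structural fact is that, since $\mathcal{L}_\omega(\xi)=\omega\cdot\xi$ is linear, the Moyal commutator with $\widehat{L}_{\omega,\hbar}$ is exact,
\[
[\,i\Op_\hbar(G),\,\widehat{L}_{\omega,\hbar}\,] = \hbar\,\Op_\hbar\big(\{G,\mathcal{L}_\omega\}\big) = -\hbar\,\Op_\hbar\big(\omega\cdot\partial_x G\big),
\]
so that, with the scaling $\varepsilon_\hbar=\hbar$, removing the non-resonant part $W-\langle W\rangle_x$ of a perturbation symbol $W$ (where $\langle\,\cdot\,\rangle_x$ denotes the average over $\mathbb{T}^d$) reduces to the homological equation $\omega\cdot\partial_x G = W-\langle W\rangle_x$. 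Using the Diophantine condition \eqref{e:diophantine_condition2} this is solved mode by mode in $x$ by $\widehat{G}(k,\xi)=\widehat{W}(k,\xi)/(i\,\omega\cdot k)$, $k\neq 0$; the small divisors cost a fixed loss $\delta$ in the $x$-analyticity width $\rho$, but --- crucially --- no loss whatsoever in the $\xi$-analyticity width $s$, with $\Vert G\Vert_{s,\rho-\delta}\lesssim \varsigma^{-1}\delta^{-(\gamma-1)}\Vert W\Vert_{s,\rho}$.

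I would set up the iteration so as to keep the reference operator equal to $\widehat{L}_{\omega,\hbar}$ throughout, at the price of updating the counterterm at every step. Concretely, I construct inductively unitaries $\mathcal{U}^{(n)}$, counterterms $\Sigma_n\in\mathcal{A}_s(\R^d)$, and remainders $Y_n\in\mathcal{A}_{s,\rho_n}(T^*\mathbb{T}^d)$ with $\mathcal{U}^{(n)}\big(\widehat{L}_{\omega,\hbar}+\varepsilon_\hbar\Op_\hbar(V-\Sigma_n)\big)\mathcal{U}^{(n)*}=\widehat{L}_{\omega,\hbar}+\varepsilon_\hbar\Op_\hbar(Y_n)$, starting from $\Sigma_0=\langle V\rangle_x$, $Y_0=V-\langle V\rangle_x$, $\mathcal{U}^{(0)}=\mathrm{Id}$. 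At the inductive step I solve the homological equation for $G_{n+1}$ with right-hand side $Y_n-\langle Y_n\rangle_x$, set $\mathcal{U}^{(n+1)}=e^{i\Op_\hbar(G_{n+1})}\mathcal{U}^{(n)}$, and absorb into $\Sigma_{n+1}$ the integrable part that has newly appeared. Expanding $e^{i\Op_\hbar(G_{n+1})}(\cdots)e^{-i\Op_\hbar(G_{n+1})}$ and cancelling the first-order term via the homological equation, the new remainder is built from iterated Moyal brackets of $G_{n+1}$ with $Y_n$ and $\widehat{L}_{\omega,\hbar}$, together with the correction coming from conjugating the counterterm increment by $\mathcal{U}^{(n+1)}$. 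Estimating all of these is where the analyticity bookkeeping flagged in the introduction takes place, and the point is that the Moyal product is bounded on $\mathcal{A}_{s,\rho}$ with no loss: in the Fourier-in-$x$ representation $(e^{ik\cdot x}g)\#(e^{il\cdot x}h)=e^{i(k+l)\cdot x}\,g(\xi+\tfrac{\hbar}{2}l)\,h(\xi-\tfrac{\hbar}{2}k)$, and since a real shift of the $\xi$-argument leaves the $\mathcal{A}_s$-norm invariant while $e^{|k|\rho}$ is submultiplicative, $\Vert a\#b\Vert_{s,\rho}\lesssim\Vert a\Vert_{s,\rho}\Vert b\Vert_{s,\rho}$ with constants independent of $\hbar$. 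Combined with the homological estimate, this gives $\Vert Y_{n+1}\Vert_{s,\rho_{n+1}}\lesssim \varsigma^{-1}\delta_n^{-(\gamma-1)}\Vert Y_n\Vert_{s,\rho_n}^2$ on $\rho_{n+1}=\rho_n-\delta_n$, with no loss in $s$, and likewise $\Vert\Sigma_{n+1}-\Sigma_n\Vert_{\mathcal{A}_s}\lesssim \varsigma^{-1}\delta_n^{-(\gamma-1)}\Vert Y_n\Vert_{s,\rho_n}^2$.

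Choosing the losses $\delta_n$ summable with $\sum_n\delta_n<\rho/2$ and invoking the smallness hypothesis \eqref{e:small_condition3}, the remainders $Y_n$ tend to $0$ super-exponentially, the partial products $\mathcal{U}^{(n)}=e^{i\Op_\hbar(G_n)}\cdots e^{i\Op_\hbar(G_1)}$ converge strongly --- via the Calderón-Vaillancourt bound of Lemma \ref{l:analytic_calderon_vaillancourt_torus}, which turns the symbol estimates into operator-norm control --- to a unitary $\mathcal{U}_\hbar$, and $\Sigma_n$ converges in $\mathcal{A}_s(\R^d)$ to a counterterm $R_\hbar$ with
\[
\Vert R_\hbar\Vert_{\mathcal{A}_s(\R^d)}\le\Vert\langle V\rangle_x\Vert_{\mathcal{A}_s}+\sum_{n\ge 1}\Vert\Sigma_n-\Sigma_{n-1}\Vert_{\mathcal{A}_s}\le\Vert V\Vert_{s,\rho}+O\!\big(\Vert V\Vert_{s,\rho}^2\big)\le 2\Vert V\Vert_{s,\rho}.
\]
Passing to the limit in the inductive identity then yields $\mathcal{U}_\hbar\big(\widehat{L}_{\omega,\hbar}+\varepsilon_\hbar\Op_\hbar(V-R_\hbar)\big)\mathcal{U}_\hbar^{*}=\widehat{L}_{\omega,\hbar}$, which is \eqref{L2_conjugation2}.

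The main obstacle, as anticipated, is twofold. First, one must produce Moyal-calculus estimates in the spaces $\mathcal{A}_{s,\rho}$ sharp enough that no $\xi$-analyticity is lost along the infinitely many steps --- the shift-invariance of the $\mathcal{A}_s$-norm being exactly the mechanism that makes this possible, in contrast to the naive term-by-term expansion in which each $\partial_\xi$ would cost width. Second, one must check that every constant above is uniform in $\hbar\in(0,1]$; this is where the borderline scaling $\varepsilon_\hbar=\hbar$ and the exactness of the commutator with $\widehat{L}_{\omega,\hbar}$ are used, since a larger $\varepsilon_\hbar$ would force generating functions $G_n$ blowing up as $\hbar\to 0$ and break the iteration.
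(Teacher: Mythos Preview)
Your approach is essentially the paper's: an iterative KAM normal form with counterterms, using the exact commutator with $\widehat{L}_{\omega,\hbar}$, the homological equation with loss only in $\rho$, no-loss Moyal bounds on $\mathcal{A}_{s,\rho}$ (your shift-invariance argument for $\mathcal{A}_s$ replaces the paper's $|\sin|\le 1$ bound on the bracket), and Calder\'on--Vaillancourt to pass to operator norm. The one step you gloss over --- ``absorb into $\Sigma_{n+1}$ the integrable part that has newly appeared'' --- actually requires inverting the near-identity map $R\mapsto\big\langle \mathcal{U}^{(n)}\Op_\hbar(R)\,\mathcal{U}^{(n)*}\big\rangle_x$ on $\mathcal{A}_s(\R^d)$, which the paper isolates as Lemma~\ref{l:inverse_map}; it is precisely this inversion (by Neumann series, since the accumulated conjugation is $\lambda$-close to the identity) that makes your quadratic estimate $\Vert Y_{n+1}\Vert\lesssim\varsigma^{-1}\delta_n^{-(\gamma-1)}\Vert Y_n\Vert^2$ and the bound $\Vert\Sigma_{n+1}-\Sigma_n\Vert_{\mathcal{A}_s}\lesssim\Vert Y_n\Vert^2$ hold --- the paper organizes the same iteration with geometric decay $\Vert V_{n+1}\Vert\le\alpha\Vert V_n\Vert$ and shrinking losses $\sigma_n=\sigma_1\alpha^{(n-1)/(2(\gamma-1))}$, but the mechanism is identical.
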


In the case $\varepsilon_\hbar \ll \hbar$, we can remove condition \eqref{e:small_condition3}; we will prove Theorem \ref{t:main_result_3} applying the following version of Theorem \ref{KAM_theorem_precise} with less regularity:

\begin{teor}
\label{KAM_theorem_precise_1}
Let $\omega \in \R^d$ be a strongly non resonant frequency satisfying \eqref{e:diophantine_condition2}, and let $V \in \mathcal{A}_{W,\rho}(T^*\mathbb{T}^d)$  for some fixed $\rho>0$. Assume that $\varepsilon_\hbar \ll \hbar$. Then there exist a sequence of unitary operators $\mathcal{U}_\hbar : L^2(\mathbb{T}^d) \to L^2(\mathbb{T}^d)$, and a sequence of counterterms $R_\hbar \in S_W(\R^d)$ such that
\eqref{L2_conjugation2} holds.
Moreover,
$$
\Vert R_\hbar \Vert_W \leq 2 \Vert V \Vert_{W,\rho}, \quad \forall \hbar \in (0,1].
$$ 
\end{teor}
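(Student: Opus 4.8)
The plan is to run, \emph{mutatis mutandis}, the same iterative quantum normal form with counterterms that proves Theorem \ref{KAM_theorem_precise}, but with the analytic classes $\mathcal{A}_s(\R^d)$, $\mathcal{A}_{s,\rho}(T^*\mathbb{T}^d)$ replaced everywhere by Sjöstrand's Wiener algebras $S_W(\R^d)$, $\mathcal{A}_{W,\rho}(T^*\mathbb{T}^d)$. Concretely, I would set $W_0:=V$, $\rho_0:=\rho$, and construct inductively real symbols $W_n\in\mathcal{A}_{W,\rho_n}(T^*\mathbb{T}^d)$ with $\rho_n\downarrow\rho_\infty\in(\rho/2,\rho)$, unitaries $\mathcal{V}_n:=e^{i(\varepsilon_\hbar/\hbar)\Op_\hbar(G_n)}$, and action-only counterterm increments $R^{(n+1)}:=\widehat{W_n}(0,\cdot)\in S_W(\R^d)$, so that, writing $\mathcal{W}_n:=\mathcal{V}_{n-1}\cdots\mathcal{V}_0$ (with $\mathcal{W}_0:=\operatorname{Id}$) and $S_n:=\sum_{m=1}^nR^{(m)}$, the identity
\[
\mathcal{W}_n\big(\widehat{L}_{\omega,\hbar}+\varepsilon_\hbar\Op_\hbar(V-S_n)\big)\mathcal{W}_n^* = \widehat{L}_{\omega,\hbar}+\varepsilon_\hbar\Op_\hbar(W_n)
\]
is preserved at every step. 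If $\Vert W_n\Vert_{W,\rho_n}\to0$, then $\mathcal{U}_\hbar:=\lim_n\mathcal{W}_n^*$ and $R_\hbar:=\sum_nR^{(n)}$ yield \eqref{L2_conjugation2} by passage to the limit.

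The inductive step rests on the \emph{exact} symbolic calculus available for the affine symbol $\mathcal{L}_\omega$: because the Moyal bracket with $\mathcal{L}_\omega(\xi)=\omega\cdot\xi$ truncates, one has, with no remainder,
\[
[\Op_\hbar(a),\widehat{L}_{\omega,\hbar}] = i\hbar\,\Op_\hbar(\omega\cdot\partial_x a),\qquad a\in\mathcal{A}_{W,\rho}(T^*\mathbb{T}^d).
\]
Hence conjugating $\widehat{L}_{\omega,\hbar}+\varepsilon_\hbar\Op_\hbar(\widetilde W_n)$, where $\widetilde W_n:=W_n-R^{(n+1)}$ has vanishing $x$-average, by $\mathcal{V}_n$ cancels the perturbation to first order exactly when $G_n$ solves the homological equation $\omega\cdot\partial_x G_n=\widetilde W_n$, i.e. $\widehat{G_n}(k,\cdot)=\widehat{\widetilde W_n}(k,\cdot)/(i\,\omega\cdot k)$ for $k\neq0$; the Diophantine condition \eqref{e:diophantine_condition2} then gives $\Vert G_n\Vert_{W,\rho_{n+1}}\lesssim\varsigma^{-1}(\rho_n-\rho_{n+1})^{-(\gamma-1)}\Vert W_n\Vert_{W,\rho_n}$, and — the point — this step only divides Fourier-in-$x$ coefficients by scalars, so it does not touch the $\xi$ variable at all. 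What remains after the step is the sum of the higher-order terms of $e^{iB_n}(\cdot)e^{-iB_n}$ with $B_n=(\varepsilon_\hbar/\hbar)\Op_\hbar(G_n)$, plus the commutator error produced when moving $\Op_\hbar(R^{(n+1)})$ past $\mathcal{W}_n$; by Sjöstrand's theorem ($\Op_\hbar$ maps these symbols to $L^2$-bounded operators, and the semiclassical Moyal product is a bounded operation on the algebra with constant independent of $\hbar$) all of these are again $\varepsilon_\hbar\Op_\hbar$ of a symbol $W_{n+1}\in\mathcal{A}_{W,\rho_{n+1}}$ with
\[
\Vert W_{n+1}\Vert_{W,\rho_{n+1}} \lesssim \tfrac{\varepsilon_\hbar}{\hbar}\,\varsigma^{-2}(\rho_n-\rho_{n+1})^{-2(\gamma-1)}\Vert W_n\Vert_{W,\rho_n}^2 + \tfrac{\varepsilon_\hbar}{\hbar}\,\varsigma^{-1}(\rho_0-\rho_1)^{-(\gamma-1)}\Vert V\Vert_{W,\rho}\Vert W_n\Vert_{W,\rho_n}.
\]
Extracting the counterterm as the $x$-independent part is precisely what keeps the "good part'' equal to $\widehat{L}_{\omega,\hbar}$ throughout, hence the quadratic structure.

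Taking $\rho_n-\rho_{n+1}\sim2^{-n}\rho$, the small-divisor losses grow only geometrically, while the quadratic gain dominates as soon as the effective parameter $(\varepsilon_\hbar/\hbar)\,\varsigma^{-2}\rho^{-2(\gamma-1)}\Vert V\Vert_{W,\rho}$ is below a threshold depending only on $d,\gamma,\varsigma,\rho$; since $\varepsilon_\hbar\ll\hbar$, this holds automatically for all sufficiently small $\hbar$, which is exactly why the smallness hypothesis \eqref{e:small_condition3} of Theorem \ref{KAM_theorem_precise} can be dropped here. One then gets $\Vert W_n\Vert_{W,\rho_n}\to0$ at least geometrically; the generators satisfy $\sum_n(\varepsilon_\hbar/\hbar)\Vert\Op_\hbar(G_n)\Vert_{\mathcal{L}(L^2)}<\infty$, so $\mathcal{W}_n^*$ converges in operator norm to a unitary $\mathcal{U}_\hbar$ (with $\mathcal{U}_\hbar\to\operatorname{Id}$ as $\hbar\to0$), and $R_\hbar=\sum_nR^{(n)}$ converges in $S_W(\R^d)$ with $\Vert R_\hbar\Vert_W\le\Vert\widehat{V}(0,\cdot)\Vert_W+C(\varepsilon_\hbar/\hbar)\Vert V\Vert_{W,\rho}\le2\Vert V\Vert_{W,\rho}$ for $\hbar$ small. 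Passing to the limit in the displayed identity — using that $\widehat{L}_{\omega,\hbar}$ is closed and the bounded perturbations converge in norm — gives \eqref{L2_conjugation2}.

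The hard part, as anticipated in the introduction, is controlling the loss of regularity in the action variable $\xi$: in the analytic scheme the Moyal corrections differentiate the symbols in $\xi$, forcing one to shrink the $\xi$-analyticity width and to balance two families of small parameters (the origin of \eqref{e:small_condition3}). Here that difficulty is absorbed entirely by working in $S_W$, which contains $S^0(\R^{2d})$, is stable under the semiclassical Moyal product with a bound uniform in $\hbar$, and requires no differentiability in $\xi$; thus the only quantity that deteriorates along the iteration is the $x$-analyticity radius $\rho$, through the Diophantine small divisors, which the (super-)geometric convergence readily tolerates. The remaining work is bookkeeping: verifying that the counterterms $\Op_\hbar(R^{(n)})$, being Fourier multipliers in $\hbar D_x$, commute with $\widehat{L}_{\omega,\hbar}$ and with one another, and that the errors generated by transporting them through the finitely many conjugations at each stage are of the same quadratic (or $(\varepsilon_\hbar/\hbar)$-small linear) size already controlled above.
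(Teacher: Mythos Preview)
Your proposal is correct and follows the paper's own strategy: rerun the KAM iteration of Theorem~\ref{KAM_theorem_precise} verbatim, replacing $\mathcal{A}_{s,\rho}$, $\mathcal{A}_s$ by $\mathcal{A}_{W,\rho}$, $S_W$, and observe that the smallness condition \eqref{e:small_condition3} becomes automatic because the effective perturbation size is $(\varepsilon_\hbar/\hbar)\Vert V\Vert_{W,\rho}\to 0$. The paper's proof is literally this one sentence, pointing to Corollary~\ref{c:same_for_sjostrand}, \eqref{trivial_bound2} and \eqref{e:calderon_vaillancourt_2} as the Sjöstrand-space substitutes for the analytic estimates.

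There is one genuine structural difference worth noting. You set $R^{(n+1)}:=\langle W_n\rangle$ directly and then absorb the discrepancy $\mathcal{W}_{n+1}\Op_\hbar(R^{(n+1)})\mathcal{W}_{n+1}^*-\Op_\hbar(R^{(n+1)})$ as an extra $(\varepsilon_\hbar/\hbar)$-small \emph{linear} term in the recursion for $W_{n+1}$. The paper instead keeps all the (unknown) counterterms inside the operator from the outset and, at step $n$, determines $R_{n,\hbar}$ by \emph{inverting} the conjugation via Lemma~\ref{l:inverse_map}, so that $\langle E_{n,n,\hbar}\rangle=\langle V_{n,\hbar}\rangle$ holds exactly; this yields a clean recursion with no linear tail. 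Both schemes converge: yours is more elementary (no inversion lemma) but forces you to check that the contraction rate beats the small-divisor growth $2^{n(\gamma-1)}$ coming from your choice $\rho_n-\rho_{n+1}\sim 2^{-n}\rho$ --- which it does, since the linear coefficient $(\varepsilon_\hbar/\hbar)C$ can be made smaller than $2^{-(\gamma-1)}$ for $\hbar$ small. A minor slip: with your displayed identity you want $\mathcal{U}_\hbar=\lim_n\mathcal{W}_n$, not $\lim_n\mathcal{W}_n^*$.
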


\subsection{KAM iterative algorithm}\label{KAM_process}

To find $R_\hbar$, we start from the full renormalized ope\-rator $\widehat{Q}_\hbar$ with $R_\hbar$ as unknown and then we will construct $\mathcal{U}_\hbar$ and $R_\hbar$ by an iterative  averaging method. We will find the renormalization function $R_\hbar$ as an infinite sum of the form
$$
R_\hbar := \sum_{j=1}^\infty R_{j,\hbar},
$$
where each $R_{j,\hbar}$ will be  determined at each step of the iteration and the sum will be proven to converge. 

We initially set $V_1 := V$, and consider
\begin{equation}
\label{e:first_hamiltonian}
\widehat{Q}_{1,\hbar} := \widehat{Q}_\hbar = \widehat{L}_{\omega,\hbar} + \varepsilon_\hbar \left( \Op_\hbar(V_1) - \sum_{j=1}^\infty \Op_\hbar(R_{j,\hbar}) \right).
\end{equation}
The goal at the first step of the iteration is to choose a good first term $R_{1,\hbar}$, then average the term $V_1$ by the flow generated by $\mathcal{L}_{\omega}$, and finally estimate the remainder terms.  Given $a \in \mathcal{C}^\infty(T^*\mathbb{T}^d)$ we define its average $\langle a \rangle$ along the flow 
$$
\phi_t^{\mathcal{L}_\omega} \, : \, (x,\xi) \mapsto (x + t\omega, \xi),
$$
by the following limit in the $\mathcal{C}^\infty$-topology of $T^*\mathbb{T}^d$:
\begin{equation}
\label{e:averaged_symbol}
\langle a \rangle(\xi) := \lim_{T \to \infty} \frac{1}{T} \int_0^T a \circ \phi_t^{\mathcal{L}_\omega}(x,\xi) \, dt  = \frac{1}{(2\pi)^d} \int_{\mathbb{T}^d} a (x,\xi) dx = \frac{1}{(2\pi)^{d/2}} \, \widehat{a}(0,\xi),
\end{equation}
where recall that we have used the convention for the Fourier coefficients of $a$,
$$
\widehat{a}(k,\xi) := \big \langle a , e_k \big \rangle_{L^2(\mathbb{T}^d)} = \frac{1}{(2\pi)^{d/2}} \int_{\mathbb{T}^d} a(x,\xi) e^{-i k \cdot x} dx, \quad k \in \mathbb{Z}^d.
$$
If $a$ is bounded together with all its derivatives, Egorov's theorem allows us to define the quantum average of $\Op_\hbar(a)$ by
\begin{equation}
\label{e:quatum_average}
\langle \Op_\hbar(a) \rangle := \lim_{T \to \infty}\frac{1}{T} \int_0^T e^{\frac{i t}{\hbar} \widehat{L}_{\omega,\hbar}} \, \Op_\hbar(a) \, e^{-\frac{i t}{\hbar} \widehat{L}_{\omega,\hbar}} \, dt,
\end{equation}
and, since $\mathcal{L}_\omega$ is a polynomial of degree one, we have that the limit exists in the strong-operator norm, and
$$
\langle \Op_\hbar(a) \rangle = \Op_\hbar(\langle a \rangle).
$$
We  set $R_{1,\hbar} := \langle V_1 \rangle$ and consider a unitary operator of the form
$$
U_{1,\hbar}(t) := e^{\frac{i t \varepsilon_\hbar}{\hbar} \Op_\hbar(F_1)} = \sum_{j=0}^\infty \frac{1}{j!} \left( \frac{i t\varepsilon_\hbar}{\hbar} \right)^j \Op_\hbar(F_1)^j, \quad t \in [0,1],
$$
where $\Op_\hbar(F_1)$ will be the solution of the cohomological equation
\begin{equation}
\label{e:cohomological_equation_1}
\frac{i}{\hbar} [ \widehat{L}_{\omega,\hbar} , \Op_\hbar(F_1)] =  \Op_\hbar(V_1 -  R_1 ), \quad \langle F_1 \rangle = 0.
\end{equation}
We show in  Lemma \ref{l:solution_cohomological_equation} below  how to solve this cohomological equation. Moreover, the Diophantine condition \eqref{e:diophantine_condition2} on $\omega$ will allow us to bound the solution $F_1$ with a bit of loss of analyticity in the variable $x$. Denoting now $U_{1,\hbar} = U_{1,\hbar}(1)$ and $\widehat{Q}_{2,\hbar} := U_{1,\hbar} \, \widehat{Q}_{1,\hbar} \, U_{1,\hbar}^*$, and using Taylor's theorem, we get
\begin{align*}
\widehat{Q}_{2,\hbar} & =   \widehat{L}_{\omega, \hbar} + \frac{i \varepsilon_\hbar}{\hbar}[\Op_\hbar(F_1) , \widehat{L}_{\omega,\hbar} ] + \varepsilon_\hbar \Op_\hbar(V_1-R_1)  \\[0.2cm]
 & \quad  +  \left(\frac{i \varepsilon_\hbar}{\hbar} \right)^2 \int_0^1(1-t) U_{1,\hbar}(t) [\Op_\hbar(F_1), [\Op_\hbar(F_1) , \widehat{L}_{\omega,\hbar} ]] U_{1,\hbar}(t)^* dt   \\[0.2cm]
 & \quad + \frac{i\varepsilon_\hbar^2}{\hbar} \int_0^1 U_{1,\hbar}(t) [\Op_\hbar(F_1) , \Op_\hbar(V_1 - R_1)] U_{1,\hbar}(t)^* dt  \\[0.2cm]
 & \quad  - \varepsilon_\hbar \sum_{j=2}^\infty U_{1,\hbar} \Op_\hbar(R_{j,\hbar}) U_{1,\hbar}^*.
\end{align*}
With this and the cohomological equation \eqref{e:cohomological_equation_1}, we obtain
$$
\widehat{Q}_{2,\hbar} = \widehat{L}_{\omega,\hbar} + \varepsilon_\hbar \left(\Op_\hbar(V_{2,\hbar}) - \sum_{j=2}^\infty U_{1,\hbar} \Op_\hbar(R_{j,\hbar}) U_{1,\hbar}^* \right),
$$
where
\begin{align}
\label{a:V_2}
\Op_\hbar(V_{2,\hbar}) & = \frac{i\varepsilon_\hbar}{\hbar} \int_0^1 t U_{1,\hbar}(t)[ \Op_\hbar(F_1),  \Op_\hbar(V_1 - R_1) ]U_{1,\hbar}(t)^* dt.
\end{align}

Now we proceed to explain the induction step. Assume we have constructed unitary operators $U_{1,\hbar}, \ldots , U_{n-1,\hbar}$ and counterterms $R_{1,\hbar}, \ldots , R_{n-1,\hbar}$ so that
\begin{equation}
\label{e:n-1_step}
\widehat{Q}_{n,\hbar} =  U_{n-1,\hbar} \cdots U_{1,\hbar} \,  \widehat{Q}_{1,\hbar} \, U_{1,\hbar}^* \cdots U_{n-1,\hbar}^* =  \widehat{L}_{\omega,\hbar} + \varepsilon_\hbar \left( \Op_\hbar(V_{n,\hbar}) - \sum_{j=n}^\infty \widehat{E}_{n,j,\hbar} \right),
\end{equation}
where, for every $j \geq n$:
\begin{align*}
\widehat{E}_{n,j,\hbar} = \Op_\hbar(E_{n,j,\hbar}) & := U_{n-1,\hbar} \cdots U_{1,\hbar} \Op_\hbar(R_{j,\hbar}) U_{1,\hbar}^* \cdots U_{n-1,\hbar}^*.
\end{align*}
We will choose $R_{n,\hbar}$ to be the unique solution of the operator equation
\begin{equation}
\label{e:equation_to_find_R_n}
\langle \widehat{E}_{n,n,\hbar} \rangle = \langle U_{n-1,\hbar} \cdots U_{1,\hbar} \Op_\hbar(R_{j,\hbar}) U_{1,\hbar}^* \cdots U_{n-1,\hbar}^* \rangle = \langle \Op_\hbar (V_{n,\hbar} )\rangle,
\end{equation}
given by Lemma \ref{l:inverse_map} below. We next consider the unitary operator
$$
U_{n,\hbar}(t) := e^{\frac{it\varepsilon_\hbar}{\hbar} \Op_\hbar(F_{n,\hbar})} = \sum_{j=0}^\infty \frac{1}{j!} \left( \frac{it\varepsilon_\hbar}{\hbar} \right)^j \Op_\hbar(F_{n,\hbar})^j, \quad t \in [0,1],
$$
where $\Op_\hbar(F_{n,\hbar})$ solves the cohomological equation (see Lemma \ref{l:solution_cohomological_equation}):
\begin{equation}
\label{e:cohomological_equation_n}
\frac{i}{\hbar} [ \widehat{L}_{\omega,\hbar} , \Op_\hbar(F_{n,\hbar})] =  \Op_\hbar(V_{n,\hbar} - E_{n,n,\hbar}), \quad \langle F_{n,\hbar} \rangle = 0.
\end{equation}
As in the first step, we denote $U_{n,\hbar} := U_{n,\hbar}(1)$. Defining $\widehat{Q}_{n+1,\hbar} := U_{n,\hbar} \, \widehat{Q}_{n,\hbar} \, U_{n,\hbar}^*$, we use Taylor's theorem to expand 
\begin{align*}
\widehat{Q}_{n+1,\hbar} & =  \widehat{L}_{\omega, \hbar} + \frac{i\varepsilon_\hbar}{\hbar}[\Op_\hbar(F_{n,\hbar}) , \widehat{L}_{\omega,\hbar} ]  + \varepsilon_\hbar \Op_\hbar(V_{n,\hbar}- E_{n,n,\hbar}) \\[0.2cm]
 & \quad  +  \left(\frac{i\varepsilon_\hbar}{\hbar} \right)^2 \int_0^1(1-t) U_{n,\hbar}(t) [\Op_\hbar(F_{n,\hbar}), [\Op_\hbar(F_{n,\hbar}) , \widehat{L}_{\omega,\hbar} ]] U_{n,\hbar}(t)^* dt \\[0.2cm]
 & \quad  + \frac{i\varepsilon_\hbar^2}{\hbar} \int_0^1 U_{n,\hbar}(t) [\Op_\hbar(F_{n,\hbar}) , \Op_\hbar(V_{n,\hbar} - E_{n,n,\hbar})] U_{n,\hbar}(t)^* dt  \\[0.2cm]
 & \quad  - \varepsilon_\hbar \sum_{j=n+1}^\infty U_{n,\hbar} \Op_\hbar(E_{n,j,\hbar}) U_{n,\hbar}^*,
\end{align*}
and using  the cohomological equation \eqref{e:cohomological_equation_n}, we obtain
$$
\widehat{Q}_{n+1,\hbar} = \widehat{L}_{\omega,\hbar} + \varepsilon_\hbar \left( \Op_\hbar(V_{n+1,\hbar}) - \sum_{j=n+1}^\infty \Op_\hbar(E_{n+1,j,\hbar}) \right),
$$
where
\begin{align}
\label{e:generalV}
\Op_\hbar(V_{n+1,\hbar}) & = \frac{i\varepsilon_\hbar}{\hbar} \int_0^1 t U_{n,\hbar}(t)[ \Op_\hbar(F_n), \Op_\hbar(V_{n,\hbar} - E_{n,n,\hbar} ) ] U_{n,\hbar}(t)^*dt, 
\end{align}
and, for every $j \geq n+1$,
$$
\widehat{E}_{n+1,j,\hbar} = \Op_\hbar(E_{n+1,j,\hbar}) := U_{n,\hbar} \Op_\hbar(E_{n,j,\hbar}) U_{n,\hbar}^*.
$$

This iteration procedure will converge provided that $V$ is sufficiently small. Precisely, we will obtain a unitary operator $\mathcal{U}_\hbar$ as the limit, in the strong operator $\mathcal{L}(L^2)$-norm,
$$
\mathcal{U}_\hbar := \lim_{n\to \infty} U_{n,\hbar} \cdots U_{1,\hbar}.
$$

\subsection{Cohomological equations}

In this section we explain how to solve the cohomological equations appearing in our averaging method. This is a standard technique when dealing with small divisors problems.

\begin{lemma}
\label{l:solution_cohomological_equation}
Let $V \in \mathcal{A}_{s,\rho}(T^*\mathbb{T}^d)$. Then, the cohomological equation
\begin{equation}
\label{e:cohomological_equation_lemma}
\frac{i}{\hbar} [ \widehat{L}_{\omega,\hbar} , \Op_\hbar(F)] = \Op_\hbar(V - \langle V \rangle), \quad \langle F \rangle = 0,
\end{equation}
has a unique solution $F \in \mathcal{A}_{s,\rho-\sigma}(T^*\mathbb{T}^d)$, for every $0 < \sigma < \rho$, such that
\begin{equation}
\label{e:bound_solution_cohomologial}
\Vert F \Vert_{s, \rho-\sigma} \leq \varsigma^{-1} \left( \frac{\gamma - 1}{e\sigma} \right)^{\gamma -1} \Vert V \Vert_{s,\rho}.
\end{equation}
Similarly, if $V \in \mathcal{A}_{W,\rho}(\R^{2d})$, then there exists a unique $F \in \mathcal{A}_{W,\rho-\sigma}(T^*\mathbb{T}^d)$ solving \eqref{e:cohomological_equation_lemma}, and \eqref{e:bound_solution_cohomologial} holds replacing $\Vert \cdot \Vert_{s,\cdot}$ by $\Vert \cdot \Vert_{W,\cdot}$.
\end{lemma}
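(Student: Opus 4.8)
The plan is to reduce the operator-valued cohomological equation to a classical transport equation on symbols, exploiting the linearity of $\mathcal{L}_\omega(\xi)=\omega\cdot\xi$. Since $e^{it\widehat{L}_{\omega,\hbar}/\hbar}$ acts by the translation $u\mapsto u(\cdot+t\omega)$, conjugation is exact, $e^{it\widehat{L}_{\omega,\hbar}/\hbar}\Op_\hbar(F)e^{-it\widehat{L}_{\omega,\hbar}/\hbar}=\Op_\hbar(F\circ\phi_t^{\mathcal{L}_\omega})$, and differentiating at $t=0$ gives $\frac{i}{\hbar}[\widehat{L}_{\omega,\hbar},\Op_\hbar(F)]=\Op_\hbar(\omega\cdot\partial_xF)$ with no Moyal corrections. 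Hence \eqref{e:cohomological_equation_lemma} is equivalent to the pair $\omega\cdot\partial_x F=V-\langle V\rangle$ and $\langle F\rangle=0$.

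First I would pass to the Fourier expansion in $x$: writing $F=\sum_k\widehat{F}(k,\cdot)e_k$ and $V-\langle V\rangle=\sum_{k\neq 0}\widehat{V}(k,\cdot)e_k$, the equation decouples into the function-valued scalar relations $i(\omega\cdot k)\widehat F(k,\xi)=\widehat V(k,\xi)$. Since $\omega$ is non-resonant, $\omega\cdot k\neq 0$ for $k\neq0$, so the unique solution is $\widehat F(k,\xi)=\widehat V(k,\xi)/(i\,\omega\cdot k)$ for $k\neq 0$ and $\widehat F(0,\cdot)=0$; uniqueness of $F$ in the class with $\langle F\rangle=0$ is then immediate, and the Hermitian symmetry $\widehat V(-k,\cdot)=\overline{\widehat V(k,\cdot)}$ (because $V$ is real) is inherited by $\widehat F$, so $F$ is again real valued — this is what makes $\Op_\hbar(F)$ self-adjoint, hence the exponential $U=e^{i\,(\cdot)\,\Op_\hbar(F)}$ unitary in the iteration.

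The estimate is where the Diophantine hypothesis \eqref{e:diophantine_condition2} and the controlled loss of analyticity enter. Dividing by the scalar $i\,\omega\cdot k$ does not touch the $\xi$-dependence, so $\Vert\widehat F(k,\cdot)\Vert_{\mathcal{A}_s}=|\omega\cdot k|^{-1}\Vert\widehat V(k,\cdot)\Vert_{\mathcal{A}_s}\leq \varsigma^{-1}|k|^{\gamma-1}\Vert\widehat V(k,\cdot)\Vert_{\mathcal{A}_s}$. Then I would absorb the polynomial weight into the exponential shift by $\sigma$ via the elementary bound $\sup_{t\geq0}t^{\gamma-1}e^{-\sigma t}=\big(\frac{\gamma-1}{e\sigma}\big)^{\gamma-1}$, which gives $|k|^{\gamma-1}e^{|k|(\rho-\sigma)}\leq\big(\frac{\gamma-1}{e\sigma}\big)^{\gamma-1}e^{|k|\rho}$; summing over $k$ in the definition of $\Vert\cdot\Vert_{s,\rho-\sigma}$ yields exactly \eqref{e:bound_solution_cohomologial}. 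For the Sjöstrand class the argument is verbatim the same: for each fixed $k$ the map $\widehat V(k,\cdot)\mapsto\widehat V(k,\cdot)/(i\,\omega\cdot k)$ is multiplication by a constant, so $\Vert\widehat F(k,\cdot)\Vert_W=|\omega\cdot k|^{-1}\Vert\widehat V(k,\cdot)\Vert_W$, and the identical summation in the definition of $\Vert\cdot\Vert_{W,\rho-\sigma}$ closes the bound.

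I do not expect a serious obstacle here; the one point requiring care is the justification that the commutator with $\widehat L_{\omega,\hbar}$ reproduces the transport operator $\omega\cdot\partial_x$ exactly, with no $O(\hbar)$ tail, which rests on the linearity of $\mathcal{L}_\omega$, and the observation that the small divisors act only on the $x$-Fourier side. This last fact is precisely what allows the statement to survive in $\mathcal{A}_{W,\rho}$, where no analyticity in $\xi$ is available to trade for the gain needed to control the small denominators.
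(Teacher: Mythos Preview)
Your proposal is correct and follows essentially the same approach as the paper: reduce the operator equation to the symbol-level transport equation $\{\mathcal{L}_\omega,F\}=\omega\cdot\partial_xF=V-\langle V\rangle$ (exact because $\mathcal{L}_\omega$ is linear), solve it Fourier mode by Fourier mode as $\widehat{F}(k,\xi)=\widehat{V}(k,\xi)/(i\,\omega\cdot k)$, and absorb the small-divisor factor $|k|^{\gamma-1}$ into a loss of $\sigma$ in the $x$-analyticity via $\sup_{t\ge0}t^{\gamma-1}e^{-\sigma t}=\big(\tfrac{\gamma-1}{e\sigma}\big)^{\gamma-1}$. Your additional remarks on real-valuedness and on why the $\mathcal{A}_{W,\rho}$ case goes through unchanged are accurate and match the paper's reasoning.
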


\begin{proof}
Using the properties of the symbolic calculus for the Weyl quantization, which in this case is exact since $\mathcal{L}_\omega$ is a polynomial of degree one, equation \eqref{e:cohomological_equation_lemma} at symbol level is just
\begin{equation}
\label{e:symbolic_cohomological}
\{ \mathcal{L}_\omega, F \} = V - \langle V \rangle, \quad \langle F \rangle = 0.
\end{equation}
Recall also that, by \eqref{e:averaged_symbol}, the average of $V$ is given by
$$
\langle V \rangle(\xi) = \frac{1}{(2\pi)^d} \int_{\mathbb{T}^d} V(x,\xi) dx =  \frac{1}{(2\pi)^{d/2}} \widehat{V} (0,\xi).
$$
On the other hand, since
$$
\{ \mathcal{L}_\omega, F \}(x,\xi) =  \sum_{k \in \mathbb{Z}^d} i \omega \cdot k \,  \widehat{F}(k,\xi) e_k(x),
$$
we obtain the following formal series for the solution of \eqref{e:symbolic_cohomological}:
\begin{equation}
\label{e:solution_cohomological_equation}
F(x,\xi) =  \sum_{k \in \mathbb{Z}^d\setminus \{ 0 \}} \frac{\widehat{V}(k,\xi)}{i\omega \cdot k}  e_k(x).
\end{equation}
Finally, by Diophantine condition \eqref{e:diophantine_condition2} and estimate \eqref{e:elementary_estimate}, we conclude that
\begin{equation}
\label{e:cohomological_estimate}
\Vert F \Vert_{s,\rho-\sigma} \leq \varsigma^{-1} \left( \frac{\gamma - 1}{e\sigma} \right)^{\gamma -1} \Vert V \Vert_{s,\rho}.
\end{equation}
 Since the loss of analyticity of $F$ with respect to $V$ occurs only in the variable $x$, one can substitute the norms $\Vert \cdot \Vert_{s,\cdot}$ by $\Vert \cdot \Vert_{W,\cdot}$ in \eqref{e:cohomological_estimate} to obtain also the second assertion of the statement.
\end{proof}

\subsection{Convergence}


We next show that the algorithm sketched in Section \ref{KAM_process} converges under appropriate hypothesis. 

\begin{proof}[Proof of Theorem \ref{KAM_theorem_precise}]
We start by fixing the following universal constants:
\begin{equation}
\label{e:constants}
\alpha := \frac{1}{4}, \quad \beta  := \frac{1}{16}, \quad \lambda := e^{\frac{\beta}{1- \sqrt{\alpha}}} - 1.
\end{equation}
Now set
\begin{equation}
\label{e:set_constants}
\rho_1 := \rho, \quad \sigma_1 := \frac{\rho}{2e(\gamma - 1)} \alpha^{\frac{1}{2(\gamma-1)}}.
\end{equation}
By Lemma \ref{l:solution_cohomological_equation}, \eqref{e:set_constants} and hypothesis \eqref{e:small_condition3},
\begin{equation*}
\label{e:step_one1}
\Vert F_1 \Vert_{s,\rho_1 - \sigma_1} \leq \varsigma^{-1} \left( \frac{\gamma - 1}{e \sigma_1} \right)^{\gamma -1} \Vert V_1 \Vert_{s,\rho_1} \leq \frac{\beta}{2}.
\end{equation*}
Then, using \eqref{a:V_2} and the conventions of Appendix \ref{s:tools_of_analytic_calculus},
$$
V_{2,\hbar} = \frac{i \varepsilon_\hbar}{\hbar} \int_0^1 t \Psi_{t,\hbar}^{\varepsilon_\hbar F_1} \big( [F_1, V_1 -R_1]_\hbar \big)dt,
$$ 
where $[\cdot , \cdot]_\hbar$ and $\Psi_{t,\hbar}^F$ are defined in \eqref{e:commutator_symbol} and \eqref{e:conjugation_symbol}, estimate \eqref{trivial_bound}, and Lemma \ref{l:bad_flow_estimate}, we obtain
\begin{equation*}
\label{e:step_one2}
\Vert V_2 \Vert_{s,\rho_1 - \sigma_1} \leq \beta(1 + \beta) \Vert V_1 \Vert_{s,\rho_1} \leq \alpha \Vert V_1 \Vert_{s,\rho_1}.
\end{equation*}
Moreover, by definition of $R_1$ and of our analytic spaces (Definition \ref{Banach_spaces_of_symbols}):
\begin{equation*}
\label{e:step_one3}
\Vert R_1 \Vert_{\mathcal{A}_s(\R^d)} = \Vert \langle V_1 \rangle \Vert_{\mathcal{A}_s(\R^d)} \leq \Vert V_1 \Vert_{s,\rho_1}.
\end{equation*}
This shows the starting step of the induction. We next define sequences
$$
\sigma_{n+1} := \sigma_n \alpha^{\frac{1}{2(\gamma-1)}},  \quad \rho_{n+1} := \rho_n - \sigma_n, \quad n \geq 1, 
$$
and assume the following induction hypothesis: for every $n \geq 2$ and $1 \leq j \leq n-1$, there exist $F_{j,\hbar}$ and $R_{j,\hbar}$ so that
\begin{equation}
\label{e:induction_hypothesis}
\Vert F_{j,\hbar} \Vert_{s,\rho_j} \leq \frac{\beta \alpha^{\frac{j-1}{2}}}{2}, \quad \Vert R_{j,\hbar} \Vert_{s} \leq \frac{\alpha^{j-1}}{1-\lambda} \Vert V_1 \Vert_{s,\rho_1},
\end{equation}
and, for $V_{n,\hbar}$ obtained in $\eqref{e:n-1_step}$, 
\begin{equation}
\label{e:induction_hypothesis2}
\Vert V_{n,\hbar} \Vert_{s, \rho_n} \leq \alpha^{n-1} \Vert V_1 \Vert_{s, \rho_1}.
\end{equation}
To prove the induction step, we first recall that, for every $j \geq n$, $\widehat{E}_{n,j,\hbar} = \Op_\hbar(E_{n,j,\hbar})$, where
$$
E_{n,j,\hbar} = \Psi_{1,\hbar}^{\varepsilon_\hbar F_{n-1}} \circ \cdots \circ \Psi_{1,\hbar}^{\varepsilon_\hbar F_{1}} R_{j,\hbar}.
$$
Our choice of $R_{n,\hbar}$ is the unique solution of equation \eqref{e:equation_to_find_R_n}. At symbol level, equation \eqref{e:equation_to_find_R_n} reads
\begin{equation}
\label{e:special_equation}
\langle E_{n,n,\hbar} \rangle = \langle \Psi_{1,\hbar}^{\varepsilon_\hbar F_{n-1}} \circ \cdots \circ \Psi_{1,\hbar}^{\varepsilon_\hbar F_{1}} R_{n,\hbar} \rangle = \langle V_{n,\hbar} \rangle {\color{red}.}
\end{equation} 
The solution exists and is unique in view of the following:

\begin{lemma}
\label{l:inverse_map}
Assume that $\varepsilon_\hbar \leq \hbar$. Let $\langle V \rangle \in \mathcal{A}_s(\R^d)$ and let $F_j \in  \mathcal{A}_{s,\rho_j}(T^*\mathbb{T}^d)$ for $1 \leq j \leq  n-1$ and some positive numbers $\rho_1 \geq \cdots \geq  \rho_{n-1} > 0$ such that
$$
2 \Vert F_j \Vert_{s,\rho_j-\sigma_j} \leq \beta \, \alpha^{\frac{j-1}{2}},
$$
where $\alpha, \beta > 0$ satisfy
$$
\lambda := e^{\frac{\beta}{1- \sqrt{\alpha}}} - 1 < 1.
$$
Then, there exists $R \in \mathcal{A}_s(\R^d)$ so that
$$
\langle \Psi_{1,\hbar}^{\varepsilon_\hbar  F_{n-1}} \circ \cdots \circ \Psi_{1,\hbar}^{\varepsilon_\hbar F_1} R \rangle = \langle V \rangle,
$$
and
$$
\Vert R \Vert_{\mathcal{A}_s(\R^d)} \leq \frac{1}{1 - \lambda} \Vert \langle V \rangle \Vert_{\mathcal{A}_s(\R^d)}, \quad \Vert \Psi_{1,\hbar}^{\varepsilon_\hbar  F_{n-1}} \circ \cdots \circ \Psi_{1,\hbar}^{\varepsilon_\hbar F_1} R \Vert_{s,\rho_n} \leq \frac{1 + \lambda}{1-\lambda} \Vert \langle V \rangle \Vert_{\mathcal{A}_s(\R^d)}.
$$
\end{lemma}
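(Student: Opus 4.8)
The plan is to turn the operator equation of the statement (equivalently \eqref{e:special_equation}) into a linear fixed-point problem on $\mathcal{A}_s(\R^d)$ and solve it with a Neumann series. Write $\Phi:=\Psi_{1,\hbar}^{\varepsilon_\hbar F_{n-1}}\circ\cdots\circ\Psi_{1,\hbar}^{\varepsilon_\hbar F_1}$ for the composition of symbol-level conjugations; it is linear. The key starting observation is that a symbol $R=R(\xi)$ depending only on the action satisfies $\langle R\rangle=R$, so that $\langle\Phi R\rangle=R+\langle(\Phi-\operatorname{Id})R\rangle$ and the equation $\langle\Phi R\rangle=\langle V\rangle$ becomes $(\operatorname{Id}+\mathcal{K})R=\langle V\rangle$ with $\mathcal{K}R:=\langle(\Phi-\operatorname{Id})R\rangle$. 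First I would check that $\mathcal{K}$ is a bounded operator of $\mathcal{A}_s(\R^d)$ into itself: regarding $R$ as an element of $\mathcal{A}_{s,\rho_n}(T^*\mathbb{T}^d)$ — only the zeroth $x$-Fourier mode is nonzero and $\Vert R\Vert_{s,\rho_n}=\Vert R\Vert_{\mathcal{A}_s(\R^d)}$ — the map $\Phi$ keeps it in $\mathcal{A}_{s,\rho_n}$, and $\langle\cdot\rangle$ sends $g\in\mathcal{A}_{s,\rho_n}(T^*\mathbb{T}^d)$ to $\langle g\rangle=(2\pi)^{-d/2}\widehat g(0,\cdot)\in\mathcal{A}_s(\R^d)$ with $\Vert\langle g\rangle\Vert_{\mathcal{A}_s(\R^d)}\le\Vert g\Vert_{s,\rho_n}$, directly from Definition \ref{Banach_spaces_of_symbols}.

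The heart of the argument is the contraction bound $\Vert\mathcal{K}\Vert_{\mathcal{L}(\mathcal{A}_s(\R^d))}\le\lambda<1$. Here I would invoke the flow estimate of Lemma \ref{l:bad_flow_estimate} together with \eqref{trivial_bound} and the hypothesis $\varepsilon_\hbar\le\hbar$, which makes the prefactor $\varepsilon_\hbar/\hbar$ in the generator of $\Psi_{1,\hbar}^{\varepsilon_\hbar F_j}$ at most one, so that each conjugation is bounded on $\mathcal{A}_{s,\rho}$ for every $\rho\le\rho_j-\sigma_j$, \emph{without any further loss of analyticity}, with
$$
\Vert\Psi_{1,\hbar}^{\varepsilon_\hbar F_j}b\Vert_{s,\rho}\le e^{2\Vert F_j\Vert_{s,\rho}}\Vert b\Vert_{s,\rho},\qquad \Vert(\Psi_{1,\hbar}^{\varepsilon_\hbar F_j}-\operatorname{Id})b\Vert_{s,\rho}\le\big(e^{2\Vert F_j\Vert_{s,\rho}}-1\big)\Vert b\Vert_{s,\rho}.
$$
Since $\rho_n\le\rho_j-\sigma_j$ for all $j\le n-1$ and the norms increase with the radius, the whole estimate can be run inside $\mathcal{A}_{s,\rho_n}$, where $a_j:=2\Vert F_j\Vert_{s,\rho_n}\le 2\Vert F_j\Vert_{s,\rho_j-\sigma_j}\le\beta\alpha^{(j-1)/2}$. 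The telescoping identity
$$
\Psi_{1,\hbar}^{\varepsilon_\hbar F_k}\circ\cdots\circ\Psi_{1,\hbar}^{\varepsilon_\hbar F_1}R-R=\Psi_{1,\hbar}^{\varepsilon_\hbar F_k}\Big(\Psi_{1,\hbar}^{\varepsilon_\hbar F_{k-1}}\circ\cdots\circ\Psi_{1,\hbar}^{\varepsilon_\hbar F_1}R-R\Big)+\big(\Psi_{1,\hbar}^{\varepsilon_\hbar F_k}R-R\big)
$$
gives, for $t_k:=\sup_{\Vert R\Vert_{\mathcal{A}_s}\le1}\Vert\Psi_{1,\hbar}^{\varepsilon_\hbar F_k}\circ\cdots\circ\Psi_{1,\hbar}^{\varepsilon_\hbar F_1}R-R\Vert_{s,\rho_n}$, the recursion $t_k\le e^{a_k}t_{k-1}+(e^{a_k}-1)$ with $t_0=0$, hence $t_k\le e^{a_1+\cdots+a_k}-1$ by induction. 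As $\sum_{j\ge1}a_j\le\beta\sum_{j\ge1}\alpha^{(j-1)/2}=\beta/(1-\sqrt{\alpha})$, this yields $\Vert(\Phi-\operatorname{Id})R\Vert_{s,\rho_n}\le\big(e^{\beta/(1-\sqrt{\alpha})}-1\big)\Vert R\Vert_{\mathcal{A}_s(\R^d)}=\lambda\Vert R\Vert_{\mathcal{A}_s(\R^d)}$, and therefore $\Vert\mathcal{K}R\Vert_{\mathcal{A}_s(\R^d)}\le\lambda\Vert R\Vert_{\mathcal{A}_s(\R^d)}$.

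With the contraction in hand, $\operatorname{Id}+\mathcal{K}$ is invertible on $\mathcal{A}_s(\R^d)$, the unique solution is $R:=\sum_{m\ge0}(-\mathcal{K})^m\langle V\rangle$, which converges in $\mathcal{A}_s(\R^d)$ with $\Vert R\Vert_{\mathcal{A}_s(\R^d)}\le(1-\lambda)^{-1}\Vert\langle V\rangle\Vert_{\mathcal{A}_s(\R^d)}$; finally $\Vert\Phi R\Vert_{s,\rho_n}\le\Vert R\Vert_{s,\rho_n}+\Vert(\Phi-\operatorname{Id})R\Vert_{s,\rho_n}\le(1+\lambda)\Vert R\Vert_{\mathcal{A}_s(\R^d)}\le\frac{1+\lambda}{1-\lambda}\Vert\langle V\rangle\Vert_{\mathcal{A}_s(\R^d)}$, which is the second estimate claimed. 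The same argument, with $\Vert\cdot\Vert_{s,\cdot}$ and $\mathcal{A}_s(\R^d)$ replaced by $\Vert\cdot\Vert_{W,\cdot}$ and $S_W(\R^d)$ — the Moyal-product and averaging bounds used here being insensitive to the loss of regularity in $\xi$ — gives the variant of the lemma needed in the proof of Theorem \ref{KAM_theorem_precise_1}.

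I expect the main obstacle not to be any single inequality but the bookkeeping that permits running the telescoping estimate on the \emph{single} space $\mathcal{A}_{s,\rho_n}$: this is legitimate precisely because the conjugations $\Psi_{1,\hbar}^{\varepsilon_\hbar F_j}$ cost no analyticity (the radius losses $\sigma_j$ having been spent earlier, in solving the cohomological equations via Lemma \ref{l:solution_cohomological_equation}), so that here $\sigma_j$ serves only to record that the hypothesis controls $\Vert F_j\Vert$ in a radius no smaller than $\rho_n$. One must also keep in mind that the symbol norms are algebra norms for the Moyal product, which is what makes the exponential bounds above hold with the clean constant $2\Vert F_j\Vert$ that matches the hypothesis and the definition \eqref{e:constants} of $\lambda$.
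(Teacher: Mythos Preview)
Your proof is correct and follows essentially the same approach as the paper: define the linear map $T(R)=\langle\Phi R\rangle$ on $\mathcal{A}_s(\R^d)$, use Lemma~\ref{l:bad_flow_estimate} and a telescoping argument to show $\Vert T-\operatorname{Id}\Vert\le\lambda<1$, and invert by Neumann series. The paper writes the intermediate bound as $\prod_{j=1}^{n-1}(1+\beta\alpha^{(j-1)/2})-1\le\lambda$ rather than your $e^{\sum a_j}-1\le\lambda$, but these are the same estimate; you are simply more explicit about the telescoping recursion and about why all estimates may be run in the single space $\mathcal{A}_{s,\rho_n}$.
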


\begin{proof}
Define the map $T : \mathcal{A}_s(\R^d) \to \mathcal{A}_s(\R^d)$ by
$$
T(R) := \langle \Psi_{\hbar,1}^{\varepsilon_\hbar F_{n-1}} \circ \cdots \circ \Psi_{\hbar,1}^{\varepsilon_\hbar F_1} R \rangle.
$$
By Lemma \ref{l:bad_flow_estimate}, we have
\begin{align*}
\Vert T(R) - R \Vert_{\mathcal{A}_s(\R^d)} & \\[0.2cm]
&  \hspace*{-2cm} \leq \left[ \prod_{j=1}^{ n-1} (1 + \beta \alpha^{\frac{j-1}{2}}) - 1\right] \Vert R \Vert_{\mathcal{A}_s(\R^d)} \leq \Big( e^{\frac{\beta}{1- \sqrt{\alpha}}} - 1\Big) \Vert R \Vert_{\mathcal{A}_s(\R^d)} = \lambda \Vert R \Vert_{\mathcal{A}_s(\R^d)}.
\end{align*}
Then, there exists an inverse map $T^{-1} : \mathcal{A}_s(\R^d) \to \mathcal{A}_s(\R^d)$ defined by Neumann series, and
$$
\Vert T^{-1} \Vert_{\mathcal{A}_s \to \mathcal{A}_s} \leq \frac{1}{1- \lambda}.
$$
Finally, applying Lemma \ref{l:bad_flow_estimate} one more time, we obtain: 
$$
 \Vert \Psi_{\hbar,1}^{\varepsilon_\hbar  F_{n-1}} \circ \cdots \circ \Psi_{\hbar,1}^{\varepsilon_\hbar F_1} R \Vert_{s,\rho_n} \leq \frac{1 + \lambda}{1-\lambda} \Vert \langle V \rangle \Vert_{\mathcal{A}_s(\R^d)}.
$$
This concludes the proof of the Lemma.
\end{proof}

Applying this Lemma to \eqref{e:special_equation}, we obtain
\begin{align*}
\Vert R_{n,\hbar} \Vert_{s,\rho_n} & \leq \frac{1}{1-\lambda} \Vert V_{n,\hbar} \Vert_{s,\rho_n} \leq \frac{\alpha^{n-1}}{1-\lambda} \Vert V_1 \Vert_{s,\rho_1}, \\[0.2cm]
\Vert E_{n,n,\hbar} \Vert_{s,\rho_n} & \leq \frac{1 + \lambda}{1 - \lambda} \Vert V_{n,\hbar} \Vert_{s,\rho_n} \leq  \frac{1+\lambda}{1-\lambda} \alpha^{n-1} \Vert V_1 \Vert_{s,\rho_1}.
\end{align*}
Note that, with our choice of constants \eqref{e:constants}: 
$$
\beta(1+\beta) \left(1 + \frac{1+\lambda}{1-\lambda} \right) \leq \alpha.
$$
We next observe that, by Lemma \ref{l:solution_cohomological_equation} and hypothesis  \eqref{e:small_condition3}, there exists $F_{n,\hbar}$ solving \eqref{e:cohomological_equation_n} such that:
\begin{align*}
\Vert F_{n,\hbar} \Vert_{s,\rho_n - \sigma_n} & \leq \varsigma^{-1} \left( \frac{\gamma - 1}{e \sigma_{n}} \right)^{\gamma -1} \Vert V_n -E_{n,n,\hbar} \Vert_{s, \rho_n} \\[0.2cm]
 & \leq \varsigma^{-1} \left( \frac{\gamma - 1}{e \sigma_{1}} \right)^{\gamma -1} \left( 1+ \frac{1+\lambda}{1-\lambda} \right)\alpha^{\frac{n-1}{2}} \Vert V_1 \Vert_{s,\rho_1} \\[0.2cm]
 & \leq \frac{\beta \alpha^{\frac{n-1}{2}}}{2}.
\end{align*}
Then, recalling \eqref{e:generalV}, which at symbol level reads
$$
V_{n+1,\hbar} = \frac{i \varepsilon_\hbar}{\hbar} \int_0^1 t \Psi_{t,\hbar}^{\varepsilon_\hbar F_n} \big( [F_n, V_{n,\hbar} - E_{n,n,\hbar}]_\hbar \big)dt,
$$
we can apply estimate \eqref{trivial_bound} and Lemmas \ref{l:bad_flow_estimate} and \ref{l:inverse_map} to obtain:
\begin{align*}
\Vert V_{n+1,\hbar} \Vert_{s, \rho_n-\sigma_n} & \leq \beta(1+\beta) \left(1 + \frac{1+\lambda}{1-\lambda} \right) \Vert V_{n,\hbar} \Vert_{s, \rho_n} \\[0.2cm]
 & \leq   \alpha \Vert V_{n,\hbar} \Vert_{s,\rho_n} \leq \alpha^n \Vert V_1 \Vert_{s,\rho_1}. 
\end{align*}
This finishes the induction step. Note that our choice of constants also ensures that
$$
\sum_{n=1}^\infty \sigma_n = \sigma_1 \sum_{j=0}^\infty \left( \frac{1}{2} \right)^{\frac{j}{\gamma-1}} \leq \frac{\rho}{2e (\gamma - 1)} \frac{1}{\log 2^{\frac{1}{\gamma-1}}} \leq \frac{\rho}{2e \log 2} \leq \frac{\rho}{2}.
$$
Moreover,
$$
\Vert R_\hbar \Vert_{\mathcal{A}_s(\R^d)} \leq \sum_{j=1}^\infty \Vert R_{j,\hbar} \Vert_{\mathcal{A}_s(\R^d)} \leq \left( \frac{1}{1- \lambda} \sum_{j=0}^\infty \alpha^j \right)\Vert V_1 \Vert_{s,\rho}  \leq 2 \Vert V_1 \Vert_{s,\rho}.
$$

It remains to show that there exists a unitary operator $\mathcal{U}_\hbar$ so that
$$
\mathcal{U}_\hbar := \lim_{n\to \infty} U_{n,\hbar} \cdots U_{1,\hbar}.
$$
For every $1 \leq n$, we set the unitary operator $\mathcal{U}_{n,\hbar}$ by
$$
\mathcal{U}_{n,\hbar} := U_{n,\hbar} \cdots U_{1,\hbar}.
$$
We have, for every $p \geq 1$:
$$
\mathcal{U}_{n+p,\hbar} -\mathcal{U}_{n,\hbar} = \mathcal{U}_{n,\hbar} \, \mathscr{R}_\hbar(n,p),
$$
where
$$
 \mathscr{R}_h(n,p) :=  e^{\frac{i\varepsilon_\hbar}{\hbar} \widehat{F}_{n+1,\hbar}} \cdots e^{\frac{i\varepsilon_\hbar}{\hbar} \widehat{F}_{n+p,\hbar}} - I, \quad \widehat{F}_{j,\hbar} := \Op_\hbar(F_{j,\hbar}).
$$
By Taylor's theorem, we can write
$$
e^{\frac{i\varepsilon_\hbar}{\hbar} \widehat{F}_{j,\hbar}} = I +  \widehat{B}_{j,\hbar}, \quad  \widehat{B}_{j,\hbar} := \frac{i\varepsilon_\hbar}{\hbar} \widehat{F}_{j,\hbar} \int_0^1 e^{\frac{it\varepsilon_\hbar}{\hbar}  \widehat{F}_{j,\hbar}} \, dt.
$$
Moreover, Lemma \ref{l:analytic_calderon_vaillancourt_torus} and \eqref{e:induction_hypothesis} allow us to bound the $\mathcal{L}(L^2)$ norm of $ \widehat{B}_{j,\hbar}$ by:  
$$
\Vert  \widehat{B}_{j,\hbar} \Vert_{\mathcal{L}(L^2)} \leq \frac{C_{d,\rho} \beta \alpha^{\frac{j-1}{2}}}{2} .
$$
Thus
\begin{align*}
\Vert \mathscr{R}_\hbar(n,p) \Vert_{\mathcal{L}(L^2)} \leq -1 + \prod_{j=1}^p \big( 1 + \Vert  \widehat{B}_{n+j,\hbar} \Vert_{\mathcal{L}(L^2)} \big) \leq -1 + \exp \left[ \frac{ C_{d,\rho} \beta \alpha^{\frac{n-1}{2}}}{2(1-\alpha^{1/2})} \right] .
\end{align*}
Finally, taking the limit $n \to \infty$, we obtain that the sequence $\{ \mathcal{U}_{n,\hbar} \}_{n \geq 1}$ is a Cauchy sequence in the operator norm, and then the result holds.
\end{proof}

\begin{proof}[Proof of Theorem \ref{KAM_theorem_precise_1}]
The proof of Theorem \ref{KAM_theorem_precise_1} follows the same lines of the proof of Theorem \ref{KAM_theorem_precise}, after replacing respectively the spaces $\mathcal{A}_{s,\rho}(T^*\mathbb{T}^d)$ and $\mathcal{A}_s(\R^d)$ by $\mathcal{A}_{W,\rho}(T^*\mathbb{T}^d)$ and $S_W(\R^d)$. Notice that condition \eqref{e:small_condition3} is not required since, by the hypothesis $\varepsilon_\hbar \ll \hbar$,  it is satisfied for $\frac{\varepsilon_\hbar}{\hbar} \Vert V \Vert_{W,\rho}$ instead of $\Vert V \Vert_{s,\rho}$ if $\hbar$ is sufficiently small. Observe also that Lemma \ref{l:bad_flow_estimate} remains valid in view of Corollary \ref{c:same_for_sjostrand}; estimate \eqref{trivial_bound} can be replaced by  \eqref{trivial_bound2}, and one can use estimate \eqref{e:calderon_vaillancourt_2} instead of Lemma \ref{l:analytic_calderon_vaillancourt_torus}.
\end{proof}

\subsection{Semiclassical measures and quantum limits}\label{last_section}

We next complete the proof of Theorem \ref{t:main_result_2}. We will require the following two lemmas:
\begin{lemma}
\label{l:O_lemma}
Let $s,\rho > 0$. Assume that $\varepsilon_\hbar = \hbar$ and $V \in \mathcal{A}_{s,\rho}(T^*\mathbb{T}^d)$ satisfies \eqref{e:small_condition3}. Then, for every $a \in \mathcal{A}_{s,\rho}(T^*\mathbb{T}^d)$,
\begin{equation}
\label{e:first_O}
\Vert  \mathcal{U}_\hbar  \Op_\hbar(a) \,  \mathcal{U}_\hbar^* - \Op_\hbar(a) \Vert_{\mathcal{L}(L^2)} = O(\varepsilon_\hbar),
\end{equation}
and similarly, for every $b \in \mathcal{A}_\rho(\mathbb{T}^d)$, 
\begin{equation}
\label{second_O}
\Vert  \mathcal{U}_\hbar  \Op_\hbar(b) \, \mathcal{U}_\hbar^* - \Op_\hbar(b) \Vert_{\mathcal{L}(L^2)} = O(\varepsilon_\hbar).
\end{equation}
\end{lemma}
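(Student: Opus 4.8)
The plan is to write the conjugation as a telescoping sum along the partial products $\mathcal{U}_{n,\hbar}:=U_{n,\hbar}\cdots U_{1,\hbar}$ and to exploit that each single conjugation step costs only a commutator, which at the quantum level is of order $\hbar$. Set $A_n:=\mathcal{U}_{n,\hbar}\,\Op_\hbar(a)\,\mathcal{U}_{n,\hbar}^{*}$ and $A_0:=\Op_\hbar(a)$. Since $\mathcal{U}_{n,\hbar}\to\mathcal{U}_\hbar$ in $\mathcal{L}(L^2)$ (established at the end of the proof of Theorem \ref{KAM_theorem_precise}) and $\Op_\hbar(a)$ is bounded, $A_n\to\mathcal{U}_\hbar\Op_\hbar(a)\mathcal{U}_\hbar^{*}$ in operator norm; hence it suffices to bound $\sum_{n\ge1}\Vert A_n-A_{n-1}\Vert_{\mathcal{L}(L^2)}$ by $O(\varepsilon_\hbar)$.

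First I would record that $A_{n-1}$ is again the Weyl quantization of an analytic symbol with uniformly controlled norm. By the conjugation calculus of Appendix \ref{s:tools_of_analytic_calculus}, $A_{n-1}=\Op_\hbar(a_{n-1})$ with $a_{n-1}=\Psi_{1,\hbar}^{\varepsilon_\hbar F_{n-1,\hbar}}\circ\cdots\circ\Psi_{1,\hbar}^{\varepsilon_\hbar F_{1,\hbar}}(a)$, and Lemma \ref{l:bad_flow_estimate} together with the induction bounds $\Vert F_{j,\hbar}\Vert_{s,\rho_j}\le\frac{1}{2}\beta\alpha^{(j-1)/2}$ from \eqref{e:induction_hypothesis} give
\[ \Vert a_{n-1}\Vert_{s,\rho_n}\le\prod_{j\ge1}\bigl(1+\beta\alpha^{(j-1)/2}\bigr)\,\Vert a\Vert_{s,\rho}\le(1+\lambda)\Vert a\Vert_{s,\rho}, \]
uniformly in $n$ and $\hbar$; moreover $\rho_n\ge\rho-\sum_j\sigma_j\ge\rho/2$, so all these symbols live in a fixed analyticity strip. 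Since $U_{n,\hbar}(t)=e^{it\varepsilon_\hbar\Op_\hbar(F_{n,\hbar})/\hbar}$ commutes with $\Op_\hbar(F_{n,\hbar})$, differentiating $t\mapsto U_{n,\hbar}(t)A_{n-1}U_{n,\hbar}(t)^{*}$ gives
\[ A_n-A_{n-1}=\frac{i\varepsilon_\hbar}{\hbar}\int_0^1 U_{n,\hbar}(t)\,\Op_\hbar\bigl([F_{n,\hbar},a_{n-1}]_\hbar\bigr)\,U_{n,\hbar}(t)^{*}\,dt. \]
The heart of the matter is that the commutator bracket $[F_{n,\hbar},a_{n-1}]_\hbar$ is of size $O(\hbar)$: in the relevant analytic norm this is estimate \eqref{trivial_bound} (retaining the $\hbar$-factor carried by the Moyal bracket of two Weyl symbols), so that, after a loss of analyticity keeping the radius bounded below by $\rho/4$, $\Vert[F_{n,\hbar},a_{n-1}]_\hbar\Vert\lesssim\hbar\,\Vert F_{n,\hbar}\Vert_{s,\rho_n}\Vert a_{n-1}\Vert_{s,\rho_n}$. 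Combining this with the analytic Calderón-Vaillancourt estimate (Lemma \ref{l:analytic_calderon_vaillancourt_torus}), the unitarity of $U_{n,\hbar}(t)$, and $\varepsilon_\hbar=\hbar$, I obtain $\Vert A_n-A_{n-1}\Vert_{\mathcal{L}(L^2)}\lesssim\varepsilon_\hbar\Vert F_{n,\hbar}\Vert_{s,\rho_n}\Vert a_{n-1}\Vert_{s,\rho_n}\lesssim\varepsilon_\hbar\,\alpha^{(n-1)/2}\Vert a\Vert_{s,\rho}$, and summing the geometric series yields \eqref{e:first_O}. Estimate \eqref{second_O} is obtained in the same way, viewing $b\in\mathcal{A}_\rho(\mathbb{T}^d)$ as a ($\xi$-independent, hence bounded) symbol on $T^*\mathbb{T}^d$ with $\Vert\Op_\hbar(b)\Vert_{\mathcal{L}(L^2)}\lesssim\Vert b\Vert_{\mathcal{A}_\rho(\mathbb{T}^d)}$: the conjugates of $b$ through the flow maps remain controlled analytic-symbol operators, their commutators with $\Op_\hbar(F_{n,\hbar})$ are again $O(\hbar)$, and the only additional bookkeeping is a further fixed loss of analyticity in the $\xi$ variable, harmless since the constant in Lemma \ref{l:analytic_calderon_vaillancourt_torus} does not depend on the $\xi$-analyticity width.

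The main obstacle is exactly the $O(\hbar)$ gain in the single conjugation step: it is this factor, played against the prefactor $\varepsilon_\hbar/\hbar=1$, that produces the claimed $O(\varepsilon_\hbar)$. In particular the crude bound $\Vert[\Op_\hbar(F_{n,\hbar}),A_{n-1}]\Vert_{\mathcal{L}(L^2)}\le 2\Vert\Op_\hbar(F_{n,\hbar})\Vert_{\mathcal{L}(L^2)}\Vert A_{n-1}\Vert_{\mathcal{L}(L^2)}$ is only $O(1)$ and does not suffice, so one must genuinely keep the conjugated observables $A_{n-1}$ inside the analytic symbol classes so that the Moyal calculus applies. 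Everything else — the inequality $\prod_j(1+\beta\alpha^{(j-1)/2})\le1+\lambda$ with the constants \eqref{e:constants}, the summability of $\sum_j\sigma_j$, and tracking the finite analyticity losses — is routine.
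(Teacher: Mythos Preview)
Your telescoping along $\mathcal{U}_{n,\hbar}$, uniform control of the conjugated symbols $a_{n-1}$ via Lemma~\ref{l:bad_flow_estimate}, and extraction of an $\hbar$ from each single commutator at the cost of a small loss of analyticity is exactly the paper's strategy; the paper merely packages the last step through Lemma~\ref{flow_estimate}, applied with the explicit sequence of losses $\delta_n=(1/2)^{(n-1)/3}\delta_1$ chosen so that $\Vert\varepsilon_\hbar F_{n,\hbar}\Vert/\delta_n^2\lesssim\varepsilon_\hbar\delta_n$ is summable. Two points need more care than you give them. First, the $\hbar$-gaining commutator bound you invoke is Lemma~\ref{l:commutator_loss}, estimate \eqref{e:first_commutator}, not \eqref{trivial_bound} --- the latter carries no $\hbar$ and no loss --- and the losses must be spread over all $n$ (a summable sequence $\sigma_n'$ with $\Vert F_{n,\hbar}\Vert/\sigma_n'^2$ still summable), not absorbed into one fixed drop to ``radius $\rho/4$''. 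Second, your reduction for \eqref{second_O} fails as written: a function $b\in\mathcal{A}_\rho(\mathbb{T}^d)$ is \emph{not} in $\mathcal{A}_{s,\rho}(T^*\mathbb{T}^d)$ for any $s>0$, since a nonzero constant in $\xi$ has a Dirac mass as Fourier transform and hence does not lie in $\mathcal{A}_s(\R^d)$. The remedy is the mixed estimate \eqref{e:second_commutator} (equivalently the second inequality of Lemma~\ref{flow_estimate}) for the bracket $[F_{n,\hbar},b]_\hbar$; after one bracket the result does lie in the analytic class $\mathcal{A}_{u,u}$, and the rest of your argument then applies to the pieces $b_{n-1}-b$.
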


\begin{proof}
For every $n \geq 1$, we define:
$$
\delta_n := \left( \frac{1}{2} \right)^{\frac{n-1}{3}} \delta_1, \quad  \delta_1 :=  \min \left \lbrace  \frac{\rho}{10}, \frac{s}{10} \right \rbrace.
$$
Note that
$$
\sum_{n=1}^\infty \delta_n \leq  \min \left \lbrace \frac{\rho}{2}, \frac{s}{2} \right \rbrace.
$$
By \eqref{e:induction_hypothesis}, we have 
$$
\Vert F_{n,\hbar} \Vert_{s, \rho_n} \leq C_{\rho} \, \delta_n^3,
$$
where the constant $C_{\rho}$ depends only on $\rho$. Hence, defining the sequence $u_n := \min \{s, \rho_n \}$, one has that, for every $n \geq 1$ and $\hbar >0$ sufficiently small:
$$
\frac{2 \Vert \varepsilon_\hbar F_{n,\hbar} \Vert_{u_n}}{\delta_n^2} \leq C_{\rho} \, \delta_n \,  \varepsilon_\hbar \leq \frac{1}{2},
$$
where the norm $\Vert \cdot \Vert_{u_n}$ is defined by \eqref{e:compact_norm}.
Using Lemma \ref{flow_estimate}, for every $a \in \mathcal{A}_{s,\rho}(T^*\mathbb{T}^d)$, we have
\begin{equation}
\label{e:egorov_estiamate}
\big \Vert \Psi_{1,\hbar}^{\varepsilon_\hbar F_{n,\hbar}} (a) - a \big \Vert_{u_n - \delta_n} \leq C_\rho \, \delta_n \, \varepsilon_\hbar \Vert a \Vert_{s,\rho}.
\end{equation}
Finally, recalling that $\mathcal{U}_\hbar = \lim_{n \to \infty} U_{n,\hbar} \cdots U_{1,\hbar}$, that every operator $U_{n,\hbar}$ is unitary on $L^2(\mathbb{T}^d)$, and using Lemma \ref{l:analytic_calderon_vaillancourt_torus} and \eqref{e:egorov_estiamate}, we obtain:
\begin{align*}
\Vert \mathcal{U}_\hbar \, \Op_\hbar(a) \, \mathcal{U}_\hbar^* - \Op_\hbar(a) \Vert_{\mathcal{L}(L^2)} & \\[0.2cm]
 & \hspace*{-3cm} \leq C_\rho \sum_{n=1}^\infty \big \Vert \Psi_{1,\hbar}^{\varepsilon_\hbar F_{n,\hbar}}(a) - a \big \Vert_{u_n - \delta_n} \leq C_\rho \, \varepsilon_\hbar \Vert a \Vert_{s,\rho} \sum_{n=1}^\infty \delta_n \leq C_\rho \, \varepsilon_\hbar  \Vert a \Vert_{s,\rho}.
\end{align*}
This shows \eqref{e:first_O}. The proof of \eqref{second_O} is completely analogous but, in this case, using Lemma \ref{flow_estimate} to show that
$$
\big \Vert \Psi_{1,\hbar}^{\varepsilon_\hbar F_{n,\hbar}} (b) - b \big \Vert_{u_n - \delta_n} \leq C_\rho \, \delta_n \, \varepsilon_\hbar \Vert b \Vert_{\mathcal{A}_\rho(\mathbb{T}^d)},
$$
instead of \eqref{e:egorov_estiamate}.
\end{proof}

\begin{lemma}
\label{better_text_functions}
Let $s,\rho > 0$. Assume that $\varepsilon_\hbar = \hbar$ and $V \in \mathcal{A}_{s,\rho}(T^*\mathbb{T}^d)$ satisfies \eqref{e:small_condition3}. Then, for every $a \in \mathcal{C}_c^{\infty}(T^*\mathbb{T}^d)$,
\begin{equation}
\label{e:first_o}
\Vert \mathcal{U}_\hbar \, \Op_\hbar (a) \, \mathcal{U}_\hbar^* - \Op_\hbar( a  ) \Vert_{\mathcal{L}(L^2)} = o(1), \quad  \text{as } \hbar \to 0^+.
\end{equation}
\end{lemma}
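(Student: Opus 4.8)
The plan is to deduce the statement from Lemma \ref{l:O_lemma} by a density argument, exploiting that $\mathcal{U}_\hbar$ is unitary on $L^2(\mathbb{T}^d)$ uniformly in $\hbar$ and that, by the Calderón–Vaillancourt theorem, the $\mathcal{L}(L^2)$-norm of $\Op_\hbar(b)$ is controlled, uniformly in $\hbar \in (0,1]$, by finitely many sup-norms of derivatives of $b$. Thus it suffices to approximate an arbitrary $a \in \mathcal{C}_c^\infty(T^*\mathbb{T}^d)$ by symbols in the analytic class $\mathcal{A}_{s,\rho}(T^*\mathbb{T}^d)$ in the $\mathcal{C}^{M}$-topology, where $M = M(d)$ is the order of differentiation appearing in the Calderón–Vaillancourt bound.

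First I would fix $\varepsilon' > 0$ and construct an approximant $a_{\varepsilon'} \in \mathcal{A}_{s,\rho}(T^*\mathbb{T}^d)$ with $\sup_{|\alpha| \le M} \Vert \partial^\alpha (a - a_{\varepsilon'}) \Vert_{L^\infty(T^*\mathbb{T}^d)} < \varepsilon'$. Writing $a(x,\xi) = \sum_{k \in \mathbb{Z}^d} \widehat{a}(k,\xi)\, e_k(x)$, one truncates the Fourier series in $x$ to $|k| \le N$ — which, by the smoothness of $a$ in $x$, costs an error that tends to $0$ in $\mathcal{C}^M$ as $N \to \infty$, and which makes the sum $\sum_{|k|\le N}\Vert\cdot\Vert_{\mathcal{A}_s(\R^d)} e^{|k|\rho}$ finite — and then regularizes the truncated symbol in the variable $\xi$ by convolving each coefficient $\widehat{a}(k,\cdot)$ with a Gaussian $G_\delta$. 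Since $\widehat{G_\delta}$ has Gaussian decay $e^{-c\delta|\eta|^2}$, the $\xi$-Fourier transform of the regularized symbol is a Schwartz factor times $e^{-c\delta|\eta|^2}$, so each coefficient lies in $\mathcal{A}_s(\R^d)$; on the other hand the regularization error tends to $0$ in $\mathcal{C}^M$ as $\delta \to 0$ by the standard mollifier estimate (using that the original coefficients are compactly supported in $\xi$, uniformly in $k$). Choosing $N$ large and $\delta$ small — and, if $a$ is real, keeping only the symmetric part so that $a_{\varepsilon'}$ is real valued — produces $a_{\varepsilon'}\in\mathcal{A}_{s,\rho}(T^*\mathbb{T}^d)$ with the required closeness.

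Then, by the Calderón–Vaillancourt theorem, $\Vert\Op_\hbar(a - a_{\varepsilon'})\Vert_{\mathcal{L}(L^2)} \le C_d\, \varepsilon'$ for all $\hbar \in (0,1]$, and since $\mathcal{U}_\hbar$ is unitary,
\begin{align*}
\big\Vert \mathcal{U}_\hbar \Op_\hbar(a) \mathcal{U}_\hbar^* - \Op_\hbar(a) \big\Vert_{\mathcal{L}(L^2)}
&\le 2 \big\Vert \Op_\hbar(a - a_{\varepsilon'}) \big\Vert_{\mathcal{L}(L^2)} + \big\Vert \mathcal{U}_\hbar \Op_\hbar(a_{\varepsilon'}) \mathcal{U}_\hbar^* - \Op_\hbar(a_{\varepsilon'}) \big\Vert_{\mathcal{L}(L^2)} \\
&\le 2 C_d\, \varepsilon' + O(\varepsilon_\hbar),
\end{align*}
where the last term follows from Lemma \ref{l:O_lemma} applied to $a_{\varepsilon'} \in \mathcal{A}_{s,\rho}(T^*\mathbb{T}^d)$, the implied constant depending on $\Vert a_{\varepsilon'}\Vert_{s,\rho}$, which is now fixed. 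Taking $\limsup_{\hbar \to 0^+}$ gives $\limsup_{\hbar\to0^+}\Vert \mathcal{U}_\hbar \Op_\hbar(a)\mathcal{U}_\hbar^* - \Op_\hbar(a)\Vert_{\mathcal{L}(L^2)} \le 2C_d\,\varepsilon'$, and letting $\varepsilon' \to 0$ yields \eqref{e:first_o}.

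The main obstacle is the construction of $a_{\varepsilon'}$: one needs an approximant that \emph{simultaneously} lies in $\mathcal{A}_{s,\rho}$ — i.e.\ has Fourier data decaying exponentially in the $x$-frequency and whose $\xi$-Fourier transform decays exponentially in the $\xi$-frequency — and is $\mathcal{C}^M$-close to $a$. Gaussian mollification in $\xi$ buys analyticity in $\xi$ (Gaussian decay beats any $e^{|\eta|s}$) while Fourier truncation in $x$ buys analyticity in $x$, and one must check that neither operation spoils the $\mathcal{C}^M$-approximation; once this is in place, the rest is a soft limiting argument resting on Lemma \ref{l:O_lemma} and the uniform-in-$\hbar$ Calderón–Vaillancourt estimate.
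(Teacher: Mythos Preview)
Your proposal is correct and follows essentially the same strategy as the paper: a density argument that reduces \eqref{e:first_o} to Lemma~\ref{l:O_lemma} via the triangle inequality and a uniform-in-$\hbar$ Calder\'on--Vaillancourt bound on $\Op_\hbar(a-a_{\varepsilon'})$. The only cosmetic differences are that the paper builds the analytic approximant in one stroke by Gaussian damping on the Fourier side, $\widehat{a}_R(w)=\widehat{a}(w)e^{-|w|^2/R}$, and invokes the sharper bound $\Vert\Op_\hbar(b)\Vert_{\mathcal{L}(L^2)}\le C_d\Vert b\Vert_{L^\infty}+O(\hbar)$ so that only $L^\infty$-closeness (rather than $\mathcal{C}^M$-closeness) is needed.
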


\begin{proof}
Let $\varepsilon > 0$ and $a \in \mathcal{C}_c^\infty(T^* \mathbb{T}^d)$.  Assume that, for every $s,\rho > 0$, there exists  $a^\dagger \in \mathcal{A}_{s,\rho}(T^*\mathbb{T}^d)$ such that
\begin{equation}
\label{approximation}
\Vert a - a^\dagger \Vert_{L^\infty(T^*\mathbb{T}^d)} \leq \varepsilon.
\end{equation}
Then, by Lemma \ref{l:O_lemma}, the triangular inequality and \cite[Thm. 13.13]{Zw12}:
\begin{align*}
\Vert \mathcal{U}_\hbar^* \Op_h(a) \, \mathcal{U}_\hbar - \Op_h(a) \Vert_{\mathcal{L}(L^2)} & \\[0.2cm]
& \hspace*{-4.7cm} \leq \Vert \mathcal{U}_\hbar \Op_h(a - a^\dagger )\, \mathcal{U}_\hbar^* \Vert_{\mathcal{L}(L^2)} + \Vert \mathcal{U}_\hbar \Op_h(a^\dagger) \,\mathcal{U}_\hbar^* - \Op_h(a^\dagger ) \Vert_{\mathcal{L}(L^2)} + \Vert \Op_h(a  - a^\dagger ) \Vert_{\mathcal{L}(L^2)} \\[0.2cm]
& \hspace*{-4.7cm} \leq C_d \Vert a - a^\dagger \Vert_{L^\infty(T^* \mathbb{T}^d)} + O_\varepsilon(\hbar),
\end{align*}
and hence
$$
\limsup_{\hbar \to 0^+} \Vert \mathcal{U}_\hbar \Op_\hbar(a) \, \mathcal{U}_\hbar^* - \Op_h(a) \Vert_{\mathcal{L}(L^2)} \leq C_d \, \varepsilon.
$$
Since the choice of $\varepsilon > 0$ was arbitrarily, we conclude that
$$
\lim_{\hbar \to 0^+} \Vert \mathcal{U}_\hbar \Op_h(a) \, \mathcal{U}_\hbar^* - \Op_h(a) \Vert_{\mathcal{L}(L^2)} = 0.
$$
It remains to show \eqref{approximation}. Using the notation \eqref{e:two_fourier} of the Appendix, we write 
$$
a(z) = \frac{1}{(2\pi)^d} \int_{\mathcal{Z}^d} \widehat{a}(w) e^{i z\cdot w} \kappa(dw), \quad z = (x,\xi) \in T^*\mathbb{T}^d.
$$
For $R \geq 1$, we define $a_R \in \mathcal{A}_{s,\rho}(T^*\mathbb{T}^d)$ by
$$
\widehat{a}_R(w) = \widehat{a}(w) e^{-\frac{ \vert w \vert^2}{R}}.
$$
It satisfies
$$
\Vert a_R - a \Vert_{L^\infty(T^*\mathbb{T}^d)} \leq \frac{1}{(2\pi)^d} \int_{\mathcal{Z}^d}\big  \vert \widehat{a}(w) \big \vert \big \vert e^{- \frac{\vert w \vert^2}{R}} - 1 \big \vert \kappa(dw) \to 0, \quad \text{as } R \to \infty.
$$
Then it is sufficient to take $a^\dagger = a_R$ for $R$ sufficiently large.
\end{proof}

\begin{proof}[Proof of Theorem \ref{t:main_result_2}]

By Proposition \ref{p:non_perturbe_operator},
\begin{equation}
\label{e:non_perturbed_measures}
\mathcal{M}(\widehat{L}_{\omega,\hbar}) = \bigcup_{\xi \in \mathcal{L}_\omega^{-1}(1)} \big \{ \mathfrak{h}_{\mathbb{T}^d \times \{ \xi \}} \big \} \cup \{0 \}, \quad \mathcal{N}(\widehat{L}_{\omega,\hbar}) = \left \{ \frac{1}{(2\pi)^d}dx \right \}.  
\end{equation}
On the other hand, Theorem \ref{KAM_theorem_precise} implies that the set of normalized eigenfunctions of $\widehat{Q}_\hbar$ is precisely the orthonormal basis of $L^2(\mathbb{T}^d)$ given by
$$
\{ \Psi_{k,\hbar} =  \mathcal{U}_\hbar^*  e_k \, : \, k \in \mathbb{Z}^d \}.
$$
Using Lemma \ref{better_text_functions}, we obtain that, for every $a \in \mathcal{C}_c^\infty(T^*\mathbb{T}^d)$,
$$
W_{\Psi_{k,\hbar}}^\hbar(a) = W_{e_k}^\hbar(a) + o(1), \quad k \in \mathbb{Z}^d.
$$
Finally, by \eqref{second_O} and since $\mathcal{A}_\rho(\mathbb{T}^d)$ is dense in $\mathcal{C}(\mathbb{T}^d)$, we obtain that, for every $b \in \mathcal{C}(\mathbb{T}^d)$,
$$
\int_{\mathbb{T}^d} b(x) \vert \Psi_{k,\hbar}(x) \vert^2 dx = \int_{\mathbb{T}^d} b(x) \vert e_k(x) \vert^2 dx + o(1).  
$$
Therefore, the proof of the Theorem follows from \eqref{e:non_perturbed_measures}.

\end{proof}

Finally, we complete the proof of Theorem \ref{t:main_result_3}. It follows essentially the same arguments as before, but substituting Lemmas \ref{l:O_lemma} and \ref{better_text_functions} by the following one:

\begin{lemma}
\label{l:last_lemma} Let $\rho > 0$. Let $V \in \mathcal{A}_{W,\rho}(T^*\mathbb{T}^d)$, and assume that $\varepsilon_\hbar \ll \hbar$. Then, for every $a \in \mathcal{C}_c^\infty(T^*\mathbb{T}^d)$,

\begin{equation}
\label{e:first_O2}
\Vert \mathcal{U}_\hbar \, \Op_\hbar(a) \, \mathcal{U}_\hbar^* - \Op_\hbar(a) \Vert_{\mathcal{L}(L^2)} = o(1), \quad \text{as } \hbar \to 0^+,
\end{equation}
and similarly, for every $b \in \mathcal{C}(\mathbb{T}^d)$, 
\begin{equation}
\label{second_O2}
\Vert \mathcal{U}_\hbar \, \Op_\hbar(b) \, \mathcal{U}_\hbar^* - \Op_\hbar(b) \Vert_{\mathcal{L}(L^2)} = o(1), \quad \text{as } \hbar \to 0^+.
\end{equation}

\end{lemma}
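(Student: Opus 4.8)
The plan is to mimic, step by step, the proofs of Lemmas \ref{l:O_lemma} and \ref{better_text_functions}, replacing throughout the analytic scale $\mathcal{A}_{s,\rho}(T^*\mathbb{T}^d)$, $\mathcal{A}_s(\R^d)$ by the Sjöstrand scale $\mathcal{A}_{W,\rho}(T^*\mathbb{T}^d)$, $S_W(\R^d)$, and using the hypothesis $\varepsilon_\hbar\ll\hbar$ to furnish the smallness that condition \eqref{e:small_condition3} supplied before — exactly as in the passage from Theorem \ref{KAM_theorem_precise} to Theorem \ref{KAM_theorem_precise_1}. The argument has two stages: first one proves \eqref{e:first_O2}--\eqref{second_O2} for symbols lying in the Sjöstrand classes themselves, and then one upgrades to $\mathcal{C}_c^\infty(T^*\mathbb{T}^d)$, resp. $\mathcal{C}(\mathbb{T}^d)$, by density.

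\emph{Stage 1.} Set $\mathcal{U}_{n,\hbar}:=U_{n,\hbar}\cdots U_{1,\hbar}$, so that $\mathcal{U}_\hbar=\lim_n\mathcal{U}_{n,\hbar}$ in $\mathcal{L}(L^2)$, and, since each $U_{j,\hbar}=e^{\frac{i\varepsilon_\hbar}{\hbar}\Op_\hbar(F_{j,\hbar})}$ and the Weyl calculus is exact for these generators,
\[
\mathcal{U}_{n,\hbar}\,\Op_\hbar(a)\,\mathcal{U}_{n,\hbar}^{*}=\Op_\hbar\big(\Psi_{1,\hbar}^{\varepsilon_\hbar F_{n,\hbar}}\circ\cdots\circ\Psi_{1,\hbar}^{\varepsilon_\hbar F_{1,\hbar}}(a)\big).
\]
Fix a summable, geometrically decreasing sequence $(\delta_n)$ with $\sum_n\delta_n\le\rho/2$, chosen (as in the proof of Lemma \ref{l:O_lemma}) so that $\delta_n$ dominates the decay rate of $\Vert F_{n,\hbar}\Vert_{W,\rho_n}$ supplied by the iteration of Theorem \ref{KAM_theorem_precise_1} (the $S_W$-analogue of \eqref{e:induction_hypothesis}). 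Applying at the $n$-th step the Sjöstrand version of the flow estimate Lemma \ref{flow_estimate} — valid in view of Corollary \ref{c:same_for_sjostrand} and \eqref{trivial_bound2} — one loses a radius $\delta_n$ in the $x$-variable and gains a bound of the shape $C_\rho\,\delta_n\,\varepsilon_\hbar\,\Vert g\Vert_{W,\rho_n}$ for $\Vert\Psi_{1,\hbar}^{\varepsilon_\hbar F_{n,\hbar}}(g)-g\Vert$; since $\sum_n\delta_n\le\rho/2$ the radii stay $\ge\rho/2$, the convergence of the iteration keeps the composed symbols bounded by $C_\rho\Vert a\Vert_{W,\rho}$ uniformly in $n$ and $\hbar$, and telescoping (using the unitarity of the $U_{j,\hbar}$ and the $L^2$-boundedness \eqref{e:calderon_vaillancourt_2} of the Weyl quantization) yields
\[
\big\Vert\mathcal{U}_\hbar\,\Op_\hbar(a)\,\mathcal{U}_\hbar^{*}-\Op_\hbar(a)\big\Vert_{\mathcal{L}(L^2)}\le C_\rho\,\varepsilon_\hbar\,\Vert a\Vert_{W,\rho}\sum_{n\ge1}\delta_n=O(\varepsilon_\hbar),
\]
which is $o(1)$ since $\varepsilon_\hbar\ll\hbar\le1$. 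The same computation applied to a symbol independent of $\xi$ gives \eqref{second_O2} for every $b\in\mathcal{A}_\rho(\mathbb{T}^d)$.

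\emph{Stage 2.} Let $a\in\mathcal{C}_c^\infty(T^*\mathbb{T}^d)$ and $\varepsilon>0$. As in the proof of Lemma \ref{better_text_functions}, set $\widehat{a^\dagger}(w):=\widehat{a}(w)e^{-|w|^2/R}$, which lies in $\mathcal{A}_{W,\rho}(T^*\mathbb{T}^d)$ and satisfies $\Vert\Op_\hbar(a-a^\dagger)\Vert_{\mathcal{L}(L^2)}\le C_d\Vert a-a^\dagger\Vert_{L^\infty(T^*\mathbb{T}^d)}\le C_d(2\pi)^{-d}\int_{\mathcal{Z}^d}|\widehat{a}(w)|\,|1-e^{-|w|^2/R}|\,\kappa(dw)$ (cf. \cite[Thm. 13.13]{Zw12} and notation \eqref{e:two_fourier}), which is $\le\varepsilon$ for $R$ large, uniformly in $\hbar$. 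Then, by the triangle inequality, the unitarity of $\mathcal{U}_\hbar$, and Stage 1,
\[
\limsup_{\hbar\to0^+}\big\Vert\mathcal{U}_\hbar\,\Op_\hbar(a)\,\mathcal{U}_\hbar^{*}-\Op_\hbar(a)\big\Vert_{\mathcal{L}(L^2)}\le 2\varepsilon,
\]
and letting $\varepsilon\to0$ proves \eqref{e:first_O2}. For \eqref{second_O2} with $b\in\mathcal{C}(\mathbb{T}^d)$ the approximation is immediate: $\Op_\hbar(b)$ is multiplication by $b$, so $\Vert\Op_\hbar(b-b^\dagger)\Vert_{\mathcal{L}(L^2)}=\Vert b-b^\dagger\Vert_{L^\infty}$, and trigonometric polynomials, which lie in $\mathcal{A}_\rho(\mathbb{T}^d)$, are dense in $\mathcal{C}(\mathbb{T}^d)$.

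\emph{Main obstacle.} No genuinely new estimate is needed; the point is to verify that the Sjöstrand-algebra versions of the commutator and flow estimates interlock exactly as their analytic counterparts do in Lemmas \ref{l:O_lemma}--\ref{better_text_functions}. The delicate feature is that, unlike the scale $\mathcal{A}_{s,\rho}$, the class $S_W$ carries no regularizing scale in the $\xi$ variable, so one cannot buy back $\xi$-derivatives by giving up analyticity there; the only available loss is the (summable) loss $\delta_n$ of $x$-analyticity radius. What rescues the telescoping is that $S_W$ is a Banach algebra stable under composition (Sjöstrand \cite{Sjo95}), so that the finitely many conjugating flows $\Psi_{1,\hbar}^{\varepsilon_\hbar F_{n,\hbar}}$ compose with uniformly bounded $\mathcal{A}_{W,\cdot}$-norm, together with the decay of $\Vert F_{n,\hbar}\Vert_{W,\rho_n}$ already produced by the iteration of Theorem \ref{KAM_theorem_precise_1} and the extra smallness $\varepsilon_\hbar/\hbar\to0$. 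Once the matching $\delta_n\leftrightarrow\Vert F_{n,\hbar}\Vert_{W,\rho_n}$ is set up, the bookkeeping is identical to the analytic case.
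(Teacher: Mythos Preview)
Your density argument in Stage 2 is fine and matches the paper. The problem is in Stage 1: you invoke ``the Sj\"ostrand version of the flow estimate Lemma~\ref{flow_estimate}'', claiming it follows from Corollary~\ref{c:same_for_sjostrand} and \eqref{trivial_bound2}, and that it yields a bound of size $C_\rho\,\delta_n\,\varepsilon_\hbar$. No such Sj\"ostrand analogue of Lemma~\ref{flow_estimate} exists. That lemma rests on Lemma~\ref{l:commutator_loss}, which gains a factor $\hbar$ in the commutator by spending analyticity in \emph{both} $x$ and $\xi$ (the bound $|\sin(\tfrac{\hbar}{2}\{w',w-w'\})|\le \hbar|w'||w-w'|$ produces derivatives in both variables). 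In $\mathcal{A}_{W,\rho}$ there is no scale in $\xi$ to absorb $|\eta'|$-growth, so \eqref{trivial_bound2} gives only $\Vert[F,a]_\hbar\Vert_{W,\rho}\le 2\Vert F\Vert_{W,\rho}\Vert a\Vert_{W,\rho}$ with \emph{no} $\hbar$-gain and \emph{no} need to lose radius. Your asserted bound $O(\varepsilon_\hbar)$ is therefore too strong and the auxiliary sequence $(\delta_n)$ is a red herring carried over from Lemma~\ref{l:O_lemma}.

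The paper's argument is simpler and is exactly the mechanism you yourself describe in your ``Main obstacle'' paragraph. One uses Corollary~\ref{c:same_for_sjostrand} directly: from the exponential series, for $a\in\mathcal{A}_{W,\rho}$,
\[
\big\Vert \Psi_{1,\hbar}^{\varepsilon_\hbar F_{n,\hbar}}(a)-a\big\Vert_{W,\rho_n}
\;\le\; \Big(e^{\,2(\varepsilon_\hbar/\hbar)\Vert F_{n,\hbar}\Vert_{W,\rho_n}}-1\Big)\Vert a\Vert_{W,\rho}
\;=\;O\!\left(\frac{\varepsilon_\hbar}{\hbar}\right)\alpha^{\frac{n-1}{2}}\Vert a\Vert_{W,\rho},
\]
using only the algebra property and the decay $\Vert F_{n,\hbar}\Vert_{W,\rho_n}\le\tfrac{\beta}{2}\alpha^{(n-1)/2}$ from the iteration. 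Telescoping and \eqref{e:calderon_vaillancourt_2} then give
$\Vert\mathcal{U}_\hbar\Op_\hbar(a)\mathcal{U}_\hbar^{*}-\Op_\hbar(a)\Vert_{\mathcal{L}(L^2)}=O(\varepsilon_\hbar/\hbar)=o(1)$, with no loss of $x$-analyticity beyond the $\rho_n$ already produced by the KAM scheme. So drop the $\delta_n$'s, replace your claimed $O(\varepsilon_\hbar)$ by $O(\varepsilon_\hbar/\hbar)$, and the proof goes through.
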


\begin{proof}
Recall that, in the case $V \in \mathcal{A}_{W,\rho}(T^*\mathbb{T}^d)$, the first estimate of \eqref{e:induction_hypothesis} remains valid and can be rewriten as
$$
\Vert F_{j,\hbar} \Vert_{W,\rho_j} \leq \frac{\beta \alpha^{\frac{j-1}{2}}}{2}.
$$
Thus, by Corollary \ref{c:same_for_sjostrand}, for every $a \in \mathcal{A}_{W,\rho}(T^*\mathbb{T}^d)$ and $n \geq 1$:
\begin{equation}
\label{e:egorov_estimate2}
\Vert \Psi_{1,\hbar}^{\varepsilon_\hbar F_{n,\hbar}} (a) - a \Vert_{W,\rho_n} \leq O\left( \frac{ \varepsilon_\hbar}{\hbar} \right) \alpha^{ \frac{n-1}{2}} \Vert a \Vert_{W,\rho}.
\end{equation}
Finally, recalling again that $\mathcal{U}_\hbar = \lim_{n \to \infty} U_{n,\hbar} \cdots U_{1,\hbar}$, that every operator $U_{n,\hbar}$ is unitary on $L^2(\mathbb{T}^d)$, and using \eqref{e:calderon_vaillancourt_2} and \eqref{e:egorov_estimate2}, we obtain:
\begin{align*}
\Vert \mathcal{U}_\hbar \, \Op_\hbar(a) \, \mathcal{U}_\hbar^* - \Op_\hbar(a) \Vert_{\mathcal{L}(L^2)} & \\[0.2cm]
 & \hspace*{-3cm} \leq C_\rho \sum_{n=1}^\infty \big \Vert \Psi_{1,\hbar}^{\varepsilon_\hbar F_{n,\hbar}}(a) - a \big \Vert_{W,\rho_n} \leq  O_\rho \left( \frac{ \varepsilon_\hbar}{\hbar} \right) \sum_{n=1}^\infty \alpha^{ \frac{n-1}{2}} \Vert a \Vert_{W,\rho} = O\left( \frac{ \varepsilon_\hbar}{\hbar} \right).
\end{align*}
Therefore, \eqref{e:first_O2} follows by density of $\mathcal{C}_c^\infty(T^*\mathbb{T}^d) \cap \mathcal{A}_{W,\rho}(T^*\mathbb{T}^d)$ in $\mathcal{C}_c^\infty(T^*\mathbb{T}^d)$; and, since $\mathcal{A}_\rho(\mathbb{T}^d) \subset \mathcal{A}_{W,\rho}(T^*\mathbb{T}^d)$, \eqref{second_O2} follows by density of $\mathcal{A}_\rho(\mathbb{T}^d)$ in $\mathcal{C}(\mathbb{T}^d)$.
\end{proof}

\begin{proof}[Proof of Theorem \ref{t:main_result_3}]

Applying Lemma \ref{l:last_lemma}, the proof of Theorem \ref{t:main_result_3} reduces to the proof of Theorem \ref{t:main_result_2}.
\end{proof}

\appendix

\section{Pseudodifferential calculus on the torus}
\label{s:tools_of_analytic_calculus}

We include some basic lemmas about the quantization of the spaces $\mathcal{A}_{s,\rho}(T^*\mathbb{T}^d)$, $\mathcal{A}_s(\R^d)$, $\mathcal{A}_\rho(\mathbb{T}^d)$, $\mathcal{A}_{W,\rho}(T^*\mathbb{T}^d)$ and $S_W(\R^d)$. We fix $s,\rho > 0$ all along this appendix.

 \begin{definition} Let $a : T^*\mathbb{T}^d \to \mathbb{C}$ be a symbol. The semiclassical Weyl quantization $\Op_\hbar(a)$ acting on $\psi \in \mathscr{S}(\mathbb{T}^d)$ is defined by
$$
\Op_\hbar(a) \psi(x) = \sum_{k,j\in\mathbb{Z}^d} \widehat{a}\left( k-j , \frac{ \hbar (k+j)}{2} \right) \widehat{\psi}(k) e^{ij\cdot x},
$$
where $\widehat{a}(k-j,\cdot)$ denotes the $(k-j)^{th}$-Fourier coefficient in the variable $x$.
\end{definition}

\begin{lemma}[Analytic Calderón-Vaillancourt theorem]
\label{l:analytic_calderon_vaillancourt_torus}
For every $a\in \mathcal{A}_{s,\rho}(T^*\mathbb{T}^d)$, the following holds:
\begin{equation}
\label{e:calderon_vaillancourt}
\Vert \Op_\hbar(a) \Vert_{\mathcal{L}(L^2(\mathbb{T}^d))} \leq C_{d,\rho} \Vert a \Vert_{s,\rho}, \quad \hbar \in (0,1].
\end{equation}
\end{lemma}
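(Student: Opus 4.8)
The plan is to realise $\Op_\hbar(a)$ as an infinite matrix in the orthonormal Fourier basis $\{e_k\}_{k\in\mathbb{Z}^d}$ of $L^2(\mathbb{T}^d)$ and to estimate its operator norm by a Schur test; beyond bookkeeping with the Fourier normalisations, the only input is the elementary embedding $\mathcal{A}_s(\R^d)\hookrightarrow L^\infty(\R^d)$ together with the absolute summability encoded in the norm $\Vert\cdot\Vert_{s,\rho}$.

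First I would read off the matrix coefficients from the definition of $\Op_\hbar$: taking $\psi=e_k$ one finds
\[
\Op_\hbar(a)e_k=(2\pi)^{d/2}\sum_{j\in\mathbb{Z}^d}\widehat{a}\Big(k-j,\tfrac{\hbar(k+j)}{2}\Big)\,e_j ,
\]
so that, identifying $\psi\in L^2(\mathbb{T}^d)$ with $(\widehat{\psi}(k))_k\in\ell^2(\mathbb{Z}^d)$ via the unitary Fourier isomorphism, $\Op_\hbar(a)$ becomes the matrix operator with entries $M_{jk}=(2\pi)^{d/2}\,\widehat{a}\big(k-j,\tfrac{\hbar(k+j)}{2}\big)$. (To avoid convergence issues it suffices to do this for trigonometric polynomials $\psi$ and to extend $\Op_\hbar(a)$ to $L^2(\mathbb{T}^d)$ by density once the bound is proven.) The substitution $m=k-j$ then gives, for every fixed $j$,
\[
\sum_{k\in\mathbb{Z}^d}|M_{jk}|=(2\pi)^{d/2}\sum_{m\in\mathbb{Z}^d}\Big|\widehat{a}\Big(m,\tfrac{\hbar(2j+m)}{2}\Big)\Big|\;\le\;(2\pi)^{d/2}\sum_{m\in\mathbb{Z}^d}\sup_{\xi\in\R^d}|\widehat{a}(m,\xi)| ,
\]
and, by the same substitution, the column sums $\sum_{j}|M_{jk}|$ obey the same bound; crucially the right-hand side is independent of $\hbar\in(0,1]$, $j$ and $k$.

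Next I would estimate that right-hand side by the norm of $a$. By the elementary inequality $\Vert f\Vert_{L^\infty(\R^d)}\le\Vert f\Vert_{\mathcal{A}_s(\R^d)}$ (valid for $f\in\mathcal{A}_s$ since $e^{|\eta|s}\ge 1$) applied to $f=\widehat{a}(m,\cdot)$,
\[
\sum_{m\in\mathbb{Z}^d}\sup_{\xi\in\R^d}|\widehat{a}(m,\xi)|\;\le\;\sum_{m\in\mathbb{Z}^d}\Vert\widehat{a}(m,\cdot)\Vert_{\mathcal{A}_s(\R^d)}\;\le\;\sum_{m\in\mathbb{Z}^d}\Vert\widehat{a}(m,\cdot)\Vert_{\mathcal{A}_s(\R^d)}\,e^{|m|\rho}=(2\pi)^{d/2}\Vert a\Vert_{s,\rho} .
\]
Hence both the supremum over $k$ of the row sums and the supremum over $j$ of the column sums of $M$ are at most $(2\pi)^d\Vert a\Vert_{s,\rho}$, so the Schur test yields $\Vert M\Vert_{\mathcal{L}(\ell^2)}\le(2\pi)^d\Vert a\Vert_{s,\rho}$; transporting this back through the Fourier isomorphism gives $\Vert\Op_\hbar(a)\Vert_{\mathcal{L}(L^2(\mathbb{T}^d))}\le(2\pi)^d\Vert a\Vert_{s,\rho}$, i.e.\ \eqref{e:calderon_vaillancourt} with $C_{d,\rho}=(2\pi)^d$.

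There is no genuine obstacle here: the content is simply the Wiener-algebra-type observation that the $x$-Fourier coefficients of $a$ form an absolutely summable family with values in $L^\infty_\xi$, which is exactly what $\Vert a\Vert_{s,\rho}<\infty$ provides — the only point requiring a little care is keeping track of the powers of $2\pi$ in the Fourier conventions. An equivalent route, closer in spirit to the amalgam-space arguments used later in the paper, is to observe that $\Op_\hbar\big(e^{i(k\cdot x+\eta\cdot\xi)}\big)$ is, up to a unimodular scalar and a fixed power of $2\pi$, the unitary operator $\psi\mapsto e^{-ik\cdot x}\psi(\,\cdot+\hbar\eta)$ on $L^2(\mathbb{T}^d)$, and then to sum the Fourier expansion of $a$ against these elementary operators; this reproduces the same estimate.
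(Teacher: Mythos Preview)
Your proof is correct and genuinely different from the paper's argument. The paper invokes the standard Calder\'on--Vaillancourt theorem as a black box, obtaining
\[
\Vert \Op_\hbar(a) \Vert_{\mathcal{L}(L^2)} \leq C_d \sum_{|\alpha|\le N_d} \Vert \partial_x^\alpha a \Vert_{L^\infty(T^*\mathbb{T}^d)},
\]
and then controls each $x$-derivative by the analytic norm via the elementary estimate $|k|^{|\alpha|}e^{-|k|\rho}\le(|\alpha|/e\rho)^{|\alpha|}$, which produces the $\rho$-dependent constant $C_{d,\rho}$. Your route is more self-contained: you bypass Calder\'on--Vaillancourt entirely by reading off the matrix entries in the Fourier basis and applying the Schur test, using only the Wiener-algebra embedding $\mathcal{A}_s(\R^d)\hookrightarrow L^\infty(\R^d)$ and the $\ell^1$-summability of the $x$-Fourier coefficients already built into $\Vert\cdot\Vert_{s,\rho}$. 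This has two modest advantages: it requires no external pseudodifferential result, and it yields the explicit constant $(2\pi)^d$, which happens to be independent of both $s$ and $\rho$ --- slightly sharper than what the paper's derivative-counting gives. The paper's argument, on the other hand, makes transparent that membership in $\mathcal{A}_{s,\rho}$ dominates the $C^N$-seminorms appearing in the usual symbol classes, which is conceptually useful elsewhere.
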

\begin{proof}
By the usual Calderón-Vaillancourt theorem, see for instance \cite[Prop 3.5]{Pau12}, the following estimate holds:
$$
\Vert \Op_\hbar(a) \Vert_{\mathcal{L}(L^2)} \leq C_d \sum_{\vert \alpha \vert \leq N_d} \Vert \partial_x^\alpha a \Vert_{L^\infty(T^* \mathbb{T}^d)}, \quad \hbar \in (0,1].
$$
Now, using the elementary estimate
\begin{equation}
\label{e:elementary_estimate}
\sup_{t \geq 0} t^m e^{-t\rho} = \left( \frac{m}{e \rho} \right)^m, \quad m > 0,
\end{equation}
we obtain
$$
\Vert \partial_x^\alpha a \Vert_{L^\infty(T^* \mathbb{T}^d)} \leq \frac{1}{(2\pi)^{d/2}} \sum_{k \in \mathbb{Z}^d} \vert k^\alpha \vert \Vert \widehat{a}(k, \cdot) \Vert_{L^\infty(\R^d)} \leq \left( \frac{\vert \alpha \vert}{e \rho} \right)^{\vert \alpha \vert} \Vert a \Vert_{s,\rho} = C_{\alpha, \rho} \Vert a \Vert_{s,\rho}.
$$
\end{proof}

Let $a,b :T^*\mathbb{T}^s \to \mathbb{C}$, the operator given by the composition $\Op_\hbar(a) \Op_\hbar(b)$ is another Weyl pseudodifferential operator with symbol $c$ given by the Moyal product $c = a \sharp_\hbar b$, see for instance \cite[Chp. 7]{Dim_Sjo99}. To write $c$ conveniently, we consider the product space $\mathcal{Z}^d := \mathbb{Z}^d \times \R^d$ and the measure $\kappa$ on $\mathcal{Z}^d$ defined by
$$
\kappa(w) = \mathcal{K}_{\mathbb{Z}^d}(k) \otimes \mathcal{L}_{\R^d}(\eta), \quad w =  (k,\eta) \in \mathcal{Z}^d,
$$
where $\mathcal{L}_{\R^d}$ denotes the Lebesgue measure on $\R^d$, and
$$
\mathcal{K}_{\mathbb{Z}^d}(k) := \sum_{j \in \mathbb{Z}^d} \delta(k-j), \quad k \in \mathbb{Z}^d.
$$
Using this measure, we can write any function $a \in \mathcal{A}_{s,\rho}(T^*\mathbb{T}^d)$ as
\begin{equation}
\label{e:two_fourier}
a(z) = \frac{1}{(2\pi)^d} \int_{\mathcal{Z}^d} \mathscr{F} a(w) e^{iz \cdot w} \kappa(dw),
\end{equation}
where $z = (x,\xi) \in T^*\mathbb{T}^d$, and $\mathscr{F}$ denotes the Fourier transform in $T^*\mathbb{T}^d$:
$$
\mathscr{F} a(w) = \frac{1}{(2\pi)^d} \int_{T^*\mathbb{T}^d}  a(z) e^{-i w\cdot z} dz.
$$
With these conventions, the Moyal product $c = a \sharp_\hbar b$ can be written by the following  integral formula:
\begin{align}
\label{e:moyal_product}
a \sharp_\hbar b(z) = \frac{1}{(2\pi)^{2d}}\int_{\mathcal{Z}^{d} \times \mathcal{Z}^d} \big( \mathscr{F} a \big) (w') \big( \mathscr{F} b \big)(w-w') e^{\frac{i \hbar}{2} \{ w', w-w'\}} e^{i z \cdot w} \kappa(dw') \kappa(dw),
\end{align}
where $\{\cdot , \cdot  \}$ stands for the standard symplectic product in $\mathcal{Z}^d \times \mathcal{Z}^d$:
$$
\{w,w' \} = k \cdot \eta' - k' \cdot \eta, \quad w = (k,\eta), \quad w' = (k',\eta').
$$
Alternatively, we can deduce from \eqref{e:moyal_product} the following formula:
\begin{equation}
\label{e:moyal_product_2}
a \sharp_\hbar b(x,\xi) = \frac{1}{(2\pi)^d} \sum_{k,k'\in \mathbb{Z}^d} \widehat{a} \left(k', \xi + \frac{\hbar(k-k')}{2} \right) \widehat{b} \left( k-k', \xi - \frac{ \hbar k'}{2} \right) e^{i k \cdot x},
\end{equation}
which also holds for symbols $a,b \in \mathcal{A}_{W,\rho}(T^*\mathbb{T}^d)$.

\begin{lemma}
\label{l:moyal_sjostrand_bound}
Let $\rho > 0$ and $a,b \in \mathcal{A}_{W,\rho}(T^*\mathbb{T}^d)$. Then $a\sharp_\hbar b \in \mathcal{A}_{W,\rho}(T^*\mathbb{T}^d)$ and
$$
\Vert a \sharp_\hbar b \Vert_{W,\rho} \leq \Vert a \Vert_{W,\rho} \Vert b \Vert_{W,\rho},
$$
provided that the constant $C_d$ in \eqref{e:large_constant} is sufficiently large.
\end{lemma}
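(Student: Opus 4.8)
The plan is to read off the $x$-Fourier coefficients of the Moyal product from the explicit series \eqref{e:moyal_product_2} and then reduce the estimate to two structural facts about Sjöstrand's class $S_W(\R^d)$: its invariance under translations of the argument, and the fact that it becomes a pointwise Banach algebra once the constant $C_d$ in \eqref{e:large_constant} is chosen large.

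First I would use \eqref{e:moyal_product_2} to write, for every $k \in \mathbb{Z}^d$,
\[
\widehat{a \sharp_\hbar b}(k,\xi) = \frac{1}{(2\pi)^{d/2}} \sum_{k' \in \mathbb{Z}^d} \widehat{a}\Bigl( k' , \xi + \tfrac{\hbar(k-k')}{2} \Bigr)\, \widehat{b}\Bigl( k - k' , \xi - \tfrac{\hbar k'}{2} \Bigr),
\]
so that the triangle inequality in $S_W(\R^d)$ gives
\[
\bigl\Vert \widehat{a \sharp_\hbar b}(k,\cdot) \bigr\Vert_W \le \frac{1}{(2\pi)^{d/2}} \sum_{k' \in \mathbb{Z}^d} \Bigl\Vert \widehat{a}\bigl( k' , \cdot + \tfrac{\hbar(k-k')}{2} \bigr)\, \widehat{b}\bigl( k - k' , \cdot - \tfrac{\hbar k'}{2} \bigr) \Bigr\Vert_W .
\]
The heart of the argument is then the per-mode bound: for all $f, g \in S_W(\R^d)$ and all $c, c' \in \R^d$,
\[
\bigl\Vert f(\cdot + c)\, g(\cdot + c') \bigr\Vert_W \le \Vert f \Vert_W \, \Vert g \Vert_W ,
\]
valid provided $C_d$ is large enough (depending only on $d$). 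I would deduce this from (i) the uniform translation invariance $\Vert f(\cdot + c)\Vert_W \le C \Vert f \Vert_W$ with $C$ depending only on $d$ — a translate $\Gamma + c$ of the lattice, with the same $\chi_0$, is an admissible choice in the definition of $S_W$, uniformly in $c$, so this follows from \cite[Lemma 1.1]{Sjo95} and the ensuing remark; and (ii) the pointwise-product estimate $\sup_{j \in \Gamma} \vert \widehat{\chi_j(fg)} \vert \le C \bigl( \sup_{j \in \Gamma} \vert \widehat{\chi_j f} \vert \bigr) * \bigl( \sup_{j \in \Gamma} \vert \widehat{\chi_j g} \vert \bigr)$, obtained by inserting $g = \sum_{j' \in \Gamma} \chi_{j'} g$, noting that $\widehat{\chi_j (fg)}$ is a sum of convolutions $\widehat{\chi_j f} * \widehat{\chi_{j'} g}$ which is finite (or rapidly convergent) because the overlap of the supports of $\chi_j$ and $\chi_{j'}$ forces $\vert j - j' \vert$ to be bounded, followed by Young's inequality. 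The dimensional constants produced by (i) and (ii) are swallowed by enlarging $C_d$ in \eqref{e:large_constant}. Applying the per-mode bound with $c = \tfrac{\hbar(k-k')}{2}$ and $c' = -\tfrac{\hbar k'}{2}$ yields, uniformly in $\hbar > 0$,
\[
\bigl\Vert \widehat{a \sharp_\hbar b}(k,\cdot) \bigr\Vert_W \le \frac{1}{(2\pi)^{d/2}} \sum_{k' \in \mathbb{Z}^d} \Vert \widehat{a}(k',\cdot) \Vert_W \, \Vert \widehat{b}(k - k',\cdot) \Vert_W .
\]

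Finally I would multiply by $e^{\vert k \vert \rho}$, use $e^{\vert k \vert \rho} \le e^{\vert k' \vert \rho} e^{\vert k - k' \vert \rho}$, sum over $k \in \mathbb{Z}^d$, and relabel $m = k - k'$, which factors the double series and gives exactly
\[
\Vert a \sharp_\hbar b \Vert_{W,\rho} = \frac{1}{(2\pi)^{d/2}} \sum_{k \in \mathbb{Z}^d} \bigl\Vert \widehat{a \sharp_\hbar b}(k,\cdot) \bigr\Vert_W \, e^{\vert k \vert \rho} \le \Vert a \Vert_{W,\rho} \, \Vert b \Vert_{W,\rho} .
\]
Since the right-hand side is finite, this also shows $a \sharp_\hbar b \in \mathcal{A}_{W,\rho}(T^*\mathbb{T}^d)$, with the bound uniform in $\hbar \in (0,1]$. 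The main obstacle — and the only place the size of $C_d$ is used — is the pointwise-algebra property (ii) of $S_W(\R^d)$; the $\hbar$-dependent shifts $\tfrac{\hbar(k-k')}{2}$ and $-\tfrac{\hbar k'}{2}$ cause no trouble precisely because of the uniform translation invariance (i).
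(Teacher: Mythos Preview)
Your argument is correct and in fact supplies the details the paper omits: the paper gives no proof of this lemma, merely stating that it ``can be obtained following the same ideas of \cite[Thm.~2.1]{Sjo94}''. Your route---reading off the $x$-Fourier modes from \eqref{e:moyal_product_2} and reducing to the translation invariance and pointwise-algebra properties of $S_W(\R^d)$, then summing with the exponential weights $e^{|k|\rho}$---is a clean adaptation of Sj\"ostrand's argument to this hybrid (analytic in $x$, Sj\"ostrand in $\xi$) setting, and the constants produced are indeed absorbed by enlarging $C_d$ as the statement allows.
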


The proof of Lemma \ref{l:moyal_sjostrand_bound} can be obtained following the same ideas of \cite[Thm. 2.1]{Sjo94}. We omit the details for the sake of shortness. Moreover, one can show $L^2$ boundness for pseudodifferential operators with symbols in $\mathcal{A}_{W,\rho}(T^*\mathbb{T}^d)$. Precisely, the following  Calderón-Vailancourt-type result holds (see \cite[Sect. 3]{Sjo94}): for every $a \in \mathcal{A}_{W,\rho}(T^*\mathbb{T}^d)$:
\begin{equation}
\label{e:calderon_vaillancourt_2}
\Vert \Op_\hbar(a) \Vert_{\mathcal{L}(L^2)} \leq C_{d,\rho} \Vert a \Vert_{W,\rho}, \quad \hbar \in (0,1].
\end{equation}

Coming back to the symbolic calculus for Weyl pseudodifferential operators, we will employ the notation 
\begin{equation}
\label{e:commutator_symbol}
[a,b]_\hbar := a \sharp_\hbar b - b \sharp_\hbar a,
\end{equation}
for the Moyal commutator. Hence $[\Op_\hbar(a), \Op_\hbar(b)] = \Op_\hbar([a,b]_\hbar)$. Moreover, given two symbols $a, F \in \mathcal{A}_{W,\rho}(T^*\mathbb{T}^d)$, we have the following formula for the conjugation of $\Op_\hbar(a)$ by $e^{i\frac{t}{\hbar} \Op_\hbar(F)}$:
$$
e^{i\frac{t}{\hbar} \Op_\hbar(F)} \Op_\hbar(a) e^{-i\frac{t}{\hbar} \Op_\hbar(F)} = \Op_\hbar \big( \Psi_{t,\hbar}^F(a) \big), \quad t \in [0,1],
$$
where the symbol $\Psi_{t,\hbar}^F(a)$  is given  by
\begin{equation}
\label{e:conjugation_symbol}
 \Psi_{t,\hbar}^F(a) := \sum_{j = 0}^\infty \frac{1}{j!} \left( \frac{it}{\hbar} \right) \Ad_{F}^{\sharp_\hbar,j}(a), \quad t \in [0,1],
\end{equation}
and, as usual in the terminology of Lie algebras,
$$
\Ad_{F}^{\sharp_\hbar,j}(a) = [F, \Ad_{F}^{\sharp_\hbar,j-1}(a)]_\hbar, \quad \Ad_{F}^{\sharp_\hbar,0}(a) = a.
$$

\begin{lemma}
\label{l:bad_flow_estimate}
Assume that $a, F \in \mathcal{A}_{s,\rho}(T^*\mathbb{T}^d)$. Let $\varepsilon_\hbar \leq \hbar$ such that
$$ 
 \beta =  2  \Vert F \Vert_{s,\rho} \leq 1/2,
$$ 
then 
$$
\Vert  \Psi_{t,\hbar}^{\varepsilon_\hbar F}(a) - a \Vert_{s,\rho} \leq \beta \Vert a \Vert_{s,\rho}, \quad  \vert t \vert \leq 1.
$$
\end{lemma}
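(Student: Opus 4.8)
The plan is to start from the definition \eqref{e:conjugation_symbol} of $\Psi_{t,\hbar}^{\varepsilon_\hbar F}(a)$ as a Lie series and estimate it term by term, the whole point being that the hypothesis $\varepsilon_\hbar\le\hbar$ cancels the negative powers of $\hbar$ appearing there. Since $\sharp_\hbar$ is bilinear and $\varepsilon_\hbar$ is a scalar, $\Ad_{\varepsilon_\hbar F}^{\sharp_\hbar,j}(a)=\varepsilon_\hbar^{\,j}\,\Ad_F^{\sharp_\hbar,j}(a)$, so that
\begin{equation*}
\Psi_{t,\hbar}^{\varepsilon_\hbar F}(a)-a=\sum_{j=1}^{\infty}\frac{1}{j!}\Big(\frac{it\varepsilon_\hbar}{\hbar}\Big)^{j}\Ad_F^{\sharp_\hbar,j}(a),
\end{equation*}
and, since $\varepsilon_\hbar/\hbar\le1$ and $|t|\le1$, each scalar coefficient has modulus at most $1/j!$.

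First I would record the loss-free multiplicativity of the Moyal product on $\mathcal{A}_{s,\rho}(T^*\mathbb{T}^d)$, i.e. estimate \eqref{trivial_bound}, and deduce $\|[F,g]_\hbar\|_{s,\rho}\le\|F\|_{s,\rho}\,\|g\|_{s,\rho}$ for every $g$; iterating this bound then gives $\|\Ad_F^{\sharp_\hbar,j}(a)\|_{s,\rho}\le\|F\|_{s,\rho}^{\,j}\,\|a\|_{s,\rho}$. The structural reason this holds with \emph{no} shrinking of the analyticity widths $s$ and $\rho$ is visible in formula \eqref{e:moyal_product_2}: the semiclassical corrections act on $\widehat{a}(k,\cdot)$ and $\widehat{b}(k,\cdot)$ only through translations of the $\xi$-variable by the \emph{real} vectors $\pm\tfrac{\hbar k'}{2}$, $\pm\tfrac{\hbar(k-k')}{2}$, and such translations are isometries of $\mathcal{A}_s(\R^d)$; together with the Banach-algebra property of $\mathcal{A}_s(\R^d)$ under pointwise multiplication and the triangle inequality $|k|\le|k'|+|k-k'|$ for the weight $e^{|k|\rho}$, this yields submultiplicativity of $\sharp_\hbar$ on $\mathcal{A}_{s,\rho}$. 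This is the crucial difference with the classical Poisson bracket, which costs derivatives: here one deliberately keeps the crude bound on $[F,g]_\hbar$ rather than extracting a factor $\hbar$ (which would re-introduce a loss of analyticity), and this is affordable only because the factors $\varepsilon_\hbar^{\,j}$ neutralize the $\hbar^{-j}$ coming from the Lie series.

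With these two facts in hand, the series converges absolutely in $\mathcal{A}_{s,\rho}$ and
\begin{equation*}
\big\|\Psi_{t,\hbar}^{\varepsilon_\hbar F}(a)-a\big\|_{s,\rho}\ \le\ \sum_{j=1}^{\infty}\frac{\|F\|_{s,\rho}^{\,j}}{j!}\,\|a\|_{s,\rho}\ =\ \big(e^{\|F\|_{s,\rho}}-1\big)\|a\|_{s,\rho},
\end{equation*}
and since $\|F\|_{s,\rho}=\beta/2\le1/4$ the elementary inequality $e^{x}-1\le 2x$ on $[0,\tfrac14]$ gives $\|\Psi_{t,\hbar}^{\varepsilon_\hbar F}(a)-a\|_{s,\rho}\le\beta\,\|a\|_{s,\rho}$ for all $|t|\le1$, as required. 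The one step that carries genuine content — and hence the main obstacle — is the loss-free Moyal estimate \eqref{trivial_bound}; it is the analytic-scale counterpart of Lemma \ref{l:moyal_sjostrand_bound} and the engine of the iteration in Theorem \ref{KAM_theorem_precise}, and once it is granted everything else here is bookkeeping with an absolutely convergent exponential series.
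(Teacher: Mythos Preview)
Your approach is the paper's own: expand the Lie series \eqref{e:conjugation_symbol}, use the loss-free Moyal estimate \eqref{trivial_bound} to bound each iterated commutator, and sum the resulting exponential series. One minor slip: \eqref{trivial_bound} reads $\Vert [F,g]_\hbar\Vert_{s,\rho}\le 2\,\Vert F\Vert_{s,\rho}\Vert g\Vert_{s,\rho}$ (the factor $2$ is forced, either from $[F,g]_\hbar=F\sharp_\hbar g-g\sharp_\hbar F$ together with the submultiplicativity you describe, or from the $2i\sin$ in the commutator formula), so the iteration gives $\Vert\Ad_F^{\sharp_\hbar,j}(a)\Vert_{s,\rho}\le(2\Vert F\Vert_{s,\rho})^j\Vert a\Vert_{s,\rho}=\beta^{\,j}\Vert a\Vert_{s,\rho}$ and the series sums to $(e^{\beta}-1)\Vert a\Vert_{s,\rho}$, exactly the chain the paper writes.
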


\begin{proof}
Using the expression
$$
[a,b]_\hbar(z) = 2i \int_{\mathcal{Z}^{2d}} \mathscr{F}a(w') \mathscr{F}b(w-w') \sin \left( \frac{\hbar}{2} \{ w', w-w' \} \right) \frac{e^{iw\cdot z}}{(2\pi)^{2d}} \kappa(dw') \, \kappa(dw),
$$
for the Moyal commutator, we obtain that
\begin{equation}
\label{trivial_bound}
\Vert [F,a]_\hbar \Vert_{s,\rho} \leq 2 \Vert F \Vert_{s,\rho} \Vert a \Vert_{s,\rho}.
\end{equation}
Applying this succesively, we conclude that
$$
\Vert \Psi_{t,\hbar}^{\varepsilon_\hbar F}(a) - a \Vert_{s,\rho} \leq \sum_{j=1}^\infty \frac{1}{j!} \left( \frac{t}{\hbar} \right)^j \Vert \Ad_{\varepsilon_\hbar F}^{\sharp_\hbar,j} (a) \Vert_{s,\rho} \leq \sum_{j=1}^\infty \frac{2^j \Vert F \Vert_s^j  \Vert a \Vert_s}{j!}   \leq \beta \Vert a \Vert_{s,\rho}.
$$
\end{proof}

\begin{corol}
\label{c:same_for_sjostrand}
The same holds replacing the space $\mathcal{A}_{s,\rho}(T^*\mathbb{T}^d)$ by $\mathcal{A}_{W,\rho}(T^*\mathbb{T}^d)$.
\end{corol}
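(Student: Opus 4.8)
The plan is to reproduce the proof of Lemma~\ref{l:bad_flow_estimate} almost verbatim, replacing throughout the analytic norm $\Vert\cdot\Vert_{s,\rho}$ by the Sjöstrand-type norm $\Vert\cdot\Vert_{W,\rho}$. Inspecting that argument one sees that it rests on a single structural ingredient: submultiplicativity of the Moyal product under the norm at hand, from which a bound on the Moyal commutator and then the convergence of the series defining $\Psi_{t,\hbar}^{\varepsilon_\hbar F}(a)$ follow mechanically. For $\mathcal{A}_{W,\rho}(T^*\mathbb{T}^d)$ this is exactly the content of Lemma~\ref{l:moyal_sjostrand_bound}, $\Vert a\sharp_\hbar b\Vert_{W,\rho}\leq\Vert a\Vert_{W,\rho}\Vert b\Vert_{W,\rho}$, valid once the constant $C_d$ in \eqref{e:large_constant} is chosen large enough; moreover formula \eqref{e:moyal_product_2} for the Moyal product, and hence the commutator formula, is noted to remain valid for symbols in $\mathcal{A}_{W,\rho}(T^*\mathbb{T}^d)$, so the symbol-level manipulations carry over unchanged.

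Concretely, I would first deduce from Lemma~\ref{l:moyal_sjostrand_bound} and the triangle inequality the $W$-analogue of \eqref{trivial_bound}:
\begin{equation}
\label{trivial_bound2}
\Vert [F,a]_\hbar \Vert_{W,\rho} = \Vert F\sharp_\hbar a - a\sharp_\hbar F \Vert_{W,\rho} \leq 2\,\Vert F\Vert_{W,\rho}\,\Vert a\Vert_{W,\rho}.
\end{equation}
Iterating \eqref{trivial_bound2} gives $\Vert \Ad_{\varepsilon_\hbar F}^{\sharp_\hbar,j}(a)\Vert_{W,\rho}\leq (2\varepsilon_\hbar\Vert F\Vert_{W,\rho})^{j}\Vert a\Vert_{W,\rho}$; substituting this into the expansion \eqref{e:conjugation_symbol} of $\Psi_{t,\hbar}^{\varepsilon_\hbar F}(a)$ and using $\varepsilon_\hbar\leq\hbar$ together with $\vert t\vert\leq1$ yields
\begin{equation*}
\Vert \Psi_{t,\hbar}^{\varepsilon_\hbar F}(a) - a \Vert_{W,\rho} \leq \sum_{j\geq1} \frac{1}{j!}\left( \frac{2\varepsilon_\hbar\Vert F\Vert_{W,\rho}}{\hbar}\right)^{j}\Vert a\Vert_{W,\rho} \leq \beta\,\Vert a\Vert_{W,\rho},
\end{equation*}
with $\beta = 2\Vert F\Vert_{W,\rho}\leq 1/2$, the summation being performed exactly as in the proof of Lemma~\ref{l:bad_flow_estimate}. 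This is all that the statement asserts.

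I do not anticipate any genuine obstacle: every bit of analysis has been isolated into Lemma~\ref{l:moyal_sjostrand_bound}, whose proof mirrors \cite[Thm. 2.1]{Sjo94}, and into the $L^2$-bound \eqref{e:calderon_vaillancourt_2}. The only point requiring a moment's attention is that, wherever this corollary is later combined with an $L^2$-bound on a Weyl quantization (for instance in the proof of Theorem~\ref{KAM_theorem_precise_1}), one must invoke the Sjöstrand Calderón--Vaillancourt estimate \eqref{e:calderon_vaillancourt_2} in place of Lemma~\ref{l:analytic_calderon_vaillancourt_torus}; this is harmless, since \eqref{e:calderon_vaillancourt_2} provides precisely $\Vert\Op_\hbar(a)\Vert_{\mathcal{L}(L^2)}\leq C_{d,\rho}\Vert a\Vert_{W,\rho}$ uniformly in $\hbar\in(0,1]$.
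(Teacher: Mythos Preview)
Your proposal is correct and follows essentially the same approach as the paper: the paper's proof consists of the single observation that Lemma~\ref{l:moyal_sjostrand_bound} yields the commutator bound \eqref{trivial_bound2}, after which the argument of Lemma~\ref{l:bad_flow_estimate} carries over verbatim. Your write-up simply makes this explicit.
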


\begin{proof}
Note that, by Lemma \ref{l:moyal_sjostrand_bound},
\begin{equation}
\label{trivial_bound2}
\Vert [F,a]_\hbar \Vert_{W,\rho} \leq 2 \Vert F \Vert_{W,\rho} \Vert a \Vert_{W,\rho}.
\end{equation}

\end{proof}

Finally, we will improve the above estimates for  our analytic spaces, allowing some loss of analyticity. Let $0 < u \leq \min \{s,\rho \}$, we denote
\begin{equation}
\label{e:compact_norm}
\Vert a \Vert_{u} := \frac{1}{(2\pi)^d} \int_{\mathcal{Z}^d} \vert \mathscr{F}a(w) \vert e^{\vert w \vert u} \kappa(dw) = \Vert a \Vert_{u,u} \leq \Vert a \Vert_{s,\rho}.
\end{equation}
\begin{lemma}
\label{l:commutator_loss}
Let $a,b \in \mathcal{A}_{s,\rho}(T^*\mathbb{T}^d)$. Then, for $0 < \sigma_1 + \sigma_2 < u := \min \{ s, \rho \}$:
\begin{equation}
\label{e:first_commutator}
\big \Vert [a,b]_\hbar \big \Vert_{u-\sigma_1-\sigma_2} \leq \frac{2\hbar}{e^2 \sigma_1(\sigma_1 + \sigma_2)}  \Vert a \Vert_{u} \Vert b \Vert_{u-\sigma_2}. 
\end{equation}
Moreover, if  $c \in \mathcal{A}_\rho(\mathbb{T}^d)$, then:
\begin{equation}
\label{e:second_commutator}
\big \Vert [a,c]_\hbar \big \Vert_{u-\sigma_1-\sigma_2} \leq \frac{\hbar}{e^2 \sigma_1(\sigma_1 + \sigma_2)}  \Vert a \Vert_{u} \Vert c \Vert_{\mathcal{A}_{\rho-\sigma_2}(\mathbb{T}^d)}.
\end{equation}
\end{lemma}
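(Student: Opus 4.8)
The plan is to estimate the commutator directly on the Fourier side, following the scheme of the proof of Lemma~\ref{l:bad_flow_estimate}, but now extracting one power of $\hbar$ from the factor $\sin\big(\tfrac{\hbar}{2}\{w',w-w'\}\big)$ in the Moyal product. The price to pay is a polynomial weight $|w'|\,|w-w'|$ inside the Fourier integral, and the role of the losses $\sigma_1,\sigma_2$ is precisely to absorb this weight through the elementary bound \eqref{e:elementary_estimate}.

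First I would read off, from the integral formula for $[a,b]_\hbar$ used in the proof of Lemma~\ref{l:bad_flow_estimate}, the identity
\[
\mathscr{F}\big([a,b]_\hbar\big)(w) = \frac{2i}{(2\pi)^{d}} \int_{\mathcal{Z}^d} \mathscr{F}a(w')\,\mathscr{F}b(w-w')\,\sin\!\Big(\tfrac{\hbar}{2}\{w',w-w'\}\Big)\,\kappa(dw').
\]
Then, using $|\sin t|\le|t|$ together with the bilinear bound $|\{w',w-w'\}|\le 2\,|w'|\,|w-w'|$ (Cauchy--Schwarz on $\mathcal{Z}^d\simeq\R^{2d}$), multiplying by $e^{|w|(u-\sigma_1-\sigma_2)}$, integrating in $w$, applying $|w|\le|w'|+|w-w'|$ and Fubini, and substituting $w''=w-w'$, one is left with a product of two single Fourier integrals, of the form $\int |w'|\,e^{|w'|(u-\sigma_1-\sigma_2)}|\mathscr{F}a(w')|\,\kappa(dw')$ and the analogue for $b$, against the overall constant $2\hbar/(2\pi)^{2d}$. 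The key computation is then the asymmetric absorption of the polynomial weights via \eqref{e:elementary_estimate} in the form $t\,e^{-t\sigma}\le (e\sigma)^{-1}$, namely
\[
|w'|\,e^{|w'|(u-\sigma_1-\sigma_2)}\le \tfrac{1}{e(\sigma_1+\sigma_2)}\,e^{|w'|u}, \qquad |w''|\,e^{|w''|(u-\sigma_1-\sigma_2)}\le \tfrac{1}{e\sigma_1}\,e^{|w''|(u-\sigma_2)},
\]
which turns the two integrals into $(2\pi)^d\|a\|_u$ and $(2\pi)^d\|b\|_{u-\sigma_2}$ and yields \eqref{e:first_commutator}.

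For \eqref{e:second_commutator} the same computation applies with $b$ replaced by $c\in\mathcal{A}_\rho(\mathbb{T}^d)$, with one simplification: since $c$ depends only on $x$, the Fourier transform of its symbol is carried by $\{\eta=0\}$, so along the support of the integrand $\{w',w-w'\}$ reduces to a single bilinear term and one gets $|\{w',w-w'\}|\le |w'|\,|w-w'|$ \emph{without} the factor $2$; moreover the $w''$-integral collapses to $\sum_{k}|k|\,e^{|k|(\rho-\sigma_1-\sigma_2)}|\widehat c(k)|$, controlled by $\tfrac{1}{e\sigma_1}\|c\|_{\mathcal{A}_{\rho-\sigma_2}(\mathbb{T}^d)}$ exactly as above. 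This gives the stated constant $\tfrac{\hbar}{e^2\sigma_1(\sigma_1+\sigma_2)}$.

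The only genuinely delicate point is the bookkeeping of the two distinct losses: one must route the full loss $\sigma_1+\sigma_2$ into the $a$-factor (which is measured at the larger radius $u$) and only $\sigma_1$ into the $b$-factor (measured at $u-\sigma_2$), so that after the absorption the right-hand side is exactly $\|a\|_u\,\|b\|_{u-\sigma_2}$ rather than a symmetric but weaker product. Beyond this, the argument is a routine application of $|\sin t|\le|t|$, Fubini, and \eqref{e:elementary_estimate}, and I do not expect any further obstacle.
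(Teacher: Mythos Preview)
Your proposal is correct and follows essentially the same route as the paper's proof: both use the Fourier-side expression of the Moyal commutator, the bound $|\sin t|\le|t|$ together with $|\{w',w-w'\}|\le 2|w'||w-w'|$, and then the asymmetric absorption of the polynomial weights via \eqref{e:elementary_estimate}, routing the full loss $\sigma_1+\sigma_2$ into the $a$-factor and only $\sigma_1$ into the $b$-factor. For \eqref{e:second_commutator} the paper likewise exploits that the symplectic product collapses to the single term $(k-k')\cdot\eta'$, giving $|(k-k')\cdot\eta'|\le|w'||k-k'|$ without the factor $2$, exactly as you describe.
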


\begin{proof}
By \eqref{e:moyal_product}, we have
$$
[a,b]_\hbar(z) = 2i \int_{\mathcal{Z}^{2d}} \mathscr{F}a(w') \mathscr{F}b(w-w') \sin \left( \frac{\hbar}{2} \{ w', w-w' \} \right) \frac{e^{iw\cdot z}}{(2\pi)^{2d}} \kappa(dw') \, \kappa(dw).
$$
Then, using that
\begin{equation}
\label{e:bound_symplecti_product}
\vert \{w', w-w' \} \vert \leq 2 \vert w' \vert \vert w- w' \vert,
\end{equation}
we obtain:
\begin{align*}
\big \Vert [a,b]_\hbar \big \Vert_{u-\sigma_1-\sigma_2} & \\[0.2cm]
 & \hspace*{-2cm} \leq 2\hbar \int_{\mathcal{Z}^{2d}} \vert \mathscr{F}a(w') \vert \vert w' \vert \vert \mathscr{F}b(w-w') \vert \vert w - w' \vert e^{(u-\sigma_1 - \sigma_2)(\vert w-w' \vert + \vert w' \vert)} \kappa(dw') \kappa(dw) \\[0.2cm] 
 & \hspace*{-2cm} \leq 2\hbar \Big( \sup_{r\geq 0} r e^{-\sigma_1r} \Big) \Big( \sup_{r \geq 0} r e^{-(\sigma_1 + \sigma_2)r} \Big) \Vert a \Vert_{u} \Vert b \Vert_{u - \sigma_2} \\[0.2cm]
 & \hspace*{-2cm} \leq \frac{2\hbar}{e^2 \sigma_1 (\sigma_1 + \sigma_2)} \Vert a \Vert_{u} \Vert b \Vert_{u-\sigma_2}.
\end{align*}
To prove \eqref{e:second_commutator}, observe that, in view of \eqref{e:moyal_product} and \eqref{e:moyal_product_2},
$$
[a,c]_\hbar(z) = 2i \sum_{k \in \mathbb{Z}^d} \int_{\mathcal{Z}^{d}} \mathscr{F}a(w') \widehat{c}(k-k') \sin \left( \frac{\hbar}{2} (k-k')\cdot \eta' \right) e^{i(k\cdot x + \eta' \cdot \xi) } \frac{\kappa(dw') }{(2\pi)^{3d/2}} .
$$
Then \eqref{e:second_commutator} follows by the  the same argument as before but with the estimate
$$
\vert (k-k') \cdot \eta' \vert \leq \vert w' \vert \vert k - k' \vert,
$$
instead of \eqref{e:bound_symplecti_product}.
\end{proof}

\begin{lemma}
\label{flow_estimate}
Assume $a, F \in \mathcal{A}_{s,\rho}(T^*\mathbb{T}^d)$ and $b \in \mathcal{A}_\rho(\mathbb{T}^d)$. Let $0 < \sigma < u := \min \{s,\rho \}$. Assume that
$$ 
 \beta =  \frac{2 \Vert F \Vert_{u}}{\sigma^2} \leq 1/2.
$$ 
Then 
\begin{equation}
\label{e:double_bound}
\Vert \Psi_{t,\hbar}^{F}(a) - a \Vert_{u-\sigma} \leq \beta \Vert a \Vert_{u}, \quad \Vert \Psi_{t,\hbar}^{F}(b) - b  \Vert_{u-\sigma} \leq \beta \Vert b \Vert_{\mathcal{A}_\rho(\mathbb{T}^d)}, \quad  \vert t \vert \leq 1.
\end{equation}
\end{lemma}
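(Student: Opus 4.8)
The plan is to expand $\Psi^F_{t,\hbar}$ via its iterated-adjoint series \eqref{e:conjugation_symbol} and to control the $j$-th summand by applying the commutator-with-loss estimates of Lemma \ref{l:commutator_loss} exactly $j$ times, distributing the total admissible loss $\sigma$ equally among the $j$ brackets. Concretely, for fixed $j\ge1$ I would set $u_i:=u-i\sigma/j$ (so $u_0=u$ and $u_j=u-\sigma$) and write $g_i:=\Ad_F^{\sharp_\hbar,i}(a)$; the crude Moyal bound \eqref{trivial_bound} guarantees $g_i\in\mathcal{A}_{s,\rho}(T^*\mathbb{T}^d)$ for every $i$, so Lemma \ref{l:commutator_loss} applies at each step. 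Passing from $g_{i-1}$ (at radius $u_{i-1}$) to $g_i=[F,g_{i-1}]_\hbar$ (at radius $u_i$) through \eqref{e:first_commutator} with $\sigma_1=\sigma/j$ and $\sigma_2=(i-1)\sigma/j$ picks up the factor $\tfrac{2\hbar j^2}{e^2 i\sigma^2}\,\|F\|_u$, and multiplying over $i=1,\dots,j$ (using $\prod_{i=1}^j i^{-1}=1/j!$) should give
\[
\big\|\Ad_F^{\sharp_\hbar,j}(a)\big\|_{u-\sigma}\ \le\ \frac{1}{j!}\left(\frac{2\hbar j^2\,\|F\|_u}{e^2\sigma^2}\right)^{\!j}\|a\|_u .
\]

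Inserting this into \eqref{e:conjugation_symbol} and cancelling $(|t|/\hbar)^j\le\hbar^{-j}$ (using $|t|\le1$) against the $\hbar^j$ just produced, I expect to arrive at
\[
\big\|\Psi^F_{t,\hbar}(a)-a\big\|_{u-\sigma}\ \le\ \|a\|_u\sum_{j=1}^\infty\frac{1}{(j!)^2}\left(\frac{2j^2\|F\|_u}{e^2\sigma^2}\right)^{\!j}=\|a\|_u\sum_{j=1}^\infty\frac{j^{2j}}{(j!)^2e^{2j}}\,\beta^j,\qquad\beta=\frac{2\|F\|_u}{\sigma^2}.
\]
What remains is a numerical check: the lower Stirling bound $j!\ge\sqrt{2\pi j}\,(j/e)^j$ gives $j^{2j}/\big((j!)^2e^{2j}\big)\le(2\pi j)^{-1}$, so the series is dominated by $\tfrac{1}{2\pi}\sum_{j\ge1}\beta^j/j=\tfrac{1}{2\pi}\log\tfrac{1}{1-\beta}\le\tfrac{1}{2\pi}\cdot\tfrac{\beta}{1-\beta}\le\tfrac{\beta}{\pi}<\beta$ whenever $\beta\le\tfrac{1}{2}$; the same estimate also shows that \eqref{e:conjugation_symbol} converges absolutely in $\|\cdot\|_{u-\sigma}$. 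This is the first inequality in \eqref{e:double_bound}.

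For the second inequality the argument is identical except that the initial bracket $\Ad_F^{\sharp_\hbar,1}(b)=[F,b]_\hbar$ involves $b\in\mathcal{A}_\rho(\mathbb{T}^d)$, so at step $i=1$ I would invoke \eqref{e:second_commutator} (with $\sigma_1=\sigma/j$, $\sigma_2=0$) in place of \eqref{e:first_commutator}; that step contributes $\tfrac{\hbar j^2}{e^2\sigma^2}\,\|F\|_u$, exactly one half of the corresponding factor for $a$, while the later steps $i\ge2$ are unchanged because $[F,b]_\hbar$ already belongs to $\mathcal{A}_{s,\rho}(T^*\mathbb{T}^d)$. Hence the bound on $\|\Psi^F_{t,\hbar}(b)-b\|_{u-\sigma}$ is half of the one above, so it is again $\le\beta\,\|b\|_{\mathcal{A}_\rho(\mathbb{T}^d)}$.

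The only genuine difficulty is the bookkeeping: one must split $\sigma$ so that the product of the $j$ commutator factors telescopes to the clean $1/j!$, and then notice that it is precisely the $\sqrt{2\pi j}$ refinement of Stirling's inequality---and not the weaker $j!\ge(j/e)^j$ used for Lemma \ref{l:bad_flow_estimate}---that turns the otherwise innocuous $\sum_j\beta^j$ (which would only yield $2\beta$) into the sharp constant $<\beta$. Everything else is a routine combination of the expansion \eqref{e:conjugation_symbol} with Lemma \ref{l:commutator_loss}.
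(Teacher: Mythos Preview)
Your proof is correct and follows essentially the same route as the paper: iterate Lemma~\ref{l:commutator_loss} with the splitting $\sigma_1=\sigma/j$, $\sigma_2=(i-1)\sigma/j$ to obtain $\|\Ad_F^{\sharp_\hbar,j}(a)\|_{u-\sigma}\le (2\hbar/(e^2\sigma^2))^j (j^{2j}/j!)\,\|F\|_u^j\,\|a\|_u$, then sum. One minor correction to your closing remark: the full Stirling bound $j!\ge\sqrt{2\pi j}\,(j/e)^j$ is not actually needed; the paper uses the intermediate inequality $j!\ge j^j e^{1-j}$, which already gives $j^{2j}/\big((j!)^2e^{2j}\big)\le e^{-2}$, and since $e^{-2}\sum_{j\ge1}\beta^j\le 2\beta/e^2<\beta$ for $\beta\le\tfrac12$ this suffices (also, Lemma~\ref{l:bad_flow_estimate} uses no Stirling-type estimate at all).
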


\begin{proof}
By estimate \eqref{e:first_commutator}, for every $j \geq 1$,
\begin{align*}
\big \Vert \Ad^{\sharp_\hbar,j}_{F}(a) \big \Vert_{u-\sigma} & = \big \Vert [F, \Ad^{\sharp_\hbar,j-1}_{F}(a)]_\hbar \big \Vert_{u-\sigma} \\[0.2cm]
 & \leq  \frac{2\hbar j}{e^2 \sigma^2} \Vert F \Vert_{u} \big \Vert \Ad^{\sharp_\hbar,j-1}_{F}(a) \Vert_{u- \sigma (j-1)/j} \\[0.2cm]
 & \leq \frac{4 \hbar^2 j^3}{e^4 \sigma^4 (j-1)} \Vert F \Vert_{u}^2 \big \Vert \Ad^{\sharp_\hbar,j-2}_{F}(a) \big \Vert_{u - \sigma(j-2)/j} \\[0.2cm]
 & \leq \cdots \leq  \left( \frac{2 \hbar}{e^2 \sigma^2} \right)^j \frac{j^{2j}}{j!} \Vert F \Vert_{u}^j \Vert a \Vert_{u}.
\end{align*}
Using Stirling's formula: $j^j/e^{j-1} j! \leq 1$ for $j \geq 1$, we conclude that
$$
\Vert \Psi_{t,\hbar}^{F}( a) - a \Vert_{u-\sigma} \leq   \frac{\Vert a \Vert_{u}}{e^2}\sum_{j =1}^\infty\beta^j  \leq \beta \Vert a \Vert_{u}.
$$
The proof of the second inequality in \eqref{e:double_bound} follows by the same argument, but using \eqref{e:second_commutator} instead of \eqref{e:first_commutator} to obtain
$$
\Vert [F, b ] \Vert_{u- \sigma/j} \leq \frac{ \hbar j^2}{e^2 \sigma^2} \Vert F \Vert_{u} \Vert b \Vert_{\mathcal{A}_\rho(\mathbb{T}^d)}.
$$ 
\end{proof}

\bibliographystyle{plain}
\bibliography{Referencias}
\medskip

\begin{flushleft}
\small{\textit{Instituto de Ciencias Matemáticas (ICMAT-UAM-UC3M-UCM)}} \\
\small{\textit{C/ Nicolás Cabrera, nº 13-15 Campus de Cantoblanco, UAM,}} \\
\small{\textit{28049 Madrid, Spain.}} \\
\small{\textit{victor.arnaiz@icmat.es}}
\end{flushleft}

\end{document}